\paperURL\url{http://www.cs.illinois.edu/~jeffe/pubs/weak.html}
\newtheorem{theorem}{Theorem}[section]
\newtheorem{lemma}[theorem]{Lemma}
\newtheorem{corollary}[theorem]{Corollary}
\numberwithin{figure}{section}
\newcommand{\bdry}{\partial\!}
\begin{document}

\begin{titlepage}

\title{Detecting Weakly Simple Polygons\thanks{
Work on this paper was partially supported by NSF grant CCF-0915519.  See \paperURL\ for the most recent version of this paper.}
}
\author{
	Hsien-Chih Chang 
	\qquad\quad
	Jeff Erickson
	\qquad\quad
	Chao Xu
\\[1ex]
Department of Computer Science\\
University of Illinois, Urbana-Champaign
}

\date{Submitted to SODA 2015 — July 7, 2014}

\maketitle

\begin{bigabstract}
A closed curve in the plane is weakly simple if it is the limit (in the Fréchet metric) of a sequence of simple closed curves.  We describe an algorithm to determine whether a closed walk of length $n$ in a simple plane graph is weakly simple in $O(n\log n)$ time, improving an earlier $O(n^3)$-time algorithm of Cortese \etal\ [\emph{Discrete Math.}\ 2009].  As an immediate corollary, we obtain the first efficient algorithm to determine whether an arbitrary $n$-vertex polygon is weakly simple; our algorithm runs in $O(n^2\log n)$ time.  We also describe algorithms that detect weak simplicity in $O(n\log n)$ time for two interesting classes of polygons. Finally, we discuss subtle errors in several previously published definitions of weak simplicity.
\end{bigabstract}
%

\setcounter{page}{0}
\thispagestyle{empty}
\end{titlepage}

\pagestyle{myheadings}
\markboth{Hsien-Chih Chang, Jeff Erickson, and Chao Xu}{Detecting Weakly Simple Polygons}

\newpage
\section{Introduction}

Simple polygons in the plane have been standard objects of study in computational geometry for decades, and in the broader mathematical community for centuries.  
Many algorithms designed for simple polygons continue to work with little or no modification in degenerate cases, where intuitively the polygon overlaps itself but does not cross itself.  
We offer the first complete and efficient algorithm to detect such degenerate polygons.

\begin{wrapfigure}{r}{3.25in}
\vspace{-2ex}
\centering
\includegraphics[scale=0.1]{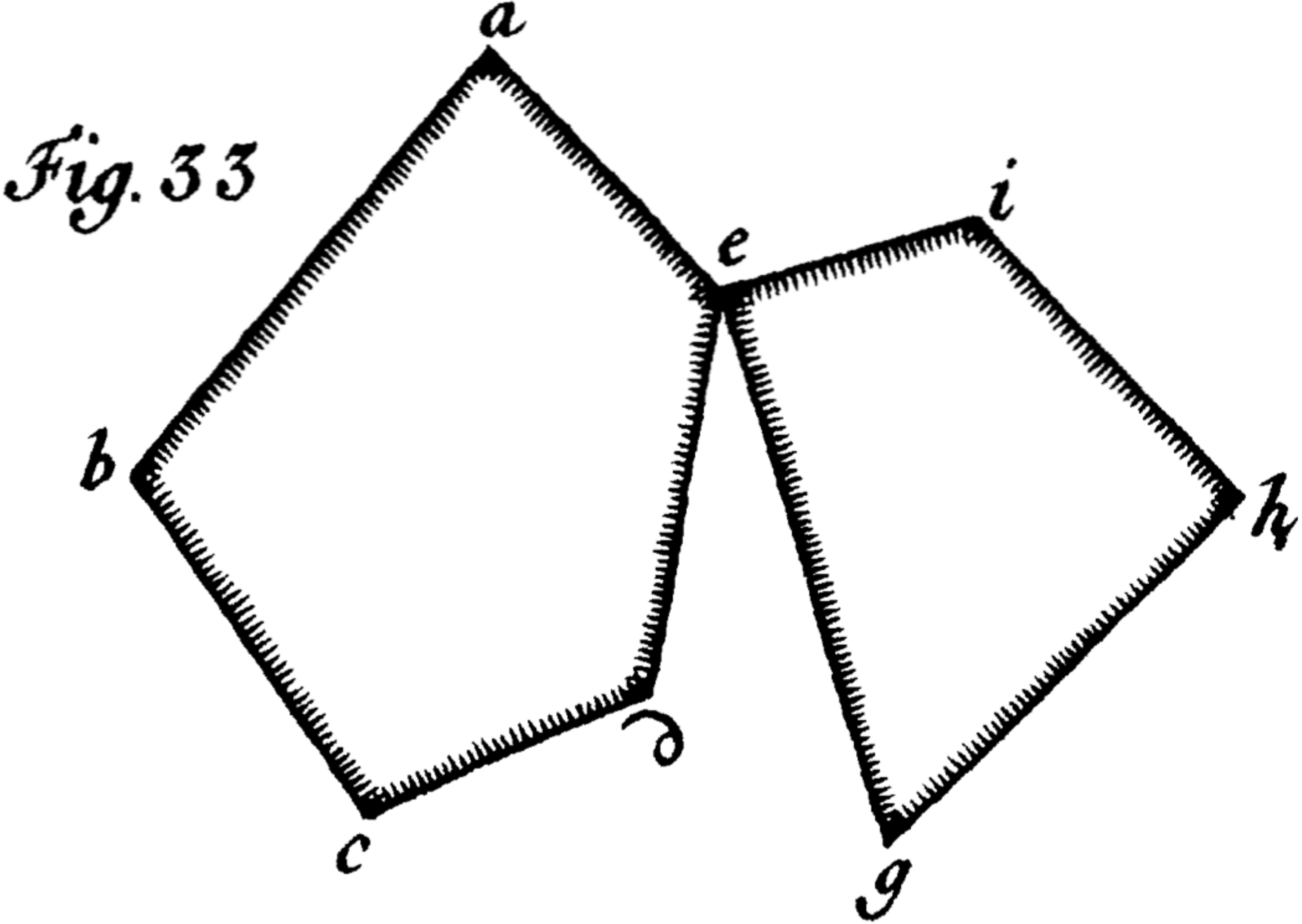}
\includegraphics[scale=0.1]{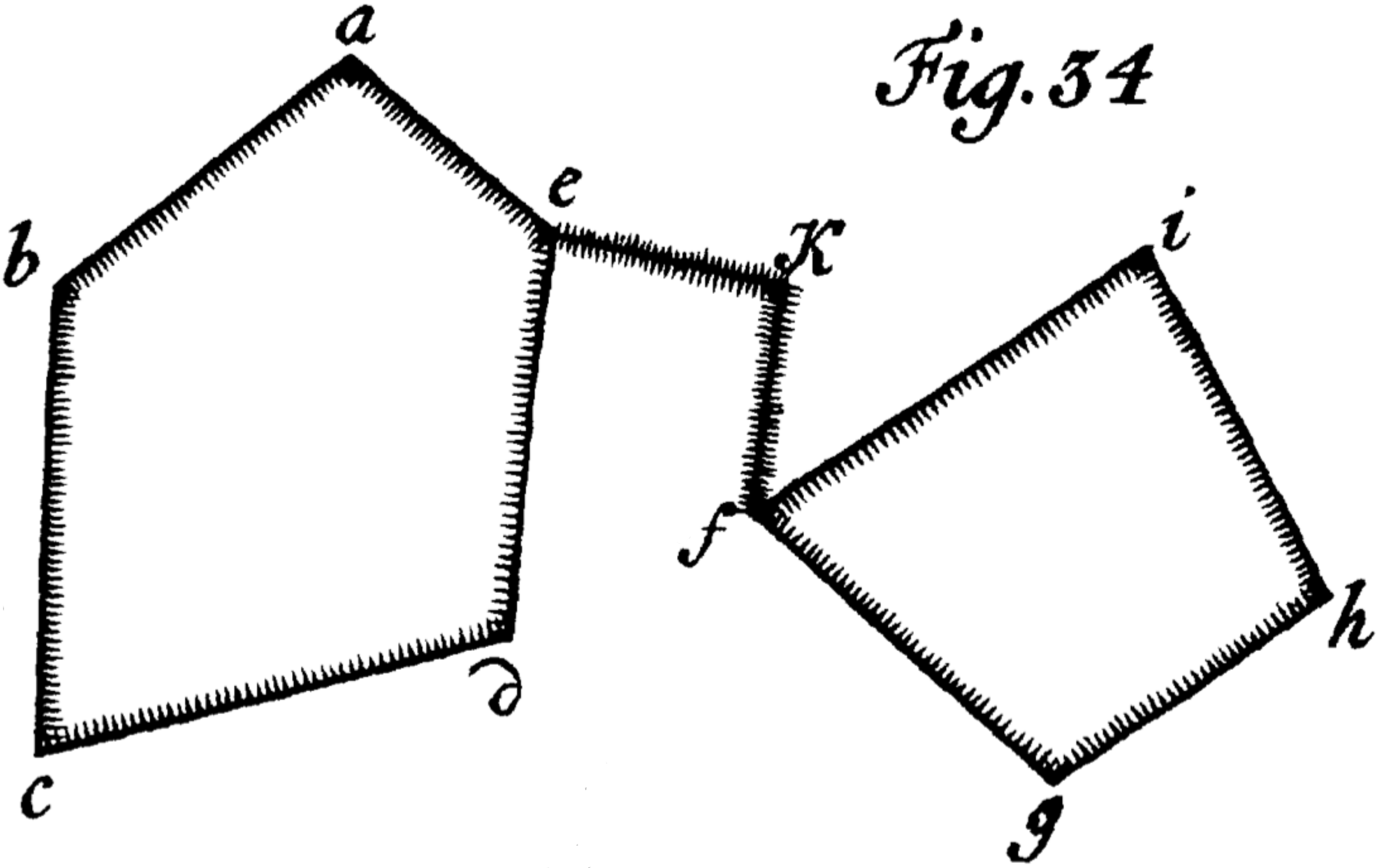}\\[1ex]
\includegraphics[scale=0.45]{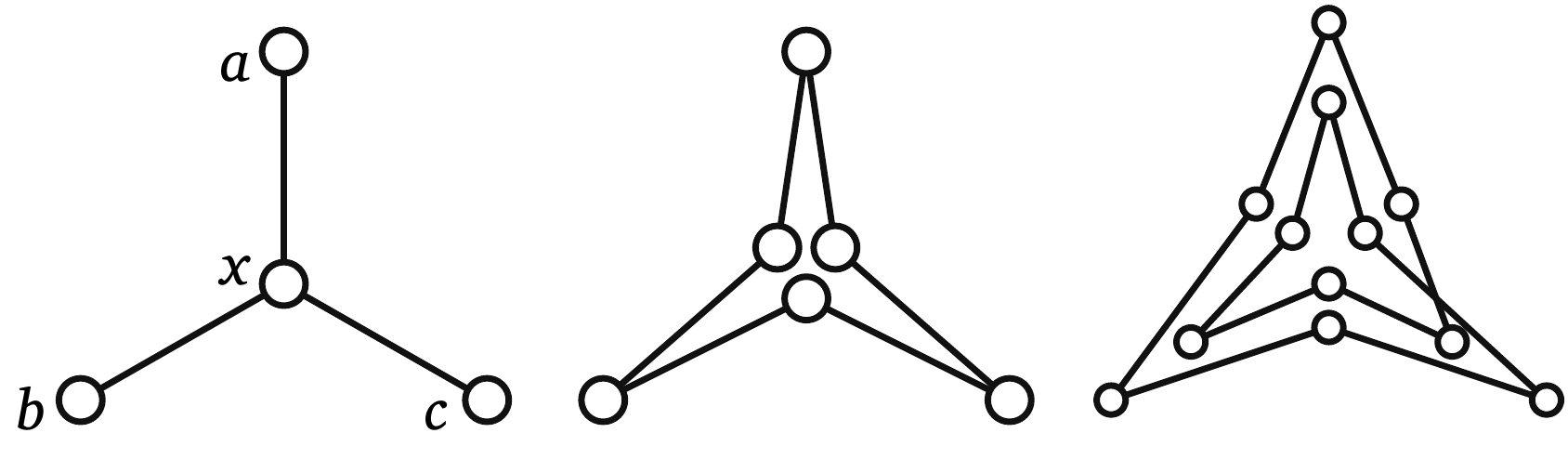}
\caption{Top row: Two weakly simple polygons from Meister’s seminal 1770 treatise on polygons \cite{m-ggfpi-1769}.	
Bottom row: The polygon $(a,x,b,x,c,x)$ is weakly simple; the polygon $(a,x,b,x,c,x,a,x,b,x,c,x)$ is not.
}
\label{F:first-example}
\vspace{-3ex}
\end{wrapfigure}

Formally, a closed curve in the plane is a continuous function $P\colon S^1\to\R^2$.  
A closed curve is \emph{simple} if it is injective, and \emph{weakly simple} if for any $\e>0$, there is a simple closed curve whose Fréchet distance from $P$ is at most~$\e$.  
A recent nontrivial result of Ribó Mor~\cite{r-rcpps-06} implies that a \emph{polygon}~$P$ with at least three vertices is weakly simple if and only if, for any $\e>0$, we can obtain a simple polygon by perturbing each vertex of~$P$ within a ball of radius~$\e$.  
See Figure~\ref{F:first-example} for some small examples.

Unfortunately, neither of these definitions imply efficient algorithms to determine whether a given polygon is weakly simple.  Several authors have offered alternative characterizations of weakly simple polygons, all building on the intuition that that a weakly simple polygon does not “cross itself”.  Unfortunately, \emph{none} of these characterizations is entirely consistent with the formal definition, especially when the algorithm contains \emph{spurs}---vertices whose two incident edges overlap.  We discuss these earlier definitions in detail in Section \ref{S:earlier} and Appendix \ref{S:earlier2}.

We describe an algorithm to determine whether a closed walk of length $n$ in a planar straight-line graph is weakly simple in $O(n\log n)$ time.  Our algorithm is essentially a more efficient implementation of an earlier $O(n^3)$-time algorithm of Cortese \etal~\cite{cbpp-ecpg-09} for the same problem.  Since any $n$-vertex polygon can be decomposed into a walk of length $O(n^2)$ in union of the polygon’s vertices and edges, we immediately obtain an algorithm to decide whether a polygon is weakly simple in $O(n^2\log n)$ time.  The quadratic blowup is caused by \emph{forks} in the polygon: vertices that lie in the interior of many overlapping collinear edges.  For polygons without forks, the running time is simply $O(n\log n)$.  We also describe a simpler $O(n\log n)$-time algorithm for polygons without spurs (but possibly with forks).

Our paper is organized as follows.  We review standard terminology, formally define weak simplicity, and discuss connections to other published definitions in Section \ref{S:back}.  In Section \ref{S:nodes}, as a warm-up to our later results, we describe a simple algorithm to determine whether a polygon without spurs or forks is weakly simple in $O(n\log n)$ time.  Section \ref{S:bars} describes our $O(n\log n)$-time algorithm for polygons without spurs, but possibly with forks.  We give a complete description of the algorithm of Cortese \etal\ for closed walks in planar graphs, reformulated using our terminology, in Section \ref{S:segments} and then describe and analyze our faster implementation in Section \ref{S:faster}.  Due to space constraints, many details and proofs are deferred to the appendix.  
In particular, we relate weak simplicity to related notions of compound planarity and self-touching linkage configurations in Appendix \ref{S:rigid}; we provide implementation details and proofs of correctness for an important \emph{expansion} operation in Appendix \ref{S:expansion}; and we describe several extensions and open problems in Appendix \ref{S:outro}.

\section{Background}
\label{S:back}

\subsection{Curves and Polygons}

A \EMPH{path} in the plane is a continuous function $P\colon [0,1] \to \R^2$.   
A \EMPH{closed curve} in the plane is a continuous function $P\colon S^1 \to \R^2$.  
A path or closed curve is \EMPH{simple} if it is injective.

A \EMPH{polygon} is a piecewise-linear closed curve; every polygon is defined by a cyclic sequence of points, called \EMPH{vertices}, connected by line segments, called \EMPH{edges}.  
We often describe polygons by listing their vertices in parentheses; for example, \EMPH{$(p_0, p_1, \dots, p_{n-1})$} denotes the polygon with vertices~$p_i$ and edges $p_i p_{i+1 \bmod n}$ for every index $i$.  
A polygon is simple if and only if its vertices are distinct and its edges intersect only at common endpoints.  We emphasize that the vertices of a non-simple polygon need not be distinct, and two edges of a non-simple polygon may overlap or even coincide \cite{m-ggfpi-1769,g-pmrpw-12}.  
However, we do assume without (significant) loss of generality that every edge of a polygon has positive length.

Similarly, a \EMPH{polygonal chain} is a piecewise-linear path, which consists of a linear sequence of points (vertices) connected by line segments (edges).  
We often describe polygonal paths by listing their vertices in square brackets, like \EMPH{$[p_0, p_1, \dots, p_{n-1}]$}, to distinguish them from polygons.  
A \EMPH{corner} of a polygon or polygonal chain $P$ is a subpath $[p_{i-1}, p_i, p_{i+1}]$ consisting of two consecutive edges of $P$.

Three local features of polygons play important roles in our results.
A \EMPH{spur} of a polygon $P$ is a vertex of $P$ whose two incident edges overlap.
A \EMPH{fork} of a polygon $P$ is a vertex of~$P$ that lies in the interior of an edge of $P$.
Finally, a \EMPH{simple crossing} between two paths is a point of transverse intersection; if the paths are polygonal chains, this intersection point could be a vertex of one or both paths.

\subsection{Distances}

\def\Frechet{d_{\mathcal{F}}}
\def\Vdist{d_{V}}
\def\Hdist{d_{\mathcal{H}}}

Let \EMPH{$d(p,q)$} denote the Euclidean distance between two points $p$ and $q$ in the plane.  The \EMPH{Fréchet distance $\Frechet(P,Q)$} between two closed curves $P$ and $Q$ is defined as
\[
\Frechet(P,Q) := \inf_{\rho\colon S^1\to S^1}\,  \max_{t\in S^1} \, d(P(\rho(t)) , Q(t)),
\]
where the infimum is taken over all orientation-preserving homeomorphisms of $S^1$.  
The function $\rho$ is often called a \EMPH{reparametrization} of $S^1$.  
Fréchet distance is a complete metric over the space of all (unparametrized) closed curves in the plane. 
The Fréchet distance between paths is defined similarly.

For any two polygons $P = (p_0, p_1, \dots, p_{n-1})$ and $Q = (q_0, q_1, \dots, q_{n-1})$ with the same number of vertices, the \EMPH{vertex distance} between $P$ and $Q$ is defined as
\[
	\Vdist(P,Q) := \min_s \max_i \, d(p_i , q_{i+s\bmod n}).
\]
For each integer $n$, vertex distance is a complete metric over the space of all $n$-vertex polygons (where two polygons that differ only by a cyclic shift of indices are identified).

\subsection{Planar Graphs}

A \EMPH{planar embedding} of a graph represents the vertices by distinct points in the plane, and the edges by interior-disjoint simple paths between their endpoints.  
A \EMPH{planar straight-line graph} is a planar graph embedding in which every edge is represented by a single line segment.  
We refer to the vertices of a planar straight-line graph as \EMPH{nodes} and the edges as \EMPH{segments}, to distinguish them from the vertices and edges of polygons.  
Any planar graph embedding partitions the plane into several regions, call the \EMPH{faces} of the embedding. 
Euler's formula states that any planar embedding of a connected graph with $V$ vertices and $E$ edges has exactly $2-V+E$ faces.

Any planar graph embedding can be represented abstractly by a \EMPH{rotation system}, which records for each node the cyclic sequence of incident segments \cite{mt-gs-01}.  Planar straight-line graphs can be represented by several data structures, each of which allows fast access to the abstract graph, the coordinates of its nodes, and the rotation system; one popular example is the doubly-connected edge list \cite{bcko-cgaa-08}.

\subsection{Weak Simplicity}
\label{S:weak-simple}
Intuitively, a closed curve or polygon is \emph{weakly simple} if it can be made simple by an arbitrarily small perturbation, or equivalently, if it is the limit of a sequence of simple closed curves or polygons.  
Our two different metrics for curve similarity give us two different formal definitions:
\begin{itemize}
\item
A closed curve $P$ is \EMPH{weakly simple} if, for any $\e > 0$, there is a simple closed curve $Q$ such that $\Frechet(P,Q) < \e$.  
In other words, a closed curve is weakly simple if it can be made simple by an arbitrarily small perturbation of the entire curve.
\item
A polygon~$P$ is \EMPH{rigidly weakly simple} if, for any $\e > 0$, there is a simple polygon $Q$ with the same number of vertices such that $\Vdist(P,Q) < \e$.  
In other words, a polygon is rigidly weakly simple if it can be made simple by an arbitrarily small perturbation of its vertices.
\end{itemize}

For any two polygons $P$ and $Q$ with the same number of vertices, we have $\Frechet(P,Q) ≤ \Vdist(P,Q)$; thus, every rigidly weakly simple polygon is also weakly simple.  The following theorem, whose proof we defer to Appendix \ref{S:rigid}, implies that the two definitions are \emph{almost} equivalent for polygons. 
\begin{theorem}
\label{Th:rigid}
Every weakly simple polygon with more than two vertices is rigidly weakly simple.
\end{theorem}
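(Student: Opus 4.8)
The plan is to prove the contrapositive in a strong form: starting from a weakly simple polygon $P$ with $n \ge 3$ vertices, produce, for every $\e > 0$, a simple polygon $Q$ with exactly $n$ vertices and $\Vdist(P,Q) < \e$. By hypothesis there is a simple closed curve $C$ with $\Frechet(P, C) < \e/2$; fix a reparametrization $\rho$ witnessing this. The difficulty is that $C$ need not be a polygon, and even if we replace it by a nearby simple polygon it need not have $n$ vertices, nor will the vertices naturally "line up" with the vertices of $P$ in the cyclic order required by $\Vdist$. So the real content is a discretization-and-alignment argument: convert the Fréchet-closeness of $P$ to $C$ into vertex-wise closeness of $P$ to a simple $n$-gon.

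First I would locate, for each vertex $p_i$ of $P$, a point $c_i := C(\rho^{-1}(t_i))$ on $C$ corresponding to the parameter value $t_i \in S^1$ of $p_i$; by the Fréchet bound, $d(p_i, c_i) < \e/2$, and the cyclic order of the $c_i$ along $C$ agrees with the cyclic order of the $p_i$ along $P$ (this is exactly where orientation-preservation of $\rho$ is used). The points $c_i$ are distinct if $P$ has no repeated vertices, but in general they need not be — repeated vertices of $P$, spurs, and forks all create coincidences or near-coincidences that we must break. The key step is therefore a perturbation lemma: one can move each $c_i$ a tiny amount — less than $\e/2$, and small enough not to disturb the cyclic order along a thin tubular neighborhood of $C$ — so that the resulting points $q_i$ are distinct and in "general position" relative to the arcs of $C$ between consecutive sample points. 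Then define $Q := (q_0, \dots, q_{n-1})$. Since each $q_i$ lies within $\e/2 + \e/2 = \e$ of... more carefully, within $\e/2$ of $c_i$ which is within $\e/2$ of $p_i$, we get $\Vdist(P,Q) \le \max_i d(p_i,q_i) < \e$, taking the shift $s = 0$.

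It remains to argue $Q$ is simple, and this is the main obstacle. The point is that $Q$ should be a polygon inscribed in (a small neighborhood of) the simple curve $C$: replacing each arc of $C$ from $c_i$ to $c_{i+1}$ by the straight segment $q_i q_{i+1}$. If the sample points were dense along $C$ this segment would hug the arc and simplicity would transfer; but our sample points are only the $n$ vertices of $P$, which may be sparse relative to the geometry of $C$. The fix is to not insist that $C$ itself is the comparison curve, but to first replace $C$ by a simple polygon $C'$ whose vertex set \emph{contains} all the sample points $c_i$ (after the perturbation that makes them distinct), obtained by the standard fact that a simple closed curve can be approximated arbitrarily well in Fréchet distance by a simple polygon through any prescribed finite ordered set of points on it — using a sufficiently fine polygonal approximation of each arc. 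Now $Q$ is obtained from the simple polygon $C'$ by deleting all vertices except the $c_i$'s (and perturbing); deleting vertices from a simple polygon and taking the "shortcut" polygon through the remaining vertices preserves simplicity \emph{provided} each shortcut segment stays within a region where no other part of the polygon intrudes — which we guarantee by choosing the polygonal refinement of $C'$ fine enough that consecutive sample arcs lie in pairwise-disjoint thin tubes, and then invoking the perturbation to handle the tube boundaries and shared endpoints. Formally this is where I expect to spend the most care: making precise a "thin tube / monotone nearest-point projection onto $C$" argument showing that the chord $q_i q_{i+1}$ meets $C'$ (hence the future edges of $Q$) only near its endpoints, so that two chords $q_iq_{i+1}$ and $q_jq_{j+1}$ with $\{i,i{+}1\}\cap\{j,j{+}1\}=\emptyset$ are disjoint, and adjacent chords share only $q_{i+1}$. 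The hypothesis $n \ge 3$ enters because for $n \le 2$ there is no polygon with distinct vertices and positive-length edges to perturb into (a $2$-gon traverses a segment back and forth and can never be simple), so the claim genuinely fails there; with $n\ge 3$ the perturbed inscribed polygon has three or more distinct vertices and the tube argument has room to work. I would organize the final writeup as: (1) sample points and order preservation from $\Frechet(P,C)<\e/2$; (2) refine $C$ to a simple polygon $C'$ through the samples, with thin disjoint tubes around sample arcs; (3) perturb samples to distinct general-position points $q_i$ within the tubes; (4) prove the shortcut polygon $Q=(q_0,\dots,q_{n-1})$ is simple via the tube/projection argument; (5) conclude $\Vdist(P,Q)<\e$.
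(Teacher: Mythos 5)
Your approach is genuinely different from the paper's, and it contains a gap that the paper's structure is specifically designed to avoid. The paper does not try to inscribe an $n$-gon directly in a neighborhood of the simple curve $C$. Instead it proves a chain of equivalences (Theorem~\ref{Th:eq}): weakly simple $\Rightarrow$ \emph{strip-weakly simple} (a combinatorial condition saying some simple curve crosses the disks and strips of an $\e$-strip system in the order $P$ traverses its image graph) $\Rightarrow$ \emph{self-touching configuration} (a piecewise-linear configuration in which the arcs inside each strip are already straight and pairwise disjoint, but may share endpoints or touch) $\Rightarrow$ rigidly weakly simple. The last implication is exactly the hard step, and the paper explicitly discharges it by citing a nontrivial theorem of Ribó Mor, which was previously an open conjecture of Connelly, Demaine, and Rote. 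Everything in your proposal up to and including ``the chords should be pairwise disjoint'' is an attempt to reprove Ribó Mor's theorem from scratch with a tube argument, and this is where the plan breaks down.

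The specific failure is in step (2) of your outline, where you want the refinement of $C'$ ``fine enough that consecutive sample arcs lie in pairwise-disjoint thin tubes.'' When $P$ has overlapping edges --- say $m>1$ edges of $P$ all coincide with the same segment $uv$ of the image graph, which is unavoidable for weakly simple polygons with spurs or repeated vertices --- then $m$ distinct arcs of $C$ between consecutive samples all lie within $\e$ of $uv$ along their entire length. These arcs are long (length roughly $|uv|$, independent of $\e$) and arbitrarily close together, so there is no choice of tubes of any fixed positive thickness that separates them; the tubes would have to shrink to zero with $\e$, while the chords stay long, so ``each shortcut segment stays within a region where no other part of the polygon intrudes'' is simply false. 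Disjointness of the arcs does not transfer automatically to disjointness of their chords: you must choose the endpoint perturbations $q_i$ coherently so that the induced linear order of chord-endpoints at both ends of every shared strip is the \emph{same}, and this choice must be compatible across all strips meeting at a shared vertex. That global coherence argument is the substance of Ribó Mor's theorem and is not a routine general-position perturbation. A symptom of the same problem: ``deleting vertices from a simple polygon and taking the shortcut polygon preserves simplicity'' is not true in general, and the hypothesis you attach to it is exactly the hypothesis your construction cannot guarantee. If you want to give a self-contained proof along these lines, you would need to re-derive Ribó Mor's result, which is a substantial undertaking and should be flagged as such; otherwise, the cleanest fix is to restructure as the paper does --- prove weakly simple $\Rightarrow$ strip-weakly simple by a Jordan--Schönflies argument (Lemma~\ref{L:eq-ws-ss}), straighten within strips to get a self-touching configuration, and then invoke Ribó Mor explicitly for the final perturbation step.
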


Our proof relies on a nontrivial result of Ribó Mor~\cite[Theorem~3.1]{r-rcpps-06} (previously conjectured by Connelly \etal~\cite[Conjecture~4.1]{cdr-ilstl-02}) about “self-touching” linkage configurations.  The restriction to polygons with more than two vertices is necessary; every polygon with at most two vertices is weakly simple, because it is a degenerate ellipse, but not rigidly weakly simple, because every simple polygon has at least three vertices.

\subsection{Earlier Definitions of Weak Simplicity}
\label{S:earlier}

\begin{wrapfigure}{r}{3.125in}
\vspace{-6ex}
\includegraphics[scale=0.4]{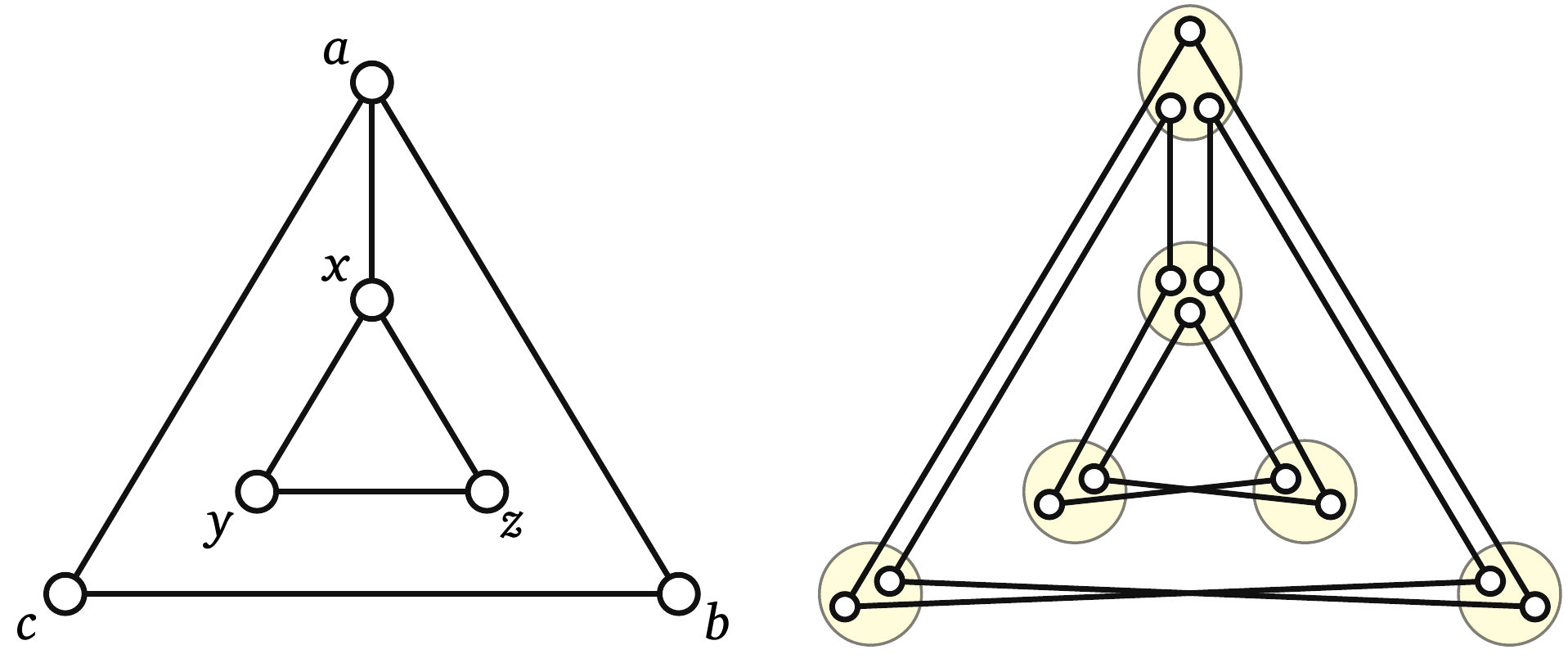}
\centering
\caption{A polygon that many published definitions incorrectly classify as weakly simple.}
\label{F:just-bad-polygon}
\end{wrapfigure}

Several authors have offered combinatorial definitions of weakly simple polygons based on the intuitive notation that a weakly simple polygon cannot cross itself; however, \emph{every} such definition we have found in the literature is either imprecise, incomplete, or incorrect.  For example, a common definition of “weakly simple”, originally due to Toussaint \cite{t-cgpis-89, t-sspst-89, st-csp-92, b-pvgrp-93}, requires that the rotation number (the sum of signed external angles divided by~$2\pi$) is either $+1$ or $-1$; however, rotation numbers are not well-defined for polygons with spurs.  \emph{Every} combinatorial definition of “(self-)crossing” we have found in the literature \cite{st-csp-92, ylw-spnprp-97, noncrossing, t-cgpis-89, ksn-spppo-99, do-gfalo-07, b-pvgrp-93} is incorrect for polygons with spurs; many of these definitions are  incorrect even for polygons without spurs.  Figure~\ref{F:just-bad-polygon} shows a spur-free polygon with 14 vertices that is not weakly simple but satisfies several published definitions of weak simplicity, including Toussaint's.  We offer further details, examples, and discussion in Appendix \ref{S:earlier2}.

We emphasize that the \emph{algorithms} in all these papers appear to be correct.  Given simple polygons as input, these algorithms construct weakly simple polygons as intermediate results, with additional structure that is consistent with the papers' definitions.  More importantly, in each case, the perturbations required to make those polygons simple are implicit in their construction.  Despite occasional claims to the contrary (based on overly restrictive definitions) \cite{aghtu-acgg-08}, we are unaware of any previous algorithm to determine whether a polygon is weakly simple.

\section{No Spurs or Forks}
\label{S:nodes}

In this section, we describe an algorithm to determine whether a polygon without spurs or forks is weakly simple in $O(n\log n)$ time.
Although we have not seen a complete description of this algorithm in the literature, it is essentially folklore.  Similar techniques were previously used by Reinhart \cite{r-ajccs-62}, Zieschang \cite{z-aekf-65}, and Chillingworth \cite{c-wns2-72} to determine whether a given closed curve in an arbitrary 2-manifold with boundary is homotopic to a simple closed curve, and more recently by Cortese \etal\ \cite{cbpp-ecpg-09} to determine clustered planarity of closed walks in plane graphs.

Let $P = (p_0, \dots, p_{n-1})$ be an arbitrary polygon without spurs or forks.  
The algorithm consists of three phases.  
First, we construct the image of $P$, to identify all coincident vertices and edges, and to rule out simple crossings between edges of $P$.  Second, we look for simple crossings at the vertices of $P$. 
Finally, if there are no simple crossings, we expand $P$ into a nearby 2-regular plane graph in the only way possible, and then check whether the expansion is consistent with~$P$.

The description and analysis of our algorithm use the following multiplicity functions: \EMPH{$n(u)$} denotes the number of times the point $u$ occurs as a vertex of $P$; \EMPH{$n(uv)$} denotes the number of times the line segment $uv$ occurs (in either orientation) as an edge of $P$; and \EMPH{$n(uvw)$} denote the number of times the corner $[u,v,w]$ or its reversal $[w,v,u]$ occurs in~$P$.

\subsection{Constructing the Image Graph}

In the first phase, we determine in $O(n\log n)$ time whether any two edges of $P$ cross, using the classical sweep-line algorithm of Shamos and Hoey~\cite{sh-gip-76}; if so, we immediately halt and report that $P$ is not weakly simple.  
Otherwise, the image of~$P$ is a planar straight-line graph $G$, whose vertices we call \emph{nodes} and whose edges we call \emph{segments}, to distinguish them from the vertices and edges of $P$.  
Specifically, the nodes of $G$ are obtained from the vertices of $P$ by removing duplicates, and the segments of $G$ are obtained from the edges of $P$ by removing duplicates.  
The image graph $G$ can be constructed in $O(n\log n)$ time using, for example, a straightforward modification of Shamos and Hoey's sweep-line algorithm.  
This is the most time-consuming portion of our algorithm; the other two phases require only $O(n)$ time.\footnote{We conjecture that the image graph of a weakly simple polygon can actually be constructed in $O(n\log^* n)$ or even $O(n)$ time, by a suitable modification of algorithms for triangulating simple polygons \cite{ctv-flvat-89,s-sfira-91,d-rysoa-92,c-tsplt-91,agr-lttsp-01}.}

\subsection{Node Expansion}

\begin{wrapfigure}{r}{3.1in}
\vspace{-5ex}
\centering
\includegraphics[scale=0.28]{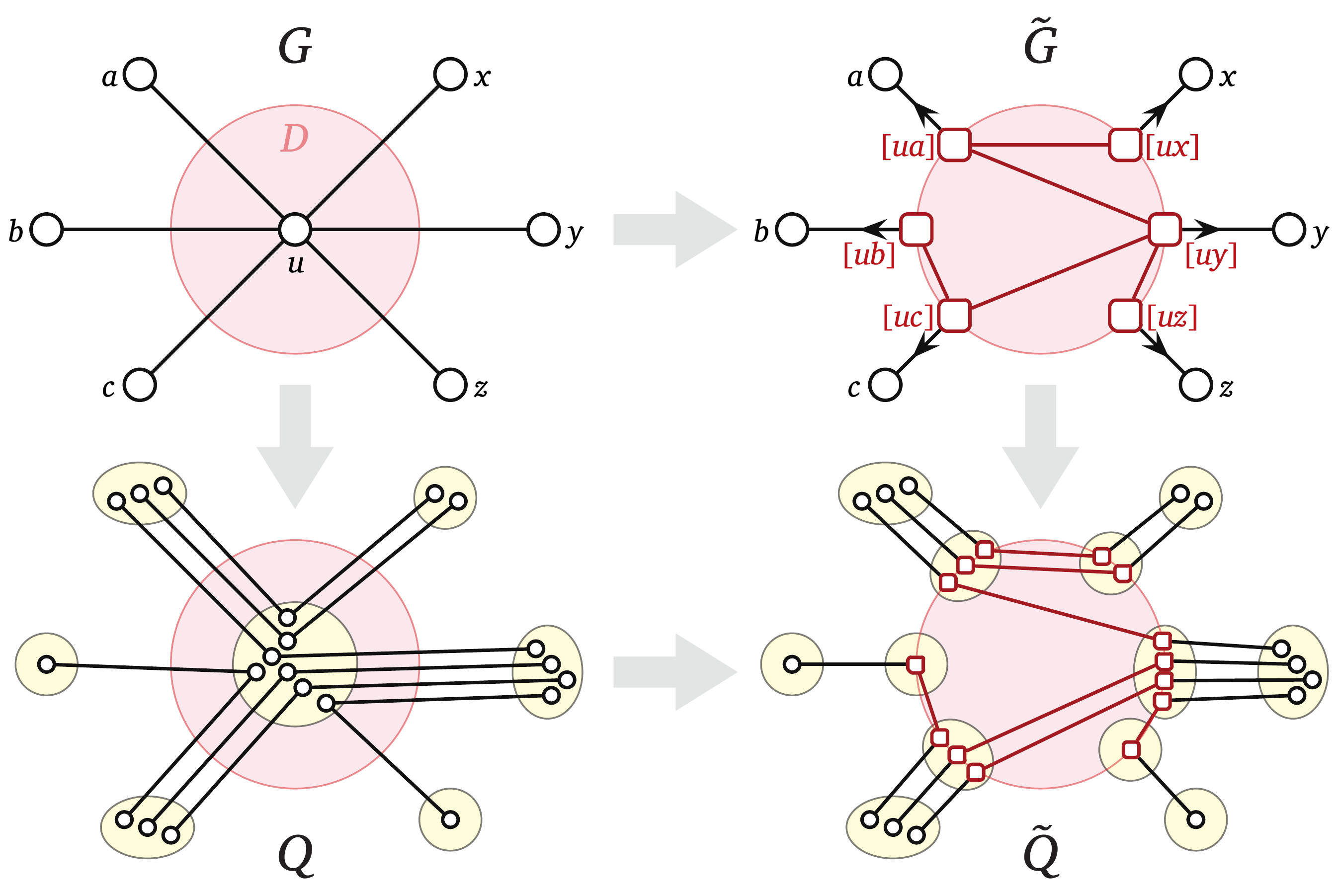}
\vspace{-1ex}
\caption{Node expansion.  Compare with Figure \ref{F:expand}.}
\label{F:node-expand}
\vspace{-1ex}
\end{wrapfigure}

In the second phase, since we have already ruled out simple crossings in the interiors of edges, any simple crossing in $P$ must consist of two corners $[u,v,w]$ and $[x,v,y]$ whose endpoints have the cyclic order $u,x,w,y$ around their common middle vertex~$v$.  
We can detect all such crossings in $O(n)$ time using an operation we call \EMPH{node expansion}, defined geometrically as follows.  (Cortese \etal\ \cite{cbpp-ecpg-09} call this operation a \emph{cluster expansion}.)

Let $u$ be an arbitrary node in $G$.  
Consider a disk~$D_u$ of radius $\delta$ centered at $u$, with $\delta$ chosen sufficiently small that $D_u$ intersects only the segments of~$G$ incident to $u$.  
Because edges of $G$ are straight line segments, the circle $\bdry D_u$ intersects each edge at most once.  
On each segment $ux$ incident to $u$, we introduce a new node $[ux]$ at the intersection point $ux \cap \bdry D_u$.  
Then we modify $P$ by replacing each maximal subwalk in $D_u$ with a straight line segment.  
Thus, if $P$ contains the subpath $[x,u,y]$, the modified polygon contain the edge $[ux][uy]$.  
Let $\tilde{P}$ denote the modified polygon, and let $\tilde{G}$ denote the image of $\tilde{P}$ in the plane; see the top row of Figure~\ref{F:node-expand}.

Lemmas \ref{L:expand} and \ref{L:node} (in the appendix) imply that the original polygon $P$ is weakly simple if and only if the expanded polygon $\tilde{P}$ is weakly simple.  If $P$ has a simple crossing at $u$, then some pair of edges of $\tilde{P}$ cross inside $D_u$.  We describe how to check for such crossings in  Section \ref{SS:expand-planar}.

Altogether, expanding node $u$ and checking for simple crossings requires $O(n(u))$ time; thus, we can expand \emph{every} node of $G$ in $O(n)$ time overall.  
If any node expansion creates a simple crossing, we halt immediately and report that $P$ is not weakly simple. 
Otherwise, let $\tilde{P}$ denote the polygon after all node expansions, and let~$\tilde{G}$ denote its image graph.  By induction, $\tilde{G}$ is a plane graph, and $P$ is weakly simple if and only if $\tilde{P}$ is weakly simple.

\subsection{Global Inflation}
\label{SS:inflation}

In the final phase of the algorithm, we “inflate” the image graph $\tilde{G}$ into a 2-regular plane graph $\tilde{Q}$ by replacing each segment $s$ of $\tilde{G}$ with several parallel segments, one for each edge of $\tilde{P}$ that traverses $s$; see the right column of Figure \ref{F:node-expand}.  Then $\tilde{P}$ (and therefore $P$) is weakly simple if and only if $\tilde{Q}$ is connected and consistent with $\tilde{P}$.

Fix an arbitrarily small positive real number $\e\ll \delta$.  To construct~$\tilde{Q}$ from~$\tilde{G}$, we replace each node $[uv]$ with $n(uv)$ closely spaced points on $\bdry D_u$, all within $\e$ of $[uv$].  Then we replace each segment $[uv][vu]$ of $\tilde{G}$ outside the disks with $n(uv)$ parallel segments.  Finally, within each disk $D_u$, we replace each corner segment $[uv][uw]$ with $n(uvw)$ parallel segments, so that all resulting segments in $D_u$ are disjoint.  Crucially, there is exactly one way to perform this final replacement within each disk $D_u$ once the boundary points are fixed; see \cite{splitting,surfcut,bs-gwbin-85} for related constructions.

The uniqueness of $\tilde{Q}$ implies $\tilde{P}$ is weakly simple if and only if (1) the resulting 2-regular graph~$\tilde{Q}$ is connected and thus a simple polygon, and (2) we have $\Vdist(\tilde{P}, \tilde{Q}) < \e$.  We can check whether $\tilde{Q}$ is connected in $O(n)$ time by depth-first search.  Finally, if $\tilde{Q}$ is a simple polygon, we can determine whether $\Vdist(\tilde{P}, \tilde{Q}) < \e$ in $O(n)$ time using any fast string-matching algorithm.

\medskip
One final detail remains to complete the proof of Theorem \ref{Th:nospur+nofork}: How do we choose appropriate parameters $\delta$ and $\e$ in the second and third phases?  In fact, there is no need to choose specific values at all! The combinatorial embedding of the expanded image graph $\tilde{G}$ is identical for all sufficiently small positive $\delta$; similarly, the combinatorial embedding of the 2-regular graph $\tilde{Q}$ is identical for all sufficiently small positive $\e \ll \delta$.  Thus, the entire algorithm can be performed by modifying abstract graphs and their rotation systems, without choosing explicit values of $\delta $ or $\e$ at all!  See Section \ref{SS:expand-planar} for details.
\begin{theorem}
\label{Th:nospur+nofork}
Given an arbitrary $n$-vertex polygon $P$ without spurs or forks, we can determine in $O(n\log n)$ time whether $P$ is weakly simple.
\end{theorem}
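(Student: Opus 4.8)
The plan is to assemble Theorem~\ref{Th:nospur+nofork} directly from the three phases already described in this section, verifying that each phase is both correct and implementable within the claimed time bound. The overall structure is: (1) rule out edge--edge crossings and build the image graph $G$; (2) expand every node and look for crossings created inside the disks $D_u$; (3) inflate the (now crossing-free) image graph $\tilde G$ into the unique 2-regular graph $\tilde Q$ and test whether $\tilde Q$ is connected and satisfies $\Vdist(\tilde P,\tilde Q)<\e$. The correctness of the chain of equivalences ``$P$ weakly simple $\iff$ $\tilde P$ weakly simple $\iff$ $\tilde Q$ is a simple polygon within vertex distance $\e$'' is what needs to be stitched together; the time analysis is then a matter of summing $O(n\log n)+O(n)+O(n)$.

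First I would dispose of Phase~1. If two edges of $P$ cross transversally, then arbitrarily small perturbations cannot remove the crossing---any curve Fréchet-close to $P$ inherits a transverse crossing by a standard degree/linking argument---so $P$ is not weakly simple and we halt. This is detected in $O(n\log n)$ by the Shamos--Hoey sweep, which with a routine modification also outputs the planar straight-line image graph $G$ (merging coincident vertices and coincident edges). Since $P$ has no spurs and no forks, every vertex of $P$ maps to a node of $G$ of degree at least $2$ whose incident segments are genuinely distinct directions, and no vertex of $P$ lies in the relative interior of a segment of $G$; this is exactly the hypothesis that makes node expansion well-defined. Next, Phase~2: for a fixed node $u$, choosing $\delta$ small enough that $D_u$ meets only segments incident to $u$, replacing each maximal subwalk of $P$ inside $D_u$ by a chord of $\bdry D_u$ produces $\tilde P$, and I invoke Lemmas~\ref{L:expand} and~\ref{L:node} to conclude $P$ is weakly simple iff $\tilde P$ is. The key combinatorial observation is that after ruling out edge crossings, any self-crossing of $P$ must be a transverse crossing of two corners $[u,v,w]$ and $[x,v,y]$ at a common node $v$, which happens exactly when the four endpoints interleave as $u,x,w,y$ in the rotation at $v$; this interleaving is visible as a crossing of the two chords $[vu][vw]$ and $[vx][vy]$ inside $D_v$, so scanning the rotation at each node $u$ in $O(n(u))$ time and summing gives $O(n)$ total, with a halt-and-reject if any crossing is found.

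Then Phase~3: with $\tilde G$ now a genuine plane graph and $\tilde P$ a closed walk in it with no crossings, I would argue that ``weakly simple'' reduces to a purely combinatorial inflation check. Replace each node $[uv]$ of $\tilde G$ by $n(uv)$ points clustered within $\e$ on $\bdry D_u$, replace each segment outside the disks by $n(uv)$ parallel copies, and---crucially---inside each disk $D_u$ connect the boundary points by $n(uvw)$ parallel arcs for each corner $[u,v,w]$; the planarity of $\tilde G$ and the fact that the boundary cyclic order is already fixed force this last step to be possible in \emph{exactly one} way, yielding a 2-regular plane graph $\tilde Q$ with $\Vdist(\tilde P,\tilde Q)<\e$ by construction. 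I then claim $\tilde P$ (hence $P$) is weakly simple iff $\tilde Q$ is connected: if $\tilde Q$ is connected it is a single simple polygon $\e$-close to $\tilde P$; conversely, any simple polygon $\e$-close to $\tilde P$ must, for $\e\ll\delta$, realize the same boundary points and the same non-crossing corner wiring inside every $D_u$ and the same parallel bundles outside, hence must equal $\tilde Q$ up to reparametrization, so $\tilde Q$ is connected. Connectivity is an $O(n)$ depth-first search; the final verification that the cyclic edge-sequence of $\tilde Q$ matches that traced by $\tilde P$ is an $O(n)$ string match. Finally I would note that no numerical $\delta,\e$ need ever be chosen: the rotation system of $\tilde G$ stabilizes for all small $\delta$, and the rotation system of $\tilde Q$ stabilizes for all small $\e\ll\delta$, so every step is an abstract-graph manipulation. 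The main obstacle is the converse direction of the Phase~3 equivalence---showing that \emph{every} nearby simple polygon is forced to be (a reparametrization of) $\tilde Q$, i.e.\ that the inflation is not merely \emph{a} way to perturb into simplicity but essentially the \emph{only} way---and this is precisely where uniqueness of the in-disk wiring and the $\e\ll\delta$ separation of scales do the work.
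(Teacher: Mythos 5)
Your proposal reproduces the paper's argument essentially verbatim: Phase 1 (Shamos--Hoey sweep to detect edge crossings and build $G$), Phase 2 (node expansion justified by Lemmas~\ref{L:expand} and~\ref{L:node}, with crossings detected as interleaving in the rotation at each node), and Phase 3 (unique inflation into a $2$-regular plane graph $\tilde Q$, followed by a connectivity check and an $O(n)$ string match against $\tilde P$), together with the observation that the rotation systems stabilize for all small $\delta$ and $\e\ll\delta$ so no numerical computation is ever required. This is the same decomposition, the same key lemmas, and the same uniqueness argument as the paper's Section~\ref{S:nodes}.
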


\section{Forks But No Spurs}
\label{S:bars}

Recall that a \EMPH{fork} in a polygon is a vertex that lies in the interior of an edge.  With only minor modifications, the algorithm in the previous section can also be applied to polygons with forks, but still without spurs.  Specifically, in the preprocessing phase, we locate all forks in $O((n+k)\log n)$ time using a standard sweep-line algorithm \cite{bo-arcgi-79}, where $k = O(n^2)$ is the number of forks, and then subdivide each edge into smaller edges at the forks on that edge.  The remainder of the algorithm is unchanged.  In the worst case, subdividing the polygon to eliminate forks increases its complexity from $n$ to $\Theta(n^2)$, which in turn increases the overall running time of the algorithm from $O(n\log n)$ to $O(n^2\log n)$.  With more care, however, we can avoid this quadratic blowup.

We define a coarser decomposition of the image of~$P$ into points and line segments called a \EMPH{bar decomposition}, as follows.  A \EMPH{bar} of $P$ is a component of the union of the interiors of all edges of $P$ that lie on a common line.  Every fork lies in exactly one bar; we call any vertex that is not in a bar \EMPH{sober}.  
\begin{wrapfigure}{r}{3in}
\centering
\includegraphics[scale=0.3]{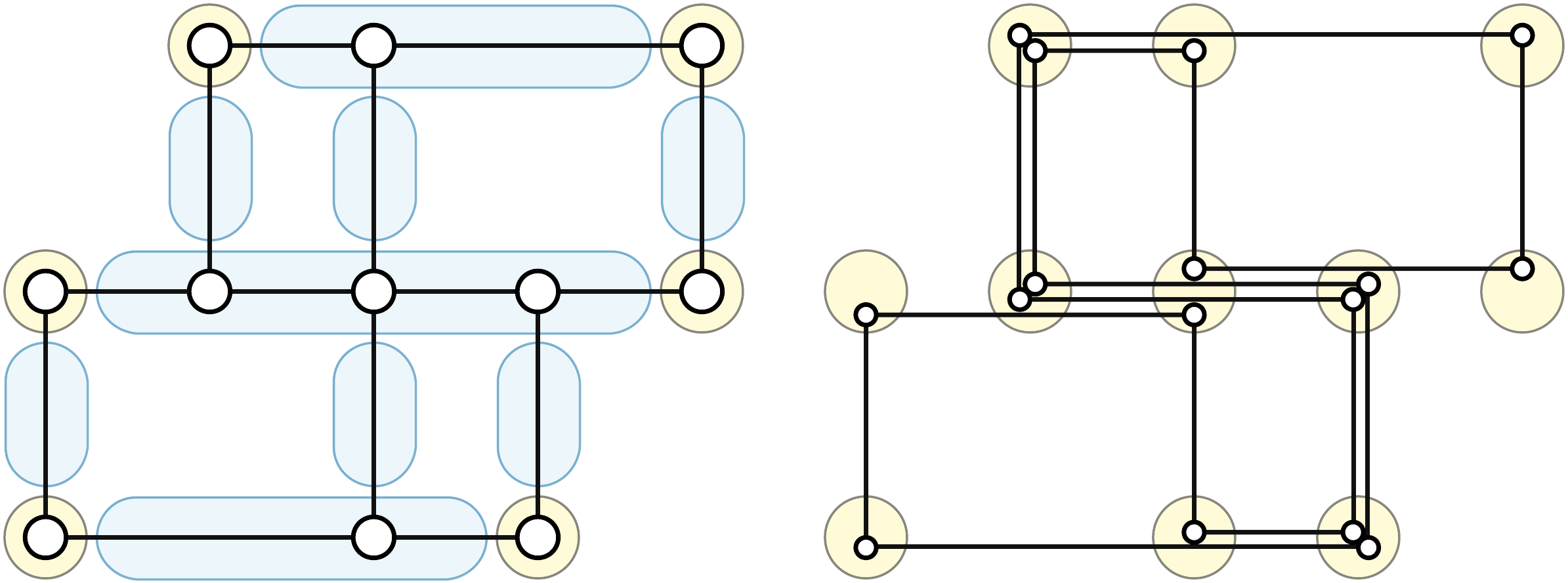}
\caption{A bar decomposition of a weakly simple polygon without spurs, and a nearby simple polygon.}
\vspace{-1ex}
\end{wrapfigure}
The bar decomposition consists of all bars of $P$ and all distinct sober vertices of $P$; the image of~$P$ is equal to the union of these bars and points.  If $P$ has no forks, the bar decomposition consists of the nodes and segments of the image graph $G$.

We can compute the bar decomposition of $P$ in $O(n\log n)$ time as follows.  First, cluster the edges of~$P$ into collinear subsets, by sorting them by the slopes and $y$-intercepts of their supporting lines.  Then for each collinear subset, sort the endpoints by $x$-coordinate, breaking ties by sorting all right endpoints before all left endpoints.  Finally, a linear-time scan along each sorted collinear subset of edges yields all bars that lie on that line.  We also identify all distinct sober vertices, compute (the indices of) all vertices and edges of $P$ that lie in each bar, and compute the bars incident to each fork in cyclic order.  Finally, we verify that no pair of bars crosses, using a standard sweep-line algorithm \cite{sh-gip-76}.

Next we perform a node expansion at each sober node, as described in the previous section.  We also perform a \EMPH{bar expansion} around each bar, defined as follows.  For any bar $b$, let $b^\circ$ denote the subset of~$b$ obtained by removing all points within distance $2\delta$ of endpoints of $b$, and let $D_b$ denote an elliptical disk whose major axis is $b^\circ$ and whose minor axis has length $2\delta$, where the parameter $\delta$ is chosen so that $D_b$ intersects only the edges of~$P$ that also intersect the bar.  Then bar expansion is identical to node expansion: we subdivide $P$ at the intersection points $\im P\cap \bdry D_b$, replace each maximal subpath of $P$ that lies in $D_b$ with a line segment, and finally verify that the new segments do not cross, as described in  Section~\ref{SS:expand-planar}.

Again, there is no need to choose a specific value of $\delta$; each bar expansion is performed entirely combinatorially.  Since every edge of $P$ touches at most three bars or sober nodes, we can expand every bar and every sober node in $O(n)$ time.  The resulting polygon $\tilde{P}$ has no simple crossings, no forks, and no spurs.  Lemmas \ref{L:expand}, \ref{L:node}, and \ref{L:bar} imply inductively that $\tilde{P}$ is weakly simple if and only if the original polygon $P$ is weakly simple.  Finally, we can determine whether~$\tilde{P}$ is weakly simple in $O(n)$ time using the global inflation algorithm described in Section \ref{SS:inflation}.  We conclude:
\begin{theorem}
\label{Th:nospurs}
Given an arbitrary $n$-vertex polygon $P$ without spurs, but possibly with forks, we can determine in $O(n\log n)$ time whether $P$ is weakly simple.
\end{theorem}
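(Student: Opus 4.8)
The plan is to follow the skeleton already set up for the spur-and-fork-free case (Theorem~\ref{Th:nospur+nofork}), replacing the fork-subdivision step by the coarser \emph{bar decomposition}, so that the total complexity stays linear after the $O(n\log n)$ preprocessing. First I would establish the preprocessing bound: clustering edges into maximal collinear families costs $O(n\log n)$ (sort by slope and intercept); within each family, sorting endpoints by $x$-coordinate with the stated tie-breaking lets a single linear scan recover the bars, the vertices and edges lying in each bar, the cyclic order of bars at each fork, and the sober vertices; a Shamos--Hoey sweep detects whether any two bars cross in $O(n\log n)$. If a crossing is found we halt and report ``not weakly simple''; the correctness of this halt follows because a transverse intersection of two bars forces a simple crossing of edges of $P$ that no arbitrarily small perturbation can remove.

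Next I would argue that performing a node expansion at every sober node and a bar expansion around every bar takes only $O(n)$ time in total. The key combinatorial fact is that each edge of $P$ meets at most three features of the bar decomposition (its two endpoints' bars-or-sober-nodes, plus at most one bar containing its interior)---more precisely each edge lies in exactly one bar or has both endpoints sober, and in either case contributes $O(1)$ to the expansion work---so $\sum_b (\text{work at }b) = O(n)$. Each expansion is carried out purely on the rotation system, as in Section~\ref{SS:inflation}: the elliptical disk $D_b$ with major axis $b^\circ$ and minor axis $2\delta$ has a combinatorial type independent of $\delta$ for all sufficiently small $\delta$, so no numeric parameter is ever chosen. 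After all expansions, $\tilde P$ has no simple crossings (each expansion checks for newly created crossings via the routine of Section~\ref{SS:expand-planar}), no spurs (none were present and expansion creates none), and no forks (every fork was interior to some bar and has been pushed onto $\bdry D_b$, becoming a genuine degree-preserving vertex rather than an interior point). Hence $\tilde P$ is a polygon without spurs or forks whose image is a plane graph.

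Then I would invoke the equivalence lemmas: Lemma~\ref{L:expand} handles the generic ``local replacement preserves weak simplicity'' step, Lemma~\ref{L:node} specializes it to node expansion, and Lemma~\ref{L:bar} to bar expansion; applying these inductively over the $O(n)$ expansions shows $\tilde P$ is weakly simple if and only if $P$ is. Finally, since $\tilde P$ is spur- and fork-free, the global inflation procedure of Section~\ref{SS:inflation} decides weak simplicity of $\tilde P$ in $O(n)$ time: inflate $\tilde G$ into the unique $2$-regular graph $\tilde Q$, test connectivity by depth-first search, and if $\tilde Q$ is a simple polygon test $\Vdist(\tilde P,\tilde Q)<\e$ by string matching. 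Adding the phases: $O(n\log n)$ for preprocessing plus $O(n)$ for everything else gives the claimed $O(n\log n)$ bound.

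The main obstacle, and the place where the argument is genuinely more delicate than in Theorem~\ref{Th:nospur+nofork}, is the bar expansion itself---specifically, verifying that replacing each maximal subpath of $P$ inside the elliptical disk $D_b$ by a straight segment, together with the crossing check at the boundary, faithfully captures all potential simple crossings arising from the forks and overlapping collinear edges within the bar, and that this is equivalent to weak simplicity of the unexpanded polygon. This is exactly the content of Lemma~\ref{L:bar} (deferred to the appendix), so in the body the real work is (i) confirming the three-features-per-edge accounting that makes the expansion phase linear, and (ii) confirming that after expansion no forks remain, so that the spur-and-fork-free machinery of Section~\ref{SS:inflation} applies verbatim. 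The subtle point in (ii) is that a fork shared by several collinear edges, once the subpaths through $D_b$ are straightened, must not reappear as a fork on any other bar or edge---this is guaranteed because $D_b$ is chosen (combinatorially) to meet only edges that already intersect $b$, so no new incidences with unrelated edges are created.
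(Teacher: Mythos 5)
Your proposal follows the paper's proof essentially step for step: the same $O(n\log n)$ preprocessing by sorting collinear edges and scanning, the same node expansions at sober vertices and bar expansions with the elliptical disk $D_b$, the same appeal to Lemmas~\ref{L:expand}, \ref{L:node}, and \ref{L:bar}, and the same $O(n)$ inflation-plus-string-matching finish. One small imprecision worth flagging: your parenthetical ``more precisely each edge lies in exactly one bar or has both endpoints sober'' is not an exhaustive dichotomy (an edge can have one sober endpoint, one fork endpoint lying in a bar $b_v$, and an interior lying in a distinct bar $b_{uv}$), but the preceding ``at most three features per edge'' bound is exactly the paper's observation and it is the one that actually drives the $O(n)$ expansion cost, so the argument goes through.
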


\section{Polygons with Spurs}
\label{S:segments}

Neither of the previous algorithms is correct when the input polygon has spurs.  To handle these polygons, we apply a recent algorithm of Cortese \etal~\cite{cbpp-ecpg-09} to determine whether a walk in a plane graph is weakly simple (in their terminology, whether a rigid clustered cycle is c-planar).  We include a complete description of the algorithm here, reformulated in our terminology, not only to keep this paper self-contained, but also so that we can improve its running time in Section \ref{S:faster}. 

At a high level, the algorithm proceeds as follows.  In an initial preprocessing phase, we construct the image graph $G$ of the input polygon $P$ and then expand $P$ at every node in $G$.  Then, following Cortese \etal~\cite{cbpp-ecpg-09}, we repeatedly modify the polygon using an operation we call \emph{segment expansion}, defined in Section \ref{SS:segment-expansion}, until either we detect a simple crossing (in which case $P$ is not weakly simple), the image graph is reduced to a single line segment (in which case $P$ is weakly simple), or there are no more “useful” segments to expand.

If the input polygon has no forks, the image graph $G$ has complexity $O(n)$, we construct it in $O(n\log n)$ time, and we expand every node in $O(n)$ time, exactly as described in Section \ref{S:nodes}.  A simple potential argument implies that the main loop terminates after at most $O(n)$ segment expansions.  (Cortese \etal~\cite{cbpp-ecpg-09} prove an upper bound of $O(n^2)$ expansions using a different potential function.)  A~naïve implementation executes each segment expansion in $O(n)$ time, giving us an overall running time of $O(n^2)$.  The $O(n\log n)$ time bound  follows from a more careful implementation and analysis, which we describe in Section \ref{S:faster}.
\begin{theorem}
\label{Th:spurs}
Given an arbitrary $n$-vertex polygon $P$ without forks but possibly with spurs, we can determine in $O(n\log n)$ time whether $P$ is weakly simple.
\end{theorem}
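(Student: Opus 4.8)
The plan is to verify correctness of the Cortese \etal\ style algorithm for a polygon $P$ without forks but possibly with spurs, and then to account for the running time. The algorithm is the one sketched in Section~\ref{S:segments}: (1) construct the image graph $G$; (2) expand every node of $G$ exactly as in Section~\ref{S:nodes}; (3) repeatedly apply segment expansion until either a simple crossing appears (report \emph{not} weakly simple), the image graph collapses to a single segment (report weakly simple), or no \useful\ segment remains (in which case the final global-inflation test decides). First I would record the preprocessing cost: since $P$ has no forks, $G$ has complexity $O(n)$ and is built in $O(n\log n)$ time by the sweep-line of Shamos and Hoey, and the node-expansion phase costs $O(n)$ in total (as in Theorem~\ref{Th:nospur+nofork}); crucially, node expansion does \emph{not} introduce forks when $P$ is fork-free, so the polygon fed to the main loop is still fork-free.

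Correctness of the main loop rests on two ingredients. The first is an invariant: each segment expansion preserves weak simplicity, i.e.\ the polygon before the expansion is weakly simple if and only if the polygon after it is. This is exactly the content of the expansion lemmas promised in the appendix (Lemmas~\ref{L:expand} and~\ref{L:node}, together with the segment-expansion analogue), applied inductively along the sequence of expansions; combined with the node-expansion step and Lemma~\ref{L:expand}, it shows that the original $P$ is weakly simple iff the polygon at any stage of the loop is. The second ingredient is that the three stopping conditions are jointly exhaustive and individually decisive: a detected simple crossing certifies non-weak-simplicity by definition of weak simplicity (a crossing cannot be removed by an arbitrarily small perturbation); an image graph that is a single segment forces the polygon to be a walk back and forth on that segment, which is weakly simple exactly when it can be flattened — but after all node and segment expansions this degenerate case is precisely the base case handled by inflation; and if no \useful\ segment remains, the polygon has been reduced to a form on which the global-inflation test of Section~\ref{SS:inflation} is valid, so we invoke that $O(n)$-time test to conclude. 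Assembling these, $P$ is weakly simple iff the algorithm reports so.

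For the running time I would first bound the number of iterations, then the cost per iteration. A potential-function argument — decreasing by at least one with each segment expansion, and bounded initially by $O(n)$ — shows the loop runs at most $O(n)$ times; the natural potential counts something like the number of (edge, segment) incidences not yet ``resolved'' by an expansion, which starts at $O(n)$ since $P$ is fork-free and cannot increase. (This replaces the $O(n^2)$ bound of Cortese \etal, whose potential was coarser.) A naive implementation of one segment expansion rebuilds affected parts of the polygon and its rotation system in $O(n)$ time, giving $O(n^2)$ overall; to reach $O(n\log n)$ I would defer to the faster implementation and data-structural analysis of Section~\ref{S:faster}, which amortizes the total work across all expansions to $O(n\log n)$ using balanced search structures to locate and splice the relevant subwalks. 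Adding the $O(n\log n)$ preprocessing cost yields the claimed bound.

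The main obstacle I expect is the correctness invariant for segment expansion together with the exhaustiveness of the stopping conditions — in particular, arguing that when the loop halts with ``no \useful\ segment'' the image graph is genuinely in the base-case form on which global inflation is sound, and that spurs (absent in the fork-free node/bar analyses of the previous sections) do not break the expansion lemmas. The running-time improvement, by contrast, is largely a matter of choosing the right potential and the right data structures, and is deferred to Section~\ref{S:faster}.
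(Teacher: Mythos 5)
Your outline matches the paper's high-level structure — image graph, node expansion, a main loop of segment expansions, and a potential argument plus a faster implementation — but several of the load-bearing technical pieces are either missing or mischaracterized.

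\emph{The safety condition is absent.} You say ``repeatedly apply segment expansion'' and mention ``useful'' segments, but you never state (or use) the fact that expanding an \emph{arbitrary} segment can turn a non-weakly-simple polygon into a weakly simple one, so the ``iff'' invariant fails in general. The paper restricts expansion to \emph{safe} segments — a segment $uv$ that is a \emph{base} of both endpoints, i.e.\ every occurrence of $u$ in $P$ is adjacent to $v$ and vice versa — and only for safe segments does Cortese~\etal's Lemma~3 (Lemma~\ref{lemma:edge-expansion} here) give the equivalence you need. Relatedly, you do not justify why the loop can always find something to expand: the paper maintains the invariant that every node of the current image graph has a base (an invariant created by node expansion and preserved by each segment expansion), and Lemma~\ref{L:end} then shows that if every node has a base and there are no useful segments, $G$ is a simple polygon and $P$ is spur-free — which is precisely why the final check reduces to $n(uv)=1$. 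Your appeal to ``the global-inflation test'' would work, but without Lemma~\ref{L:end} you cannot assert the preconditions for that test hold, so the gap is real.

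\emph{The potential and the amortization are not the paper's.} Your potential (``number of (edge, segment) incidences not yet resolved'') is not defined precisely, and you do not argue it strictly decreases. The paper uses $\Phi(P,G)=\lvert P\rvert-\lvert G\rvert$ and Lemma~\ref{L:potential} gives a clean case analysis showing $\Phi$ drops by at least one per useful-segment expansion, bounding the iteration count by $2n$. More significantly, you attribute the $O(n\log n)$ bound to ``balanced search structures to locate and splice,'' but the paper's Section~\ref{S:faster} uses only doubly-linked lists and a doubly-connected edge list; the $\log n$ factor comes from an \emph{analytic} heavy-path decomposition argument over a ``family tree'' of nodes, in which the work of each expansion (excluding spur removal and the maximum-weight child) charges to non-heavy nodes, and these charges sum to $O(n\log n)$. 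That heavy-path accounting is the crux of the running-time improvement over Cortese~\etal's $O(n^3)$, and your proposal elides it entirely.

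In short: the skeleton is right, but you would need to (i) introduce bases/safe segments and the invariant that every node has a base, to make the correctness ``iff'' valid and the loop non-stuck; (ii) supply a concrete decreasing potential such as $\lvert P\rvert-\lvert G\rvert$; and (iii) replace the appeal to balanced search structures with the heavy-path decomposition argument that actually delivers $O(n\log n)$.
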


Any polygon can be subdivided into a polygon without forks in $O(n^2\log n)$ time using a standard sweep-line algorithm, similar to the algorithm described in Section \ref{S:bars}.  Thus, Theorem \ref{Th:spurs} has the following immediate corollary.
\begin{corollary}
\label{Th:spurs+forks}
Given an arbitrary $n$-vertex polygon $P$, we can determine in $O(n^2\log n)$ time whether $P$ is weakly simple.
\end{corollary}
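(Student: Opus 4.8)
The plan is to reduce the general problem to the fork-free case of Theorem~\ref{Th:spurs} by first subdividing $P$ at every fork. A polygon with at most two vertices is a degenerate ellipse and hence trivially weakly simple, so we may assume $P$ has $n\ge 3$ vertices.

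First I would locate all forks of $P$, exactly as in the preprocessing step of Section~\ref{S:bars}: a fork is an incidence between a vertex of $P$ and the relative interior of an edge of $P$, and the standard sweep-line algorithm of~\cite{bo-arcgi-79} reports all $k$ such incidences in $O((n+k)\log n)$ time. Since each of the $n$ vertices lies on at most $n-1$ edges, $k\le n(n-1) = O(n^2)$, so this phase takes $O(n^2\log n)$ time. I would then subdivide each edge of $P$ at the forks that lie on it, in their order along that edge, obtaining a polygon $P'$ with $m = n+k = O(n^2)$ vertices.

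Next I would verify that $P'$ is a valid reduction. First, $P'$ has no forks: every subdivision point of $P'$ is an original vertex of $P$, so if some vertex $w$ lay in the relative interior of an edge $e'$ of $P'$, then $w$ would lie in the relative interior of the original edge of $P$ that contains $e'$, and hence $w$ would already have been inserted as one of that edge's subdivision points---a contradiction. Second, $P$ and $P'$ are the same unparametrized closed curve: they have identical images and are traversed in the same cyclic order, differing only in which points along the curve are marked as vertices. Hence $\Frechet(P,P') = 0$, so $P$ is weakly simple if and only if $P'$ is. Running the algorithm of Theorem~\ref{Th:spurs} on $P'$ therefore decides weak simplicity of $P$ in $O(m\log m) = O(n^2\log(n^2)) = O(n^2\log n)$ time; together with the $O(n^2\log n)$ subdivision step this gives the claimed bound.

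The reduction itself is routine bookkeeping; the only point that genuinely needs the argument above is the claim that subdividing at forks creates no new forks. The quadratic factor cannot be avoided by this approach, since a polygon may have $\Theta(n^2)$ forks (for instance, many nested collinear edges), and each must be eliminated before Theorem~\ref{Th:spurs} can be applied.
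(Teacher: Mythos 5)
Your proof takes exactly the same route as the paper: locate all fork incidences with a sweep-line algorithm in $O(n^2\log n)$ time, subdivide each edge at the forks lying on it to get a fork-free polygon $P'$ with $O(n^2)$ vertices, observe that $P$ and $P'$ are the same unparametrized curve (so weak simplicity is preserved), and run the algorithm of Theorem~\ref{Th:spurs} on $P'$. The paper presents this as immediate; you helpfully spell out the two details it leaves implicit—that the subdivision creates no new forks, and that $\Frechet(P,P')=0$—both of which you argue correctly.
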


For the remainder of the paper, we assume that the input polygon $P$ has no forks.

\subsection{Segment Expansion}
\label{SS:segment-expansion}

We now describe the main loop of our algorithm in more detail.  The main operation, \EMPH{segment expansion}, is nearly identical to our earlier expansion operations.  Let $s$ be an arbitrary segment of the current image graph $G$.  Let $D_s$ be an elliptical disk whose boundary intersects precisely the segments that share one endpoint with $s$.  To expand segment $s$, we subdivide $P$ at its intersection with $\bdry D_s$, replace each maximal subpath of $P$ that lies in $D_s$ with a line segment, and verify that the new segments do not cross.

Unfortunately, expanding an \emph{arbitrary} segment could change a polygon that is not weakly simple into a polygon that is weakly simple.  Our algorithm expands only segments with the following property.  A segment $uv$ in the image graph is a \EMPH{base} of one of its endpoints $u$ if every occurrence of $u$ in the polygon~$P$ is immediately preceded or followed by the other endpoint~$v$.  A segment is \EMPH{safe} if it is a base of both of its endpoints.  Cortese \etal\ prove that a polygon $P$ is weakly simple if and only if the polygon~$\tilde{P}$ that results from expanding a \emph{safe} segment is also weakly simple  \cite[Lemma~3]{cbpp-ecpg-09}.  We provide a new proof of this key lemma in the appendix (Section \ref{S:seg-correctness}).

After a node expansion around any node $u$, each new node $[ux]$ on the boundary of $D_u$ has a base, namely the segment $[ux]x$ \cite[Property~5]{cbpp-ecpg-09}.  Thus, expanding every node in the original image graph guarantees that every node in the image graph has a base.  Similarly, after any segment expansion, every newly created node has a base.  Thus, our algorithm inductively maintains the invariant that every node in the image graph has a base.  It follows that at every iteration of our algorithm, the image graph has at least one safe segment, so our algorithm never gets stuck.

\subsection{Useful Segments}

However, under some circumstances, expanding a safe segment does not actually make progress.  Specifically, if both endpoints of the segment have degree $2$ in $G$, and no spur in $P$ includes that segment, the segment expansion does not change the combinatorial structure of $P$ or $G$ at all.  We call any such segment \EMPH{useless}.  Equivalently, a safe edge is \EMPH{useful} if it is the only base of one of its endpoints, and useless otherwise.  Our algorithm repeatedly expands only \emph{useful} segments until either the image graph consists of a single segment (in which case $P$ is weakly simple) or there are no more useful segments to expand. 
\begin{lemma}
\label{L:end}
Let $G$ be the image graph of a polygon $P$ without simple crossings.  If every node of $G$ has a base but $G$ has no useful segments, then $G$ is a simple polygon and $P$ has no spurs.
\end{lemma}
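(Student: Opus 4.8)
The plan is to prove the contrapositive structure directly: assume every node of $G$ has a base and $G$ has no useful segments, and then extract strong structural consequences. The key observation is that ``no useful segment'' combined with ``every node has a base'' forces a very rigid local picture at every node. Recall a segment $uv$ is a base of $u$ if every occurrence of $u$ in $P$ is immediately preceded or followed by $v$; it is safe if it is a base of both endpoints; and it is useful if it is safe and is the \emph{only} base of one of its endpoints. So the hypothesis says: every node has a base, but no safe segment is the sole base of either endpoint.

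First I would argue that every node of $G$ has degree exactly $2$. Suppose some node $u$ has degree $d \geq 3$ (degree $0$ or $1$ is impossible since $P$ visits $u$ and has positive-length edges on both sides of each visit, and after node expansion there are no spurs at original nodes, so each visit uses two \emph{distinct} incident segments). Let $uv$ be a base of $u$. Since $uv$ is a base of $u$, every visit to $u$ uses the segment $uv$; because $u$ has degree $\geq 3$, the walk cannot always pair $uv$ with the same partner, so $uv$ is not a base of $v$ in general — wait, that is about $v$, not $u$. The cleaner route: I would use the invariant from Section~\ref{SS:segment-expansion} that \emph{every} node has a base, and show that at a degree-$\geq 3$ node the base segment $uv$ must in fact be safe and must be the unique base of $u$ (since any other incident segment $uw$ fails to be a base of $u$ — there is a visit to $u$ not touching $uw$, namely any visit pairing $uv$ with something other than $uw$, and such a visit exists because degree $\geq 3$ means $uv$ gets paired with at least two different partners across all visits, or $u$ is visited with a pair avoiding $w$). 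This makes $uv$ useful unless $uv$ is not safe; but if $uv$ is not safe then $v$ has some \emph{other} base $vw'$, and one pushes the argument to $v$. The hard part — and the main obstacle — is handling the case where bases ``chain'' around without ever producing a safe segment that is a sole base; I expect one needs a global/parity argument: follow the base pointers (each node points to a base neighbor) to get a functional graph on $V(G)$, observe that $G$ has no useful segment means along any such base edge \emph{both} endpoints have a second base, hence every node has degree $\geq 2$ with at least two bases, and then a counting argument (sum of ``number of bases'' over nodes vs.\ number of segments) forces a contradiction unless $G$ is a single cycle. I would set this up as: if $\deg(u) \ge 3$ for some $u$, then some incident segment is a useful segment, contradiction; hence $G$ is $2$-regular.

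Once $G$ is $2$-regular, it is a disjoint union of simple cycles (since it is a plane straight-line graph with no crossings). Next I would show $G$ is connected, i.e.\ a single cycle. This is where I use that $P$ is a \emph{single} closed walk whose image is $G$: the walk must traverse every segment, and at a degree-$2$ node it has no choice but to pass straight through, so the walk, once it enters a cycle component, stays in it forever — hence $G$ has exactly one component, a simple polygon. Finally, $P$ has no spurs: a spur at $u$ would mean a visit $[x,u,x]$ using segment $ux$ twice; but then on a $2$-regular cycle the only way to continue the walk and still cover the other incident segment $uy$ is via another visit, and one checks (using that $ux$ would then fail to be a base of $u$, or would be a useful segment) that this contradicts the hypotheses. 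More directly: after node expansions, Section~\ref{S:nodes}'s construction removes spurs at original nodes, and a spur involving a degree-$2$ node $u$ makes the segment $ux$ a base of $u$ that is its unique base (every visit to $u$ touches $ux$), and one shows $ux$ is then safe and useful — contradiction. I would assemble these three facts ($2$-regular, connected, spur-free) into the statement. The routine verifications (positive-length edges, no degree-$0/1$ nodes, the precise bookkeeping that makes a sole base safe) I would relegate to short parenthetical remarks, keeping the spotlight on the degree-reduction step, which is the crux.
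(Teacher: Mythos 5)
Your overall plan matches the paper's: force $G$ to be $2$-regular, then conclude it is a single simple cycle, then rule out spurs. But the central step---the degree reduction---is where you have a genuine gap, and you half-acknowledge it yourself. The local claim you lean on, ``if $\deg(u)\ge 3$ then some segment incident to $u$ is useful,'' is false. A node $u$ of degree $\ge 3$ can have at most one base $uv$ (two bases would force $\deg(u)=2$, since every visit to $u$ would then have to use both); if that single base $uv$ happens not to be a base of $v$, then $uv$ is not safe and \emph{no} segment incident to $u$ is useful. Your proposed patch (``$G$ has no useful segment means along any such base edge both endpoints have a second base, hence every node has $\ge 2$ bases, then a counting argument...'') does not close this hole: the ``both endpoints have a second base'' implication only applies to \emph{safe} base edges, so it says nothing about a node whose lone base is one-sided, and the counting argument you gesture at is never pinned down. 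You would still need a non-local argument to rule out a chain of one-sided bases. A separate small error: node expansion does not remove spurs at the original nodes, so your ``More directly: Section 3's construction removes spurs at original nodes'' fallback is not available.

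The paper takes a cleaner route that avoids degree-case analysis entirely: let $H$ be the subgraph of $G$ consisting of all \emph{safe} segments. Since each safe segment at $u$ is a base of $u$, and no node has more than two bases, $\deg_H(u)\le 2$, so $H$ is a disjoint union of paths and cycles. A path component of $H$ has an endpoint segment that is safe and is the unique base of its endpoint, hence useful; since no useful segment exists, $H$ has only cycle components, so every node has two safe segments, hence exactly two bases, hence degree $2$ in $G$ (two bases force degree $2$). Because $G$ is the connected image of the closed curve $P$, $G$ is a single simple cycle; and a node with two bases is never a spur (each visit must use both incident segments), so $P$ has no spurs. If you want to retain your ``base pointer'' digraph picture, the correct global step is a chain argument: a node of out-degree one whose base is not safe forces its target to also have out-degree one and a non-safe base (else a useful segment appears), so you get a directed cycle of one-sided bases, and tracking the walk $P$ along that cycle forces $P$ to traverse it indefinitely, a contradiction. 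Either way, what's missing from your write-up is precisely this global argument, not a local degree count.
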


\begin{proof}
Let $H$ be the subgraph of all safe segments of $G$.  The degree of any node in $H$ is at most the number of bases of that node in $G$.  Since no node in $G$ can have more than two bases, $H$ is the union of disjoint paths and cycles.  If some component of $H$ is a path, the first (or last) segment in that path is useful.  Otherwise, every node in $G$ has two bases, and therefore has degree $2$; because $G$ is a connected planar straight line graph, it must be a simple polygon.  Moreover, because every node in $G$ has two bases, $P$ has no spurs.
\end{proof}
Lemma \ref{L:end} implies that when our main loop ends, we can invoke our algorithm for spur-free polygons in Section \ref{S:nodes}. But this is overkill.  Because $P$ has no spurs, $P$ must traverse every segment of $G$ the same number of times.  It follows that the $2$-regular plane graph $\tilde{Q}$ constructed in the inflation phase of the algorithm in Section \ref{S:nodes} will consist of $n(uv)$ parallel copies of $G$.  Thus, when the main loop ends, $P$ is weakly simple if and only if $n(uv) = 1$, for any single segment $uv$.

\subsection{Termination and Analysis}

Like Cortese \etal, we prove that our algorithm halts using a potential argument.  Let $\abs{P}$ and $\abs{G}$ respectively denote the number of vertices in polygon $P$ and the number of nodes in its image graph~$G$; our potential function is $\Phi(P,G) := \abs{P} - \abs{G}$.   Clearly $\Phi(P,G)$ is always non-negative.  Our preprocessing phase at most doubles $\abs{P}$, so at the beginning of the main loop we have $\Phi(P, G) \le 2n$.  (Cortese \etal\ used the potential $\sum_{e} n(e)^2 = O(n^2)$, where the sum is over all segments of $G$; otherwise, our analysis is nearly identical.)
\begin{lemma}
\label{L:potential}
Let $P$ be any polygon whose image graph $G$ has more than one segment, let $\tilde{P}$ be the result of expanding a useful segment of $G$, and let $\tilde{G}$ be the image graph of $\tilde{P}$.  Then $\Phi(\tilde{P}, \tilde{G}) < \Phi(P,G)$.
\end{lemma}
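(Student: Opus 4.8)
The plan is to track how expanding a useful segment $uv$ affects the two quantities $\abs{P}$ and $\abs{G}$ separately, and to show that the net change to $\Phi = \abs{P} - \abs{G}$ is strictly negative. First I would unpack the effect of a segment expansion on the image graph. Expanding $uv$ deletes $uv$ and the nodes $u$ and $v$ (at least, it splits them apart), and introduces one new boundary node $[us]$ on $\bdry D_s$ for each distinct segment $s$ incident to $uv$ at $u$, and similarly at $v$. If $u$ has degree $d_u$ in $G$ (counting $uv$), the expansion removes $u$ and creates $d_u - 1$ new nodes on the $u$-side; likewise $d_v - 1$ on the $v$-side. So $\abs{\tilde{G}} - \abs{G} = (d_u - 1) + (d_v - 1) - 2 = d_u + d_v - 4$.

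Next I would compute the change in $\abs{P}$. Each vertex of $P$ that lies on the segment $uv$ or at one of its endpoints gets modified: a maximal subpath of $P$ inside $D_s$ is replaced by a single line segment, hence by (at most) one interior vertex on $\bdry D_s$ at each end. Because $uv$ is \emph{safe}, it is a base of both $u$ and $v$, so every occurrence of $u$ in $P$ is adjacent to $v$ and vice versa; this means each traversal of $uv$ together with its neighboring corners collapses cleanly, and the bookkeeping reduces to: each of the $n(uv)$ traversals of $uv$ contributes a fixed decrease, while the $n(u)$ occurrences of $u$ and $n(v)$ occurrences of $v$ contribute the new boundary vertices. Since $uv$ is a base of $u$, we have $n(u) = n(uv)$ when $uv$ is the only base of $u$ (and similarly for $v$); the hypothesis that $uv$ is \emph{useful} is exactly what guarantees it is the sole base of at least one endpoint, say $u$, so $n(u) = n(uv)$. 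I would then show $\abs{\tilde{P}} - \abs{P} \le (\abs{\tilde{G}} - \abs{G}) - 1$, with the extra $-1$ coming precisely from this uniqueness: on the $u$-side, the $n(u)$ occurrences of $u$ are replaced along the boundary in a way that merges one corner's worth of vertices that would otherwise remain, because there is no second base to keep $u$ "anchored." Adding the two differences then gives $\Phi(\tilde{P}, \tilde{G}) = \abs{\tilde P} - \abs{\tilde G} \le (\abs{P} - \abs{G}) - 1 < \Phi(P,G)$.

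The cleanest way to organize the vertex count, and the step I expect to be the main obstacle, is the careful accounting of exactly which vertices of $P$ survive the expansion. The replacement of a subpath in $D_s$ by a segment can in principle merge several consecutive vertices of $P$ into one, and I must argue that in the \emph{useful}-segment case at least one strict merge occurs beyond what is forced by the change in $\abs{G}$. The key local fact is that a useful segment is the only base of an endpoint $u$, so $u$ has degree $\ge 2$ with one incident segment "used up" as the base; when we expand, every traversal entering $D_u$ along $uv$ and leaving along some $ux$ contributes a node $[ux]$, but the node $[uv]$ itself is absorbed into the expansion of $v$'s side (or, if $uv$ is also $v$'s only base, the whole segment disappears and the saving is even larger). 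I would make this precise by a small case analysis on whether $uv$ is the unique base of one or both endpoints, and on whether $P$ has a spur on $uv$; in every case the inequality $\abs{\tilde P} - \abs{\tilde G} \le \abs{P} - \abs{G} - 1$ holds, and the lemma follows. (The degenerate sub-case where $d_u = d_v = 2$ and no spur lies on $uv$ is exactly the \emph{useless} case excluded by hypothesis, which is the sanity check that the bound is tight.)
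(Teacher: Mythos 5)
Your high-level structure matches the paper's: compute $\abs{\tilde{G}} - \abs{G} = \deg(u) + \deg(v) - 4$, then bound $\abs{\tilde{P}} - \abs{P}$, and note that usefulness rules out the one degenerate sub-case. But two of the specific facts you lean on are false, and they mask where the real argument lives.

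First, the claim "$uv$ is the only base of $u$ $\Rightarrow n(u) = n(uv)$" is wrong. Since $uv$ is a base of $u$, every occurrence of $u$ has at least one incident edge on $uv$, and a spur $[v,u,v]$ has two, so in general $n(uv) = n(u) + \sigma(u)$, where $\sigma(u)$ counts spurs of the form $[v,u,v]$; thus $n(u) = n(uv)$ iff there are \emph{no} such spurs, which is independent of how many bases $u$ has. (Indeed, the implication runs the other way: if $uv$ is \emph{not} the only base of $u$, then $u$ has a second base $uw$, every occurrence of $u$ is $[v,u,w]$, and there are no spurs.) Second, "a useful segment is the only base of an endpoint $u$, so $u$ has degree $\ge 2$" is also wrong: if $\deg(u) = 1$, then $uv$ is trivially the only base of $u$ and hence useful, yet $\deg(u) = 1$; this is one of the cases the paper must (and does) treat separately. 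You also refer to "the node $[uv]$" being "absorbed into the expansion of $v$'s side"; in a \emph{segment} expansion the ellipse $D_{uv}$ contains both $u$ and $v$, so $\bdry D_{uv}$ never meets the segment $uv$ and no node $[uv]$ exists — you appear to be importing a picture from node expansion.

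The paper's route avoids all of this. Because $uv$ is safe, every maximal subpath of $P$ inside $D_{uv}$ contains at least two vertices, and each is replaced by a two-vertex segment; hence $\abs{\tilde{P}} \le \abs{P}$ unconditionally. That alone finishes the case $\deg(u) + \deg(v) > 4$, since then $\abs{\tilde{G}} > \abs{G}$. The only remaining cases are low-degree ones: $\deg(u)=\deg(v)=1$ is impossible (it forces $G$ to be a single segment); $\deg(u)=1$ forces a spur $[v,u,v]$; and $\deg(u)=\deg(v)=2$ together with usefulness forces a spur $[v,u,v]$ or $[u,v,u]$. In each remaining case the spur makes some maximal subpath have at least three vertices, so $\abs{\tilde{P}}$ drops strictly, and $\Phi$ decreases. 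The extra decrease therefore comes from a spur — not from the uniqueness of the base as such, and not from a relation between $n(u)$ and $n(uv)$. To repair your plan, replace the $n(u)=n(uv)$ reasoning with the observation that usefulness plus low degree forces a spur on $uv$, and add the $\deg(u)=1$ case to your case analysis.
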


\begin{proof}
Let $uv$ be a useful segment of $G$, where without loss of generality we have $\deg(u) \le \deg(v)$.  Because $uv$ is safe, every maximal subpath of $P$ that lies in the ellipse $D_{uv}$ contains at least $2$ vertices, so $\abs{\tilde{P}} \le \abs{P}$.  We easily observe that $\abs{\tilde{G}} = \abs{G} + \deg(u) + \deg(v) - 4$, so if $\deg(u) + \deg(v) > 4$, the proof is complete.  There are three other cases to consider:
\begin{itemize}
\item If $\deg(u) = \deg(v) = 1$, then $uv$ is the only segment of $G$, which is impossible.
\item Otherwise, if $\deg(u) = 1$, then $P$ must contain the spur $[v,u,v]$, which implies $\abs{\tilde{P}} \le \abs{P} - 2$ and therefore $\Phi(\tilde{P}, \tilde{G}) \le \Phi(P,G) - (\deg(v) - 1)\le \Phi(P,G) - 1$.
\item Finally, suppose $\deg(u) = \deg(v) = 2$.  Because $uv$ is useful, $P$ must contain one of the spurs $[v,u,v]$ or $[u,v,u]$, which implies $\abs{\tilde{P}} \le \abs{P} - 2$ and therefore $\Phi(\tilde{P}, \tilde{G}) \le \Phi(P,G) - 2$. \qed
\end{itemize}
\unskip
\end{proof}
\unskip
Lemma \ref{L:potential} immediately implies the main loop of the algorithm ends after at most $2n$ segment expansions.  Because the potential $\Phi$ decreases at every iteration, the polygon never has more than $2n$ vertices.  We can find a useful segment in the image graph (if one exists) and perform a segment expansion in $O(n)$ time by brute force.  Thus, a naïve implementation of our algorithm runs in $O(n^2)$ time.

\section{Fast Implementation}

\label{sec:implement}
\label{S:faster}

Finally, we describe a more careful implementation of the previous algorithm that runs in $O(n\log n)$ time.  We build the image graph $G$ and perform the initial node expansions in $O(n\log n)$ time, just as in the previous sections.  After building some necessary data structures, we repeatedly expand useful segments until we either find a local crossing or there are no more useful segments.

\subsection{Data Structures}

Our algorithm uses the following data structures. We maintain the image graph $G$ in a standard data structure for planar straight-line graphs, such as a doubly-connected edge list \cite{bcko-cgaa-08}.
The polygon $P$ is represented by a circular doubly-linked list that alternates between vertex records and edge records; each vertex in $P$ points to the corresponding node in $G$.

Call an edge of $P$ \EMPH{simple} if neither endpoint is a spur and \EMPH{complex} otherwise.  For each segment $uv$, we separately maintain a doubly-linked list $\emph{Simple}(uv)$ of all simple edges of $P$ that coincide with $uv$, and a doubly-linked list $\emph{Complex}(uv)$ of all complex edges of $P$ that coincide with $uv$. Every record in these lists has pointers both to and from the corresponding edge record in the circular list representing~$P$.  Each of these lists also maintains its size.

Each node $u$ in $G$ also maintains a pointer to its base (or bases).  Finally, we maintain a global queue of all useful segments in $G$.  All of these data structures can be constructed in $O(n)$ time after the preprocessing phase.

\subsection{Segment Expansion}

\def\Spurs{\sigma}
\def\Notspurs{\phi}

Our algorithm divides each segment expansion into the following four phases: (1) remove all spurs at the endpoints $u$ and $v$; (2) compute the nodes $[ua]$ and $[vz]$ in cyclic order around the ellipse $\bdry D_{uv}$; (3) straighten the remaining paths through $u$ and through $v$; (4) build the graph $G_D$, check for simple crossings, and update the image graph $G$.  Our analysis uses the following functions of $u$ (and similar functions of $v$), all defined just before the segment expansion begins:
\begin{itemize}\itemsep0pt
\item \EMPH{$\deg(u)$} is the number of segments incident to $u$, including $uv$.
\item \EMPH{$n(u)$} is the number of vertices of $P$ that coincide with $u$.
\item \EMPH{$\Spurs(u)$} is the number of spurs of $P$ that coincide with $u$.
\item \EMPH{$\Notspurs(u)$}${}= n(u) - \Spurs(u)$ is the number of vertices at $u$ that are not spurs.
\end{itemize}

\paragraph{Phase 1: Remove spurs.~}
We remove all spurs at $u$ and $v$ by brute force, by traversing $uv$’s list of complex edges.  Each time we remove a spur $s$, we check the edges of $P$ immediately before and after $s$, and if necessary, move them from $\emph{Simple}(s)$ to $\emph{Complex}(s)$ or vice versa.  We also update the count $n(u)$ and $n(v)$. After this phase, every maximal subpath of $P$ inside the disk $D_{uv}$ has length at most $3$.  The total time for this phase is $O(\Spurs(u) + \Spurs(v))$.  

\paragraph{Phase 2: Compute sequence of new nodes.~}
Next, we compute the intersection points $G \cap \bdry D$ in cyclic order by considering the segments incident to $u$ in cyclic order, starting just after $uv$, and then the segments incident to $v$ in cyclic order, starting just after $uv$.  These two cyclic orders are accessible in the doubly-connected edge list representing $G$.  For each intersection point $[ua]$ or $[vz]$, we initialize a new node record with a pointer to its base $[ua]a$ or $[vz]z$.  At this point, we can update the queue of useful segments.

Finally, we find a segment $ua^*$ with maximum weight $n([ua^*])$ and a segment $vz^*$ with maximum weight $n([vz^*])$, such that $a^*\ne v$ and $z^*\ne u$.  To simplify notation, let $u’$ denote the point $[ua^*] = ua^* \cap \bdry D$ and let $v’$ denote the point $[vz^*] = vz^* \cap \bdry D$.  The total time for this phase is $O(\deg(u)  + \deg(v))$.

\paragraph{Phase 3: Expansion.~}
The third phase straightens the remaining constant-length paths through $u$ and $v$ \emph{except} for subpaths $[a^*, u, v, z^*]$ or $[a^*, u, a^*]$ or $[z^*, v, z^*]$.  Specifically, for each segment $ua$ with $a\ne a^*$, we replace subpaths of $P$ containing segment $ua$ with corresponding paths through the new node $[ua]$.  Specifically: 
\begin{itemize}\itemsep0pt
\item $[a, u, v]$ becomes $[a, [ua], v]$
\item $[a, u, a]$ becomes $[a, [ua], a]$
\item $[a, u, b]$ becomes $[a, [ua], [ub], b]$
\end{itemize}
Then for each segment $vz$ with $z\ne z^*$, we similarly replace subpaths of $P$ that contain $vz$ with paths through the new node $[vz]$.  Each path replacement takes $O(1)$ time, including the time to update the relevant simple- and complex-edge lists.

At the end of these two loops, the only remaining subpaths through $u$ and $v$ have the forms $[a^*, u, v, z^*]$ or $[a^*, u, a^*]$ or $[z^*, v, z^*]$.  To update these subpaths, we intuitively \emph{move} node $u$  to $u’$ and move node $v$ to $v’$.  But in fact, “moving” these two nodes has no effect on our data structures at all; we only change their \emph{names} for purposes of analysis.

The total time for this phase at most $O(\Notspurs(u) - n(u’) + \Notspurs(v) - n(v’))$, where $n(u’)$ and $n(v’)$ denote the number of vertices at $u’$ and $v’$ \emph{after} the segment expansion.

\paragraph{\boldmath Phase 4: Check planarity and update $G$.~}
We discover all the segments in the graph $G_D$ in the previous phase.  If there are more than $2(\deg(u) + \deg(v))$ such segments, then $G_D$ cannot be planar, so we immediately halt and report that $P$ is not weakly simple.  Otherwise, we compute the rotation system of $G_D$ and check its planarity in $O(\deg(u) + \deg(v))$ time, as described in Section \ref{SS:expand-planar}.  If $G_D$ is a plane graph, we splice it into the image graph $G$, again in $O(\deg(u) + \deg(v))$ time.

\medskip
This completes our implementation of segment expansion.

\subsection{Time Analysis and Heavy-Path Decomposition}
\label{S:heavy-path}

The total running time of a single edge expansion is at most
\[
	O(\Spurs(u) + \Spurs(v))  ~+~ O(\Notspurs(u) - n(u’) + \Notspurs(v) - n(v’)) ~+~ O(\deg(u) + \deg(v)).
\]
Since each segment incident to $u$ carries at least one edge, we have $\deg(u) < \Spurs(u) + \Notspurs(u)$, so we can charge the last term to the first two.  The total time spent removing spurs is clearly $O(n)$.  We bound the total remaining time as follows.

For purposes of analysis, imagine building a \EMPH{family tree $T$} of all nodes that our algorithm ever creates.  The root of $T$ is a special node $r$, whose children are the nodes in the initial image graph (after node expansion).  The children in $T$ of any other node $u$ are the new nodes $[ua]$ created during the  segment expansion that destroys $u$.  (Likewise, the children of $v$ are the new nodes $[vz]$.)  Each node $u$ in $T$ has weight $n(u)$, which is the number of vertices located at $u$ just before the segment expansion that destroys $u$, or equivalently, just after the segment expansion that creates $u$.  To simplify analysis, we set $n(r) = n$, the initial number of vertices in $P$.  Let $C(u)$ denote the children of node $u$ in $T$, and let $u’$ denote the maximum-weight child of $u$.  Finally, let $N$ denote the set of all nodes, and let $N’$ denote the set of all maximum-weight children.

Expanding segment $uv$ moves every vertex at $u$ that is not a spur to one of the children of $u$, and then merges pairs of coincident vertices to form spurs; thus,
\[
	\Notspurs(u) ~= \sum_{x\in C(u)} (n(x) + \Spurs(x)).
\]
It follows that
\[
	\Notspurs(u) - n(u’) ~= \sum_{x\in C(u)\setminus\set{u’}} n(x)
						+ \sum_{x\in C(u)} \Spurs(x),
\]
and therefore
\[
	\sum_{u\in N} (\Notspurs(u) - n(u’)) ~=
		\sum_{x\in N\setminus N’} n(x) + O(n).
\]
The following standard heavy-path decomposition argument\cite{ht-fafnc-84, st-dsdt-83}  implies that $\sum_{x\in N\setminus N’} n(x) = O(n\log n)$.   If we remove the vertices in $N'$ from $T$ by contracting each node in $N’$ to its parent, the resulting tree has height $O(\log n)$, and the total weight of the nodes at each level is $O(n)$.
We conclude that the total time spent expanding segments is $O(n\log n)$, which completes the proof of Theorem~\ref{Th:spurs}.



\newpage
\bibliographystyle{newuser}
\bibliography{wspolygon,jeffebib/jeffe,jeffebib/topology,jeffebib/optimization,jeffebib/compgeom}


\newpage
\appendix
\part*{Appendix}

\section{Problems with Previous Definitions}
\label{S:earlier2}

\subsection{Crossing and Self-Crossing}

Many authors have offered the intuition that a polygon is weakly simple if and only if it is not “self-crossing”; indeed, this is the complete definition offered by some authors \cite{ht-pcbet-05,aghtu-acgg-08}.  However, a proper definition of “self-crossing” is quite subtle; we believe the most natural and general definition is the following.  Two paths $P$ and $Q$ are \EMPH{weakly disjoint} if, for all sufficiently small $\e>0$, there are disjoint paths $\tilde{P}$ and $\tilde{Q}$ such that $\Frechet(P, \tilde{P}) < \e$ and $\Frechet(Q, \tilde{Q}) < \e$.  Two paths \EMPH{cross} if they are not weakly disjoint.  Finally, a closed curve is \EMPH{self-crossing} if it contains two crossing subpaths.

However, this is \emph{not} the definition of “crossing” that most often appears in the computational geometry literature; variants on the following combinatorial definition are much more common \cite{t-cgpis-89,ksn-spppo-99,do-gfalo-07,b-pvgrp-93}.  Two polygonal chains $P = [p_0, p_1, \dots, p_\ell]$ and $Q = [q_0, q_1, \dots, q_\ell]$ with length $\ell \ge 3$ have a \EMPH{forward crossing} if they satisfy two conditions:
\begin{itemize}\itemsep0pt
\item
$p_i = q_i$ for all $1\le i\le \ell-1$, and
\item
the cyclic order of $p_0, q_0, p_2$ around~$p_1$ is equal to the cyclic order of $p_\ell, q_\ell, p_{\ell-2}$ around~$p_{\ell-1}$.
\end{itemize}
Polygonal chains $P$ and $Q$ have a \EMPH{backward crossing} if $P$ and the reversal of $Q$ have a forward crossing.  (See Figure \ref{F:crossing}.)  Finally, two polygonal chains $P$ and $Q$ \EMPH{cross} if some subpaths of $P$ and $Q$ have a simple crossing, a forward crossing, or a backward crossing.

\begin{figure}[htb]
\centering
\includegraphics[scale=0.4]{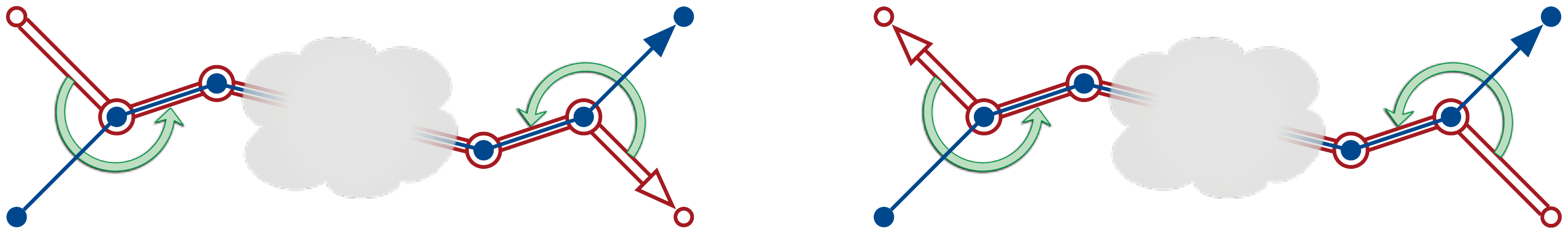}
\caption{A forward crossing and a backward crossing.  The paths coincide within the cloud.}
\label{F:crossing}
\end{figure}

These two definitions appear to be equivalent for polygonal chains \emph{without spurs}, but they are not equivalent in general; Figure \ref{F:spur-crossing} shows two simple examples where the definitions differ.   In fact, we know of no combinatorial definition of “crossing” that agrees with our topological definition for arbitrary polygonal chains with spurs.

\begin{figure}[htb]
\centering
\includegraphics[scale=0.4]{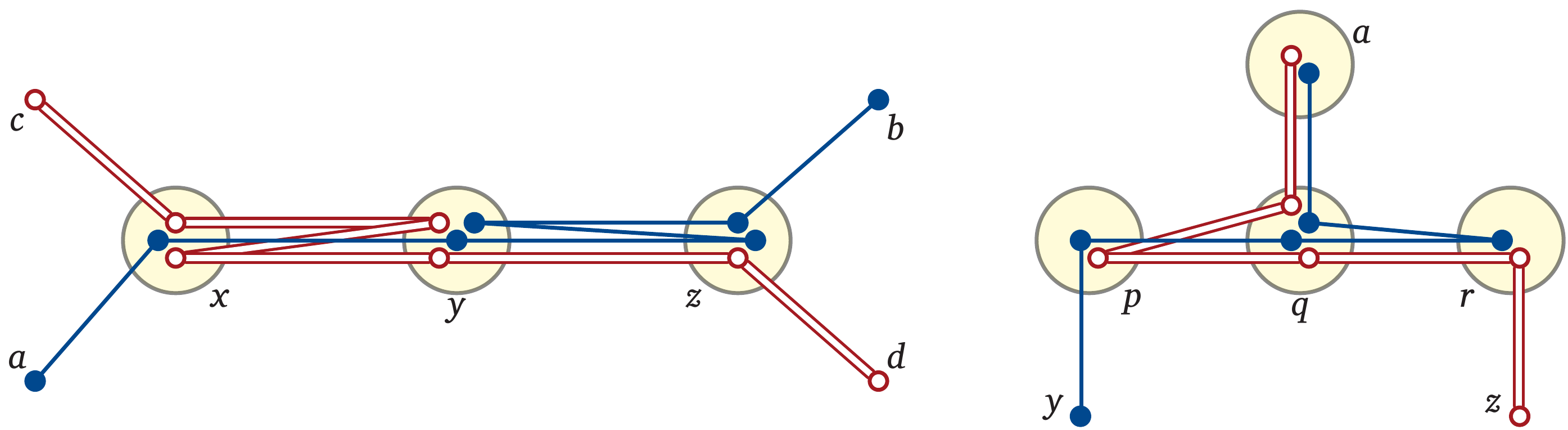}
\caption{Two pairs of crossing paths that do not satisfy the combinatorial definition.  Vertices in each small circle coincide.}
\label{F:spur-crossing}
\end{figure}

Several other authors (including the second author of this paper) have proposed the following more intuitive definition \cite{st-csp-92, ylw-spnprp-97, noncrossing}: Two paths $P$ and~$Q$ “cross” if, for arbitrarily small neighborhoods $U$ of some common subpath, path $P$ contains a point in more than one component of $\bdry U \setminus Q$; see Figure \ref{F:bad-crossing}(a).  
This definition is correct when $P$ and $Q$ are \emph{simple}, but it yields both false positives and false negatives for non-simple paths, even without spurs.  
For example, the paths $[w,x,y,z,x,y]$ and $[z,x,y,z,x,w]$ in Figure \ref{F:bad-crossing}(b) cross but do not satisfy this definition (because~$P$ has \emph{no} points in $\bdry U \setminus Q$), and the paths $[a,x,y,z,b]$ and $[c,x,y,z,d]$ in Figure \ref{F:bad-crossing}(b) do not cross but satisfy this definition (because small neighborhoods of the common subpath are not simply connected).

\begin{figure}[htb]
\centering\footnotesize\sffamily
\begin{tabular}{ccc}
	\includegraphics[scale=0.4]{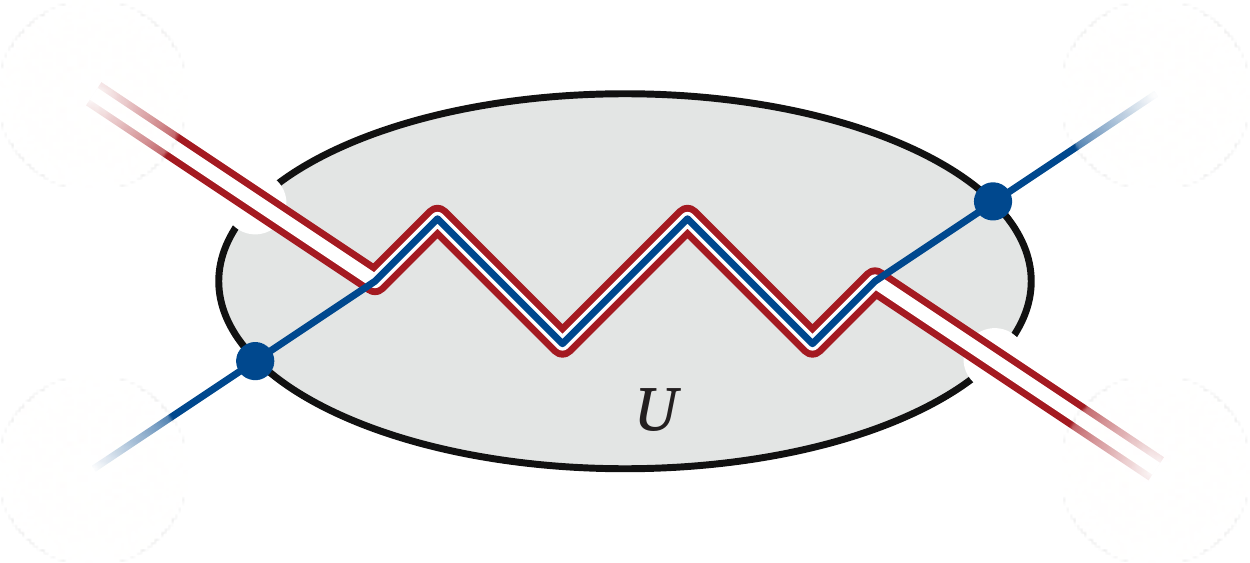} &
	\includegraphics[scale=0.4]{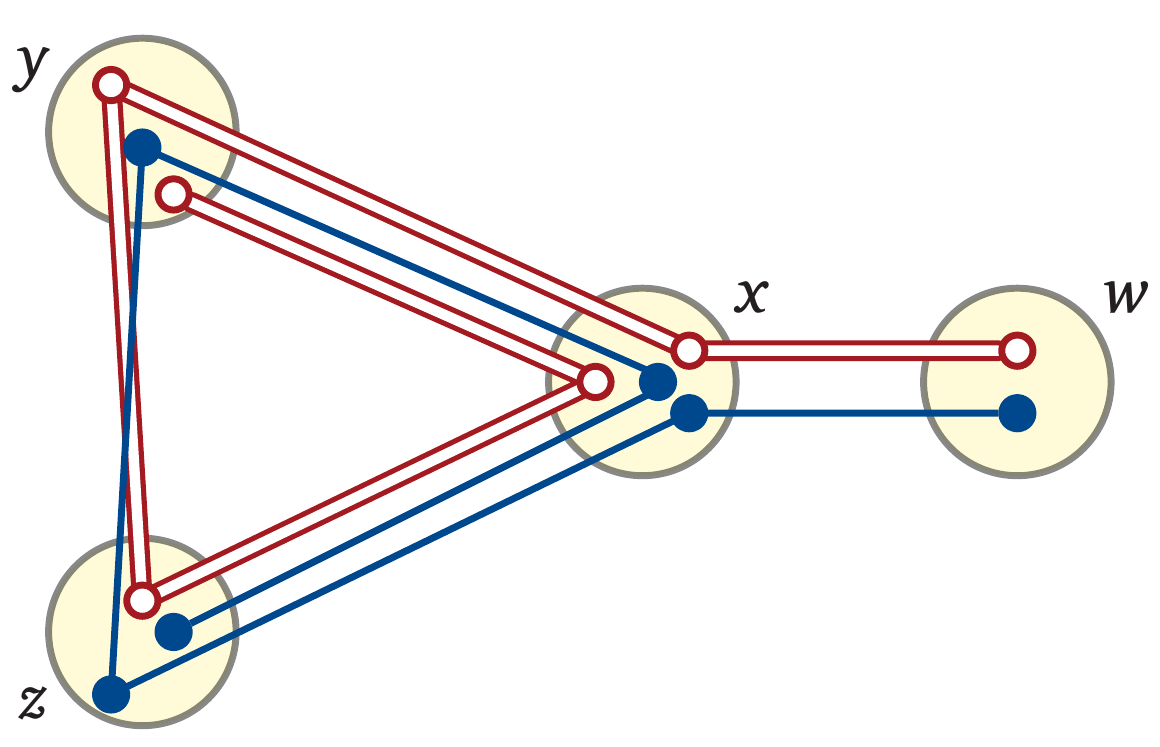} &
	\includegraphics[scale=0.4]{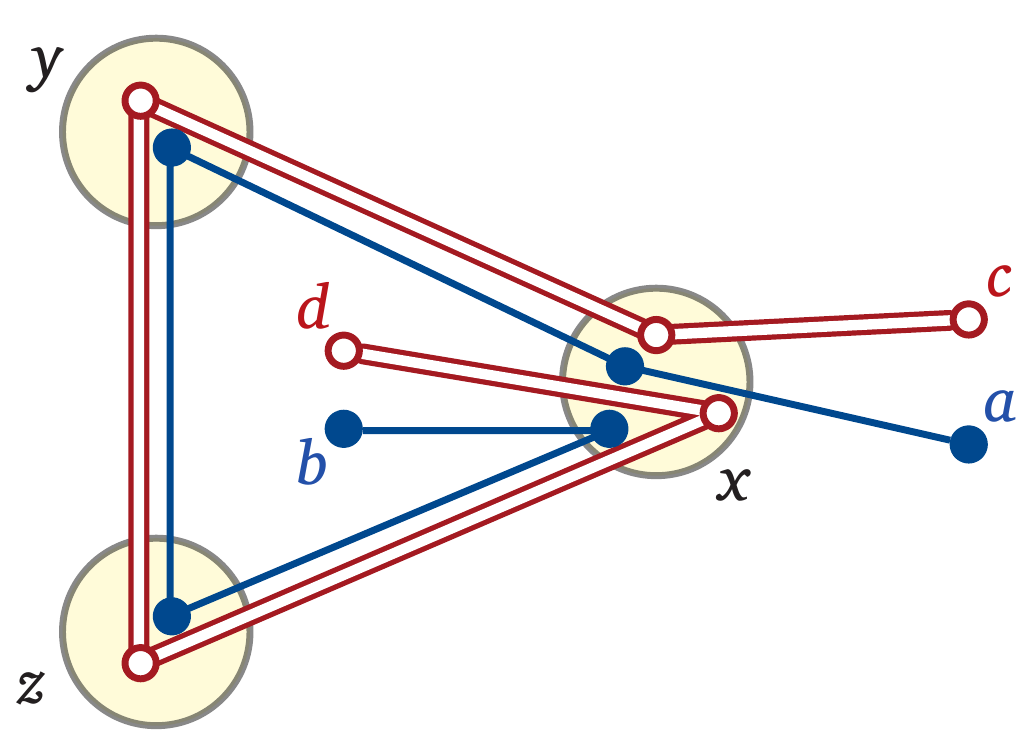}
	\\[-1ex]
	(a) & (b) & (c)
\end{tabular}
\caption{(a) A common intuitive definition of “crossing”.  (b) Crossing paths that do not satisfy this definition. (c)~Non-crossing paths that do satisfy this definition.  Vertices in each small circle coincide.}
\label{F:bad-crossing}
\end{figure}

\subsection{Weak Simplicity}

\begin{wrapfigure}{r}{1.75in}
\centering
\vspace{-2ex}
\includegraphics[scale=0.125]{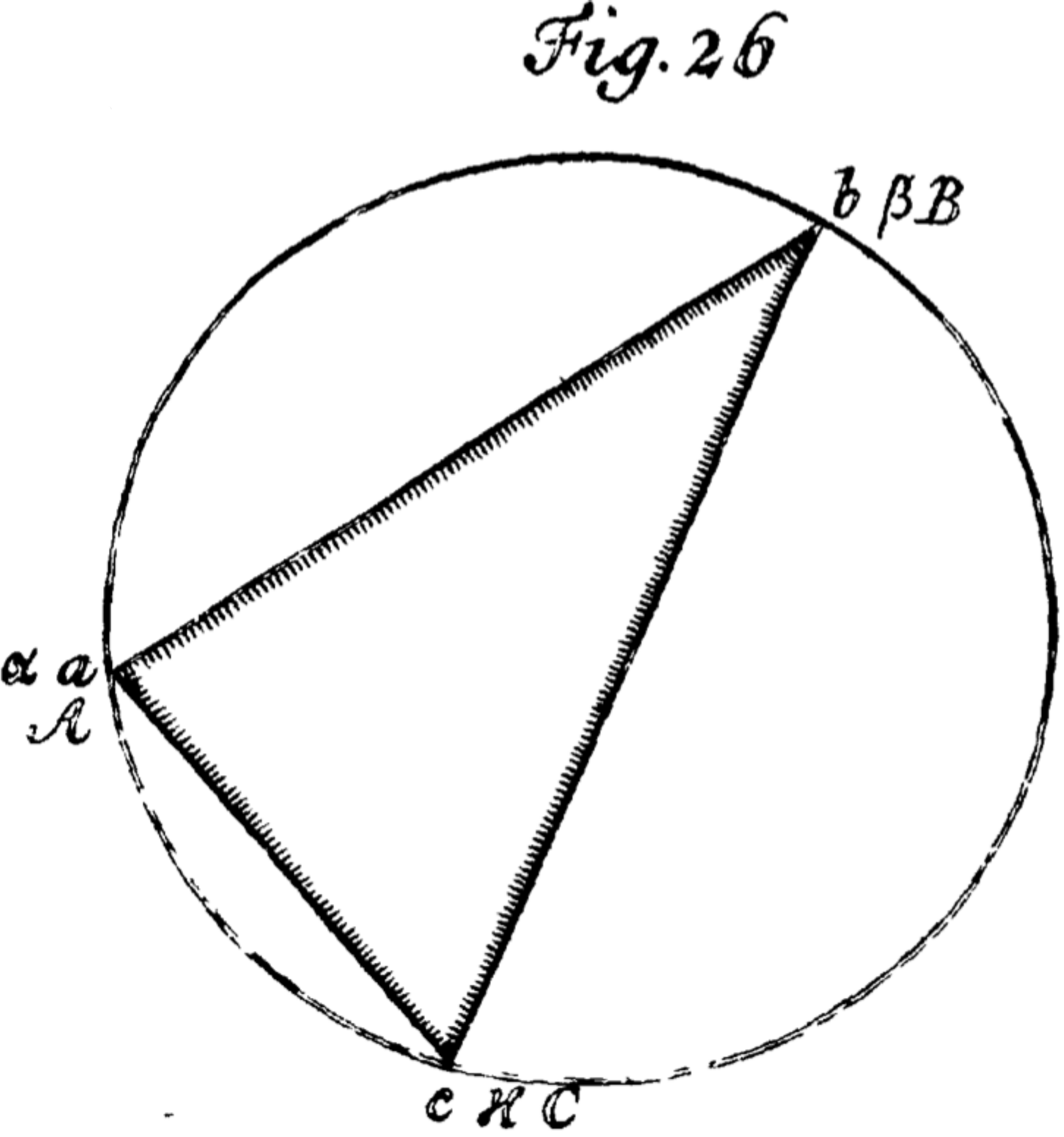}
\caption{Meister's enneagon $[a, b, c, \alpha, \beta, \kappa, A, B, C]$ is neither weakly simple nor self-crossing.}
\vspace{-1ex}
\end{wrapfigure}

Avoiding self-crossings is a \emph{necessary} condition for a polygon to be weakly simple, but it is not sufficient.  For example, a polygon that wraps 3 times around triangle (as considered by Meister \cite{m-ggfpi-1769}) and the polygon $(a,x,b,x,c,x,a,x,b,x,c,x)$ in Figure \ref{F:first-example} are neither self-crossing nor weakly simple.

Toussaint \cite{t-cgpis-89, t-sspst-89, st-csp-92, b-pvgrp-93} defines a polygon to be weakly simple if its \emph{rotation number} is either $+1$ or~$-1$ and any pair of points splits the polygon into two paths that cross.  
Here, the \EMPH{rotation number} of a polygon is the sum of the signed external angles at the vertices of the polygon, divided by $2\pi$, where each external angle is normalized to the interval $(-\pi, \pi)$.  
Unfortunately, the rotation number of a polygon with spurs is undefined, since there is no way to determine locally whether the external angle at a spur should be $\pi$ or $-\pi$.  
As a result, Toussaint's definition can only be applied to polygons without spurs.

Even for polygons without spurs, there is a more subtle problem with Toussaint's definition.  Consider the 14-vertex polygon $({a,b,c,a,b,c,a,x,\!y,z,x,\!y,z,x})$ shown in Figures~\ref{F:just-bad-polygon} and~\ref{F:bad-polygon}.  This polygon contains exactly two crossings: one between subpaths $[x,a,b,c,a,b]$ and $[c,a,b,c,a,x]$, and the other between subpaths $[a,x,\!y,z,x,\!y]$  and $[z,x,\!y,z,x,a]$.  However, both pairs of crossing subpaths overlap not just geometrically, but combinatorially, as substrings of the polygon's vertex sequence.  In short, a polygon (or polygonal chain) can cross itself without being divisible into two paths that cross each other!  Demaine and O'Rourke's definition of a self-crossing linkage configuration \cite{do-gfalo-07} has the same problem. (Their definition also does not consider the possibility of backward crossings.)

\begin{figure}[htb]
\includegraphics[scale=0.4]{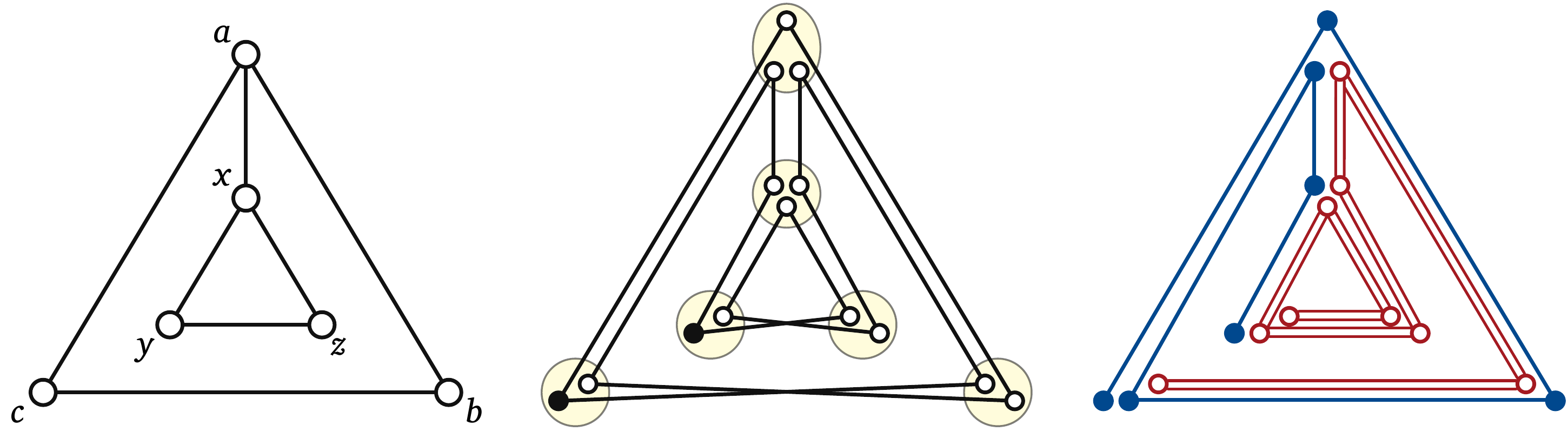}
\centering
\caption{A polygon $(a,b,c,a,b,c,a,x,y,z,x,y,z,x)$ that is not weakly simple, even though its rotation number is $1$ and every pair of vertices splits the polygon into two paths that do not cross.}
\label{F:bad-polygon}
\end{figure}

Finally, it is unclear how we could use the combinatorial definitions of “crossing” and “self-crossing” that correctly handle these subtleties to quickly determine whether a polygon is weakly simple.  
Given a polygon $P$ without spurs, there is a natural algorithm to determine whether $P$ is self-crossing in $O(n^3)$ time---Find all maximal coincident subpaths via dynamic programming, and check whether each one is a forward or backward crossing---but this is considerably slower than the algorithms we derive from the  topological definition of “weakly simple”.

Several other papers offer definitions of “weakly simple” that are overly restrictive \cite{rw-acpgg-12, hst-pbet-04, bm-corgg-04, mn-lpcgc-99, bcdt-cordg-14}. For example, the LEDA software library \cite{mn-lpcgc-99} defines a polygon to be weakly simple it it has coincident vertices but otherwise disjoint edges;  Bose and Morin \cite{bm-corgg-04, bcdt-cordg-14} define a polygon to be weakly simple if the graph~$G$ defined by its vertices and edges is plane, the outer face of~$G$ is a cycle, and one bounded face of $G$ is adjacent to all vertices.  It is straightforward to construct weakly simple polygons that contradict these definitions.

Finally, several recent papers define weak simplicity directly in terms of vertex perturbation, what we are calling \emph{rigid} weak simplicity \cite{dt-loncg-09, hst-pbets-10, ist-dcgm-13, dfor-cpst-10}, without mentioning any connection to the more general (and arguably more natural) definition in terms of Fréchet distance.\footnote{As of June 2014, Wikipedia \cite{w-sp-14} offers two definitions of “weakly simple polygon”, both of which are completely wrong.  In particular, every \emph{planar straight-line graph} is the limit, in the Hausdorff metric, of a sequence of simple polygons with a fixed number of vertices.}

Again, we emphasize that the \emph{algorithms} in all the papers we discuss in this section appear to be correct, and the simple polygons they construct are consistent with the corresponding papers' definitions.

\section{Proof of Theorem \ref{Th:rigid}}
\label{S:rigid}

In this section, we prove Theorem \ref{Th:rigid}: Every weakly simple polygon with more than two vertices is rigidly weakly simple.
In fact we will prove a stronger result; as explained in Sections~\ref{S:earlier} and \ref{S:earlier2}, most of the definitions of ``weakly simple'' in the past are either incorrect or restricted.  
However, several related notions turn out to be equivalent to our formal definition of weak simplicity.
Here is the statement of the full theorem we are about to prove; we defer the definitions of the bold terms  to later subsections.
\begin{theorem}
\label{Th:eq}
Let $P$ be a polygon with more than two vertices.  
The following statements are equivalent:
\begin{enumerate}[(a)]\cramped
\item $P$ is weakly simple.
\item $P$ is a \textbf{compound-planar rigid clustered cycle} \cite{cbpp-ecpg-09}.
\item $P$ is \textbf{strip-weakly simple}.
\item $P$ has a \textbf{self-touching configuration} \cite{cdr-ilstl-02}.
\item $P$ is rigidly weakly simple.
\end{enumerate}
\end{theorem}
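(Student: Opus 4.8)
The plan is to establish the cyclic chain of implications (a) $\Rightarrow$ (c) $\Rightarrow$ (b) $\Rightarrow$ (d) $\Rightarrow$ (e) $\Rightarrow$ (a). The last implication is already in hand: for two polygons with the same number of vertices we have $\Frechet(P,Q) \le \Vdist(P,Q)$, so a rigidly weakly simple polygon is weakly simple. Hence all the work is in the four forward implications, and --- as Theorem~\ref{Th:rigid} already warns --- the only genuinely deep ingredient is (d) $\Rightarrow$ (e).

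For (a) $\Rightarrow$ (c), I would start from a sequence of simple closed curves $Q_k$ with $\Frechet(P,Q_k)\to 0$. Fix a scale $\e$ much smaller than the minimum distance between any two non-coincident features (distinct nodes, or disjoint segments) of the image graph of $P$ after node expansion, pick $k$ with $\Frechet(P,Q_k)<\e$, together with a reparametrization $\rho$ of $S^1$ satisfying $\max_t d(P(\rho(t)),Q_k(t)) < \e$. The preimage under $P\circ\rho$ of each node-disk $D_u$ is a union of arcs of $S^1$; using a two-scale choice of the arc endpoints one argues that the corresponding arcs of $Q_k$ stay inside a controlled neighborhood of $D_u$, and likewise that each arc of $Q_k$ matched to an edge of $P$ stays inside the thin strip around the corresponding segment of the image graph. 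Recording which arc of $Q_k$ threads which strip and which node-disk exhibits $Q_k$ as a simple closed curve that lives in the thickened image of $P$ and follows $P$'s combinatorial pattern --- which is exactly what strip-weak simplicity asks for. I expect this implication to be the main obstacle: converting the purely topological Fréchet-limit hypothesis, with its arbitrary near-identity reparametrizations and no a priori control on how $Q_k$ behaves at small scales near the degenerate features of $P$ (spurs, repeated edges, coincident vertices), into the rigid discrete data of a strip realization without losing combinatorial information is where the two-scale argument and the careful choice of $\e$ do the real work.

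The implications (c) $\Rightarrow$ (b) and (b) $\Rightarrow$ (d) are, for the most part, translations between formalisms. A strip-weakly-simple realization, with each node-disk contracted to a point, is precisely a compound-planar drawing of the rigid clustered cycle associated to $P$ in the sense of Cortese \etal~\cite{cbpp-ecpg-09}: the strips become the pipes and the node-disks the clusters. For (b) $\Rightarrow$ (d), push every edge of $P$ onto the core segment of its strip, so that edges sharing a segment of the image graph overlap exactly, and then read off, for each resulting pair of touching bars (and for each bar passing through a joint), which side of the other it lies on, using the compound-planar drawing. Because that drawing has no crossings, these side labels satisfy the global consistency and local planarity conditions that make $P$, together with this labeling, a self-touching configuration of its edge linkage in the sense of Connelly \etal~\cite{cdr-ilstl-02}.

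Finally, (d) $\Rightarrow$ (e) is where we invoke Ribó Mor's theorem~\cite[Theorem~3.1]{r-rcpps-06} (the former conjecture of Connelly \etal~\cite[Conjecture~4.1]{cdr-ilstl-02}): every self-touching framework lies in the closure of the set of configurations of the same linkage in which no two bars touch; equivalently, an arbitrarily small perturbation of the joints turns a self-touching configuration into an embedded one. Applied to the side-labeled edge linkage produced above, this gives, for every $\e>0$, a simple polygon $Q$ with the same number of vertices as $P$ and $\Vdist(P,Q)<\e$, which is exactly rigid weak simplicity. This step is also where the hypothesis that $P$ has more than two vertices becomes essential: a one- or two-vertex polygon has no embedded realization with the same vertex count, so it can be weakly simple without being rigidly weakly simple.
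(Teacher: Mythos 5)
Your overall plan matches the paper's: the same trivial direction $(\text{e})\Rightarrow(\text{a})$, the same two genuinely hard implications, and the same appeal to Ribó Mor's perturbation theorem for $(\text{d})\Rightarrow(\text{e})$. (The paper actually proves $(\text{b})\Leftrightarrow(\text{c})$ and $(\text{c})\Leftrightarrow(\text{d})$ as two separate equivalences rather than your linear chain $(\text{c})\Rightarrow(\text{b})\Rightarrow(\text{d})$, but this is cosmetic; in both cases those translations are Jordan--Schönflies exercises.)

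However, your $(\text{a})\Rightarrow(\text{c})$ step has a real gap. You argue that a near-enough simple curve $Q_k$ is trapped inside the thickened image $U_\e$, record which strip or node-disk each matched arc of $Q_k$ lies in, and then claim this ``exhibits $Q_k$ as a simple closed curve that lives in the thickened image of $P$ and follows $P$'s combinatorial pattern --- which is exactly what strip-weak simplicity asks for.'' That last claim is false as stated. Containment in $U_\e$ is not the same as respecting the strip system: the definition requires the curve to cross each end $A_{u,v}$ \emph{transversely}, in a \emph{prescribed cyclic order}, never hitting the same end consecutively more than once. Nothing about small Fréchet distance prevents $Q_k$ from wandering back and forth across the boundary circle of a node-disk $D_u$ arbitrarily (indeed uncountably) many times; the portions of $Q_k$ matched to a node of $P$ need not behave monotonely with respect to the arcs of $\bdry D_u$. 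The missing idea --- the real content of the paper's Lemma~\ref{L:eq-ws-ss} --- is an explicit cleanup: decompose $Q_k$ into ``good'' subpaths (those crossing a strip cleanly) and ``bad'' subpaths (everything else, confined to a single node-disk), form for each node $u$ the filled-in region $\widehat{D}_u$ obtained by taking the complement of the unbounded component of the complement of $D_u$ together with its bad subpaths, apply Jordan--Schönflies to push $\widehat{D}_u$ homeomorphically onto a slightly smaller concentric disk $D_u^-$ while fixing the good endpoints, and then reconnect each good endpoint on $\bdry D_u$ to its image on $\bdry D_u^-$ by a short radial segment. Only after this surgery do you have a simple closed curve that genuinely respects the strip system. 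Your ``two-scale choice of $\e$'' (node-disks of radius $\e$, strips of width $\e^2$) is the right scaffolding for this construction, but without describing the modification itself the implication does not go through.
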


These terms are roughly ordered from least restrictive to most restrictive.  The backward implications (e)$\Rightarrow$(d)$\Rightarrow$(c)$\Rightarrow$(b) and (c)$\Rightarrow$(a) all follow immediately from the definitions.  Most of the forward implications also follow directly from the definitions and the classical Jordan-Schönflies theorem: Every simple closed curve is isotopic to a circle.  The only exceptions are (a)$\Rightarrow$(c), which requires a more careful topological argument, and (d)$\Rightarrow$(e), which relies on a nontrivial result of Ribó Mor~\cite[Theorem~3.1]{r-rcpps-06}.  


\subsection{Strip system}

Let $G$ be the graph formed by the image of polygon $P$ in the plane, whose vertices we call \emph{nodes} and whose edges we call \emph{segments}.  For any real number $\e>0$, the \EMPH{$\e$-strip system} of $P$ is a decomposition of a neighborhood of $G$ into the following disks and strips.
\begin{itemize}
\item
For each node $u$ of $G$, let $D_u$ denote the disk of radius $\e$ centered at $u$.  
\item
For each segment $uv$ of $G$, let $S_{uv}$ denote the \EMPH{strip} of points with distance at most $\e^2$ from $uv$ that do not lie in the interior of $D_u$ or $D_v$.
\end{itemize}
The circular arcs $A_{u, v} = S_{uv}\cap D_u$ and $A_{v, u} = S_{uv}\cap D_v$ are called  the \EMPH{ends} of $S_{uv}$.  We assume $\e$ is sufficiently small that these disks and strips are pairwise disjoint except that each strip intersects exactly two disks at its ends.
Finally, let $U_\e$ denote the union of all these disks and strips.

We say that a polygon $P$ is \EMPH{strip-weakly simple} if, for every sufficiently small $\e > 0$, there is a simple closed curve $Q’$ inside the neighborhood $U_\e$ that crosses the disks and strips of the strip system in the same order that $P$ traverses the nodes and segments of $G$.  Formally, if $P = (p_0,p_1,\dots,p_{n-1})$, then the curve $Q’$ intersects ends only transversely, in the cyclic order
\[
A_{p_0,p_1},~ A_{p_1,p_0},~ A_{p_1,p_2},~ A_{p_2,p_1},~ \dots,~ A_{p_{n-1},p_0},~ A_{p_0,p_{n-1}}.
\]
In particular, $Q’$ never intersects the same end consecutively more than once.
Informally, we say that such a curve \EMPH{respects} the strip system of $P$.

Any closed curve $Q$ that respects the $\e$-strip system of $P$ satisfies the inequality $\Frechet(P, Q) < \e$.  It follows immediately that if $P$ is strip-weakly simple, then $P$ is also weakly simple.  The converse implication requires a more careful topological argument.
\begin{lemma}
\label{L:eq-ws-ss}
A polygon $P$ is weakly simple if and only if $P$ strip-weakly simple.
\end{lemma}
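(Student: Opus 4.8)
The plan is to prove the nontrivial direction: if $P$ is weakly simple, then $P$ is strip-weakly simple. (The converse is immediate, since any curve that respects the $\e$-strip system has Fréchet distance less than $\e$ from $P$, as already observed in the text.) So fix a small $\e>0$ defining the strip system $U_\e$, and suppose $Q$ is a simple closed curve with $\Frechet(P,Q)<\e'$ for some $\e'$ to be chosen much smaller than $\e^2$. We want to massage $Q$ into a simple closed curve $Q'$ contained in $U_\e$ that visits the ends of the strips in exactly the cyclic order dictated by the vertex/edge sequence of $P$.

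First I would use the Fréchet reparametrization $\rho$ witnessing $\Frechet(P,Q)<\e'$ to pull back the combinatorial structure of $P$ onto $Q$: this partitions $S^1$ into arcs, one per vertex $p_i$ and one per edge $p_i p_{i+1}$ of $P$, such that on the $i$-th vertex arc $Q$ lies within $\e'$ of the node (hence deep inside $D_{p_i}$), and on the $i$-th edge arc $Q$ lies within $\e'$ of the open segment (hence, away from its endpoints, inside the strip $S_{p_i p_{i+1}}$). Because $\e' \ll \e^2 \ll \e$, each such piece of $Q$ is trapped in the corresponding disk or strip except near the transitions, where it lies in a disk $D_{p_i}$. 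In particular, $Q$ already lies inside $U_\e$ up to the behavior near the arcs $A_{u,v}$ where strips meet disks, and it already crosses these arcs in roughly the right cyclic order. The remaining work is to (i) make every crossing of $Q$ with an end $A_{u,v}$ transverse and make $Q$ cross each end the ``right'' number of consecutive times, and (ii) do all of this while keeping $Q$ simple. Step (i) is a routine transversality/general-position perturbation: push $Q$ so that it meets each circular arc $\bdry D_u$ transversely in finitely many points; since $Q$ is simple this perturbation can be taken arbitrarily small and simplicity is preserved.

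The heart of the argument is to reduce the number of crossings of (the perturbed) $Q$ with each end $A_{u,v}$ down to exactly one per consecutive traversal. Inside a strip $S_{uv}$, which is topologically a disk (an annular neighborhood cut open), consider the arcs of $Q\cap S_{uv}$. The portion of $Q$ corresponding to one edge-traversal of $p_i p_{i+1}$ enters through $A_{p_i,p_{i+1}}$ and leaves through $A_{p_{i+1},p_i}$; I would argue, using the Jordan curve theorem inside the strip together with the fact that $Q$ is simple, that each such sub-arc can be replaced by a single monotone arc crossing each end exactly once, and that different sub-arcs through the same strip can be made disjoint parallel copies, ordered consistently with how $P$'s edge-occurrences should be ordered. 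The analogous but more delicate cleanup happens inside each disk $D_u$: there $Q$ consists of several arcs, each entering and leaving through prescribed ends $A_{u,\cdot}$; I would use the Jordan–Schönflies theorem to assume $D_u$ is a round disk and $\bdry D_u$ a round circle, reduce to a planar ``chord diagram'' whose endpoints on $\bdry D_u$ are prescribed (and, crucially, whose pairing is the one induced by the corners $[x,u,y]$ of $P$), and observe that a simple closed curve inducing a given non-crossing chord diagram can always be redrawn with straight disjoint chords — this is exactly the uniqueness-of-inflation phenomenon already invoked in Section~\ref{SS:inflation}. The upshot: after cleanup, $Q'$ respects the strip system, is simple, and is the image of $P$ under a reparametrization, hence $\Frechet(P,Q')<\e$; and conversely such a $Q'$ exists for every small $\e$, so $P$ is strip-weakly simple.

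The main obstacle, and the step that needs the most care, is guaranteeing that the local cleanup inside disks and strips can be done \emph{simultaneously and consistently} while preserving simplicity: removing an excess crossing of $Q$ with some end $A_{u,v}$ is a local surgery, but it may be obstructed by other strands of $Q$, and one must verify that the innermost such excursion (in the Jordan-curve sense) is always free to be pushed out, so that an innermost-first induction on the total number of ``extra'' end-crossings terminates. I would set this up as a single potential argument: define the complexity of $Q$ to be the total number of transverse intersection points of $Q$ with $\bigcup_u \bdry D_u$, show it is finite after the general-position perturbation, and show that as long as $Q$ does not already respect the strip system there is an innermost bigon-like region bounded by an arc of $Q$ and an arc of some $\bdry D_u$ whose interior meets no other part of $Q$, across which we may slide $Q$ to strictly decrease the complexity without introducing self-intersections. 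When the complexity is minimal, $Q$ meets each end exactly once per traversal and hence respects the strip system, completing the proof.
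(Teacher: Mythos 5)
Your overall framing (the converse is easy; for the forward direction, trap a nearby simple $Q$ inside $U_\e$ and surgically clean it up) matches the paper's strategy, but your cleanup mechanism is different from the paper's and, as written, has a real gap. The paper does the cleanup in one shot: it classifies subpaths of $Q$ as \emph{good} (crossing a strip end-to-end) or \emph{bad} (staying near a single node), notes the good subpaths already cross the ends exactly the required number of times, and then applies the Jordan--Schönflies theorem once per node to a ``thickened'' region $\widehat{D}_u$ (the disk $D_u$ together with any lobes the bad subpaths poke out of it) to push all the bad subpaths into a smaller concentric disk $D^-_u$, reconnecting them to the preserved good endpoints by radial segments. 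Because the good subpaths and their boundary points are never touched, the resulting curve automatically crosses each end the right number of times in the right cyclic order, with no termination argument needed.

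Your iterative innermost-bigon removal, by contrast, has a genuine flaw: the potential (total number of transverse crossings of $Q$ with $\bigcup_u \bdry D_u$) is \emph{not} supposed to be minimized, and the move ``find any innermost bigon and slide across it'' does not know how to stop. Consider a spur $[v,u,v]$ in $P$. The strip system \emph{requires} $Q'$ to cross $A_{u,v}$ twice consecutively, with a subarc of $Q'$ inside $D_u$ joining the two crossings; that subarc together with a subarc of $A_{u,v}$ bounds an innermost ``bigon-like'' region in exactly your sense (for instance, when $u$ has degree one in the image graph, nothing else of $Q$ is nearby). Your procedure would slide across it, deleting two \emph{required} crossings; worse, the resulting merged arc in $S_{uv}$ is then an excursion off $A_{v,u}$, which is again an innermost bigon, and the cascade unravels the entire strand of $Q$. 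So ``complexity cannot decrease further'' does not imply ``$Q$ respects the strip system''; the process can terminate with too few crossings, or never hit the target at all. To repair this you would have to restrict the surgery to bigons that are genuinely \emph{excess}---for example, only bigons whose $Q$-arc lies entirely inside a single bad (vertex) subpath of the Fréchet reparametrization---which requires you to carry the good/bad decomposition through every surgery step. At that point you are essentially reconstructing the paper's argument, and the one-shot homeomorphism is the cleaner way to package it. A secondary, smaller issue: after establishing the right crossing counts you still owe an argument that the \emph{cyclic order} of crossings around each $\bdry D_u$ is the one dictated by $P$; in the paper this is automatic because the good points are never moved, but in your version you should say explicitly that the surgeries do not permute them.
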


\begin{proof}
Let $P = (p_0, p_1, \dots, p_{n-1})$ be a weakly simple polygon.  By breaking the edges if necessary, we assume that $P$ has no forks.  
Fix a sufficiently small real number $\e > 0$, and let $Q$ be a simple closed curve such that $\Frechet(P,Q) < \e^2$.  
This closed curve lies within the neighborhood $U_\e$ but does not necessarily have the correct crossing pattern with the arcs of the strip system.  
To complete the proof, we show how to locally modify $Q$ into a simple closed curve $Q’$ that respects the strip system of $P$.

We call a subpath of $Q$ \EMPH{good} if it lies entirely within a strip and has one endpoint on each end of that strip.  If $\e$ is sufficiently small, $Q$ has exactly $n$ good subpaths.  We call the endpoints of the good subpaths \EMPH{good points}.  Removing the good subpaths of $Q$ leaves exactly $n$ \EMPH{bad} subpaths; each bad subpath intersects only one disk $D_u$, but may intersect any end on $\bdry D_u$ an arbitrary (or even uncountably many!) number of times.

For each node $u$, let $\widehat{D}_u$ denote the complement of the unbounded component of the complement the union of $D_u$ with all bad subpaths with endpoints in $D_u$.  The subspace $\widehat{D}_u$ is a closed topological disk, and for any two nodes $u$ and $v$, the disks $\widehat{D}_u$ and $\widehat{D}_v$ are disjoint.  The Jordan-Schönflies theorem implies that there is a homeomorphism $h_u \colon \widehat{D}_u \to D_u$ for each node $u$; without loss of generality, this homeomorphism fixes every good point on $\bdry D_u$.  

Let $D_u^-$ denote the disk of radius $\e-\e^2$ centered at $u$, and let $h^-_u \colon \widehat{D}_u \to D^-_u$ be the homeomorphism obtained by applying $h_u$ and then scaling around $u$.  
Then, we compose the  homeomorphism $h^-_u$ into a single homeomorphism $h^- \colon \bigsqcup_u \widehat{D}_u \to \bigsqcup_u D^-_u$.

\begin{figure}[htb]
\centering
\includegraphics[scale=0.5]{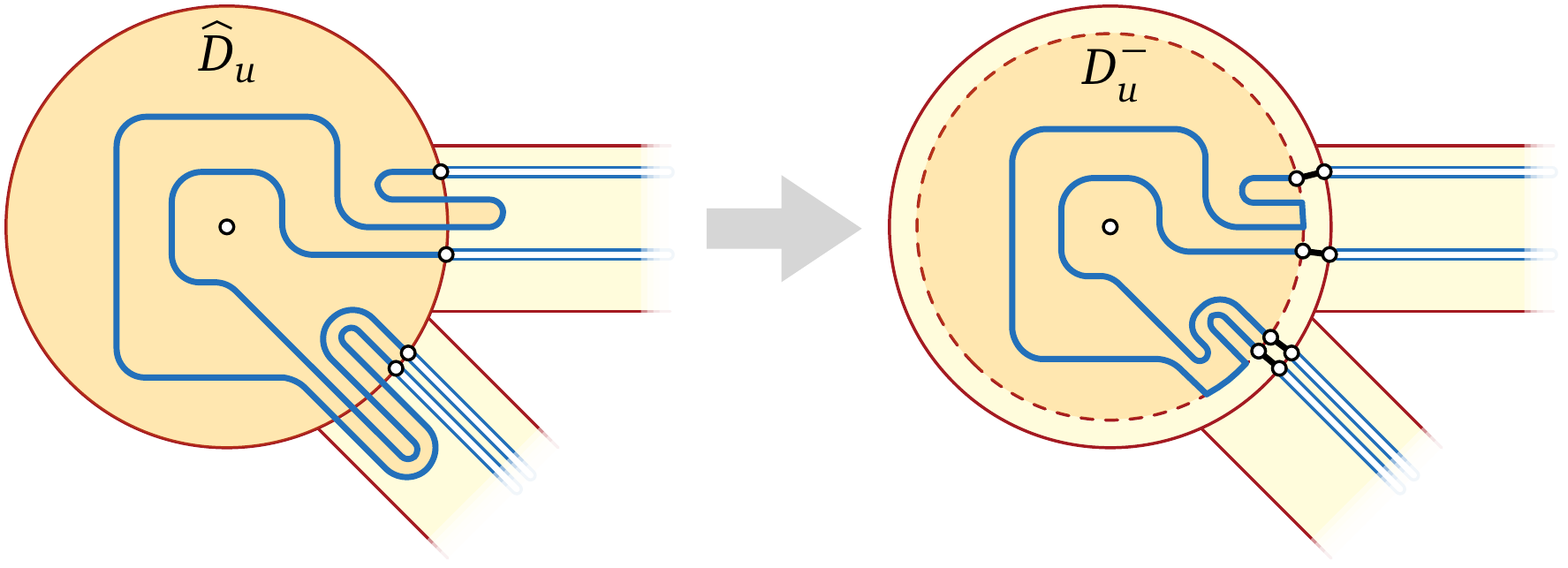}
\\[-1ex]
\caption{Subpaths of $Q$ and $Q'$ within the strip system of node $u$.}
\label{F:strip-ws}
\end{figure}

Finally, let $Q’$ be the closed curve obtained from $Q$ by replacing each bad subpath with its image under the homeomorphism $h^-$ and then, for each node $u$, connecting each good point on $\bdry D_u$ with the corresponding point on $\bdry D^-_u$ with a line segment; see Figure \ref{F:strip-ws}.  $Q’$ is a simple closed curve that respects the strip system of $P$.  It follows that $P$ is strip-weakly simple.  
\end{proof}

\subsection{Compound-Planarity and Self-Touching Configurations}

It remains to define the terms in statements (b) and (d) in Theorem \ref{Th:eq}.  Both of these terms were previously defined using different and somewhat more cumbersome language.  However, both definitions turn out to be \emph{almost} equivalent to our definition of strip-weakly simple.  We describe here only the relevant differences; we refer the reader to the original papers \cite{cbpp-ecpg-09, cdr-ilstl-02} for the original definitions.

Following Cortese \etal~\cite{cbpp-ecpg-09, fce-pcg-95, f-rcpg-96}, a polygon $P$ can be represented as a \EMPH{compound-planar rigid clustered cycle} if it respects an arbitrary \emph{topological} strip system.  In a topological strip system, the regions $D_u$ and strips $S_{uv}$ are arbitrary closed topological disks that contain the corresponding nodes and segments of $G$, where for any segment $uv$, the intersection $S_{uv} \cap D_u$ is a simple path, and otherwise all the disks are disjoint.  The Jordan-Schönflies theorem implies that there is a homeomorphism of the plane to itself that maps any topological strip system of $P$ to the $\e$-strip system of $P$, for any $\e>0$.   This gives us the equivalent (b)$\Leftrightarrow$(c).

Following Connelly, Demaine and Rote \cite{cdr-ilstl-02}, a polygon $P$ can be represented as a \EMPH{self-touching configuration} if, for any $\e > 0$, there is a simple closed curve $Q$ that respects the $\e$-strip system of $P$, with the additional requirement that for each segment $uv$, the intersection $Q \cap S_{uv}$ is a set of disjoint line segments from one end of $S_{uv}$ to the other.  
Given any closed curve $Q’$ that respects the $\e$-strip system of $P$, the Jordan-Schönflies theorem implies there there is a homeomorphism $h_{uv}\colon S_{uv} \to S_{uv}$ that straightens the good subpaths $Q\cap S_{uv}$.  
This gives us the implication  (c)$\Leftrightarrow$(d).

\subsection{Rigidification Lemma}
Finally, a self-touching configuration has a \emph{$\delta$-perturbation} if there is a planar configuration of the same linkage such that corresponding joints (vertices of the linkage) in the two configurations have distance at most $\delta$ \cite{cdr-ilstl-02}.  Equivalently, a $\delta$-perturbation of $P$ is a simple polygon at vertex distance at most $\delta$ from $P$.  Ribó Mor \cite[Theorem~3.1]{r-rcpps-06} proved that every self-touching configuration of a linkage with at least three vertices has a $\delta$-perturbation, for any $\delta>0$.  This theorem provides the final link (d)$\Rightarrow$(e), completing the proof of Theorem \ref{Th:eq}.

\section{Expansion}
\label{S:expansion}

The node expansion, bar expansion, and segment expansion operations used by the algorithms in Sections~\ref{S:nodes}, \ref{S:bars}, and \ref{S:segments} are all special cases of a more general operation, defined as follows.  Let $P = (p_0, p_1, \dots, p_{n-1})$ be an arbitrary polygon, and let $D$ be an elliptical disk that intersects $P$ \emph{transversely}; that is, the boundary ellipse $\bdry D$ intersects at least one edge of $P$, is not tangent to any edge of $P$, and does not contain any vertex of $P$. To \EMPH{expand $P$ inside $D$}, we subdivide the edges of $P$ that intersect $\bdry D$ by introducing new vertices at the intersection points, and then replace each maximal subpath of $P$ inside~$D$ with a straight line segment between its endpoints on $\bdry D$.    In the rest of this section, we provide missing proofs and implementation details for the expansion operations in our earlier algorithms.

\subsection{Preserving Weak Simplicity}

\begin{lemma}
\label{L:expand}
Let $P$ be a weakly simple polygon, and let $D$ be an elliptical disk whose boundary intersects~$P$ transversely.  The polygon $\tilde{P}$ obtained by expanding $P$ inside $D$ is weakly simple.
\end{lemma}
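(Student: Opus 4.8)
The plan is to prove that expanding inside $D$ preserves weak simplicity by exhibiting, for every $\e > 0$, a simple closed curve within Fréchet distance $\e$ of $\tilde{P}$, constructed from a simple closed curve $Q$ that is Fréchet-close to $P$. First I would fix a small parameter $\e > 0$ and choose a simple closed curve $Q$ with $\Frechet(P, Q) < \e'$, where $\e'$ is a still smaller parameter to be determined (certainly $\e' \ll \e$, and also much smaller than the minimum distance between $\bdry D$ and any vertex of $P$ not on $\bdry D$, as well as smaller than the clearance of $\bdry D$ from the edges of $P$ it does not meet). Let $\rho$ be the reparametrization realizing this distance. The intuition is that $\tilde{P}$ differs from $P$ only by "straightening" the portions of $P$ inside $D$, and $Q$ tracks $P$ so closely that the corresponding portions of $Q$ are trapped near those portions of $P$; I want to replace those trapped portions of $Q$ with straight chords, without introducing crossings.

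The key steps, in order, are as follows. (1) Identify the structure near $\bdry D$: since $\e'$ is much smaller than the transversality clearance, $Q$ crosses $\bdry D$ transversely, and the crossings of $Q$ with $\bdry D$ pair up into arcs of $Q$ lying just outside $D$ (tracking the subpaths of $P$ outside $D$) and arcs of $Q$ lying inside $D$ (each tracking one maximal subpath of $P$ inside $D$, hence with endpoints $\e'$-close to the two endpoints of that subpath on $\bdry D$). (2) For each maximal subpath $\pi$ of $P$ inside $D$, with endpoints $a, b \in \bdry D$, the expanded polygon $\tilde{P}$ contains the chord $ab$; I replace the corresponding arc $\alpha$ of $Q$ inside a slightly shrunken copy $D^-$ of $D$ by a curve that follows $ab$ closely. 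Because $Q$ is simple, the arcs $\alpha$ (one per maximal subpath) together with the outside arcs of $Q$ form a simple closed curve; the chords $ab$ of $\tilde{P}$ are pairwise non-crossing and non-overlapping inside $D$ precisely because $\tilde{P}$ came from a valid expansion (this is where I use that $D$ met $P$ transversely and that distinct maximal subpaths of $P$ have distinct endpoint pairs on $\bdry D$, so their chords are realized disjointly — the same uniqueness-of-planar-routing fact used in the inflation phase). (3) The cleanest way to carry out step (2) rigorously is via Jordan–Schönflies, exactly as in the proof of Lemma~\ref{L:eq-ws-ss}: inside $D$ there is a homeomorphism $h : D \to D$, fixing a small neighborhood of $\bdry D$, that maps the union of the chords of $\tilde{P}$ to a "reference" collection of $\e$-thin lens-shaped tubes around those chords; pushing the arcs of $Q$ inside $D$ through a further shrink-and-straighten homeomorphism, and reconnecting the endpoints on $\bdry D^-$ to their good points on $\bdry D$ by short segments, yields a simple closed curve $Q'$ whose combinatorial crossing pattern with the decomposition of $D$ matches that of $\tilde{P}$, so $\Frechet(\tilde{P}, Q') < \e$.

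The main obstacle I expect is step (2)/(3): controlling the arcs of $Q$ inside $D$. Unlike the edges of $P$, $Q$ need only be continuous, so an individual arc $\alpha$ may wind around inside $D$ wildly and may approach $\bdry D$ uncountably often — the very pathology handled in Lemma~\ref{L:eq-ws-ss} with the "bad subpath" fattening trick. The fix is the same: rather than trying to reroute $\alpha$ directly, I take $\widehat{D}$ to be the filled-in union of $D^-$ with all arcs of $Q$ meeting $D^-$, observe it is a closed topological disk disjoint from the analogous fattened disks of other features, and apply Jordan–Schönflies to straighten everything simultaneously while fixing the good points on $\bdry D$. A secondary subtlety is making sure the chords of $\tilde{P}$ inside $D$ are genuinely realizable by pairwise-disjoint thin tubes respecting the endpoint order on $\bdry D$; this is guaranteed because $\tilde{P}$, being obtained by a legal expansion, has no two straightened subpaths crossing inside $D$ (if two chords shared an endpoint pair they coincide and we use parallel copies, exactly as in Section~\ref{SS:inflation}). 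Once these topological points are in place, the Fréchet bound is immediate from the construction, since every point of $\tilde{P}$ lies within $\e$ of the corresponding point of $Q'$ by design.
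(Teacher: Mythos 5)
Your approach diverges from the paper's at two points, and one of them hides a genuine gap.

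The paper's proof does not work directly with the Fréchet definition as you do. Instead it invokes Theorem~\ref{Th:rigid} to obtain a simple \emph{polygon} $Q$ with $\Vdist(P,Q)<\e$, so that $Q$ has the same combinatorial structure as $P$: vertices close to vertices, edges close to edges. It then expands $Q$ inside $D$ to get $\tilde{Q}$, observes that $\tilde{Q}$ is simple, and bounds $\Vdist(\tilde{P},\tilde{Q})\le 2\e/\sin\theta^*$ where $\theta^*$ is the minimum crossing angle between $P$ and $\bdry D$. This sidesteps entirely the difficulty you flag — that a merely Fréchet-close simple curve can oscillate wildly inside $D$ and hit $\bdry D$ uncountably often. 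Your Jordan--Schönflies fattening machinery can in principle handle that pathology (it is the same idea as Lemma~\ref{L:eq-ws-ss}), so this part is a legitimate, if heavier, alternative route; but since Theorem~\ref{Th:rigid} is already available and its proof does not depend on Lemma~\ref{L:expand}, the polygon-based argument is both shorter and cleaner.

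The genuine gap is in your step (2)/(3): you assert that the chords of $\tilde{P}$ inside $D$ are pairwise non-crossing ``precisely because $\tilde{P}$ came from a valid expansion,'' invoking a ``uniqueness-of-planar-routing fact.'' This is circular. Expansion is merely the geometric operation of replacing in-$D$ subpaths by their chords; nothing about the operation prevents two maximal subpaths of $P$ inside $D$ from having endpoint pairs that interleave around $\bdry D$, which would make the corresponding chords cross. That such interleaving cannot occur is a \emph{consequence} of $P$ being weakly simple, and it requires an argument: roughly, if the endpoints interleaved, then by the Jordan curve theorem the corresponding arcs of any nearby simple curve $Q$ inside $D$ would be forced to intersect, contradicting the simplicity of $Q$. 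The paper gets this for free by arguing about $\tilde{Q}$ rather than $\tilde{P}$: since $Q$ is simple, $Q\cap D$ is a disjoint union of arcs, and for any two such arcs $\alpha,\beta$ the endpoints of $\alpha$ lie in a single component of $\bdry D\setminus\beta$, so the straight chords $\tilde\alpha,\tilde\beta$ are disjoint; hence $\tilde{Q}$ is simple, and one only needs to check that $\tilde{Q}$ is close to $\tilde{P}$, never that $\tilde{P}$'s own chords behave. Your proof as written replaces that Jordan-curve argument with an unjustified appeal to the legality of the expansion, and would need to supply it explicitly — at which point you would essentially be reproducing the paper's key step anyway.
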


\begin{proof}
Suppose $P = (p_0, p_1, \dots, p_{n-1})$ is a weakly simple polygon.  By Theorem~\ref{Th:rigid}, for any real number $\e>0$, there is a simple polygon $Q =  (q_0, q_1, \dots, q_{n-1})$ such that $\Vdist(P,Q) < \e$.  If $\e$ is sufficiently small, $\bdry D$ also intersects $Q$ transversely; moreover, $\bdry D$ intersects an edge of $Q$ if and only if it intersects the corresponding edge of $P$ in the same number of points.   Let $\tilde{Q}$ be the polygon obtained by expanding $Q$ inside $D$.  See Figure \ref{F:expand}.

\begin{figure}[htb]
\centering
\includegraphics[scale=0.3]{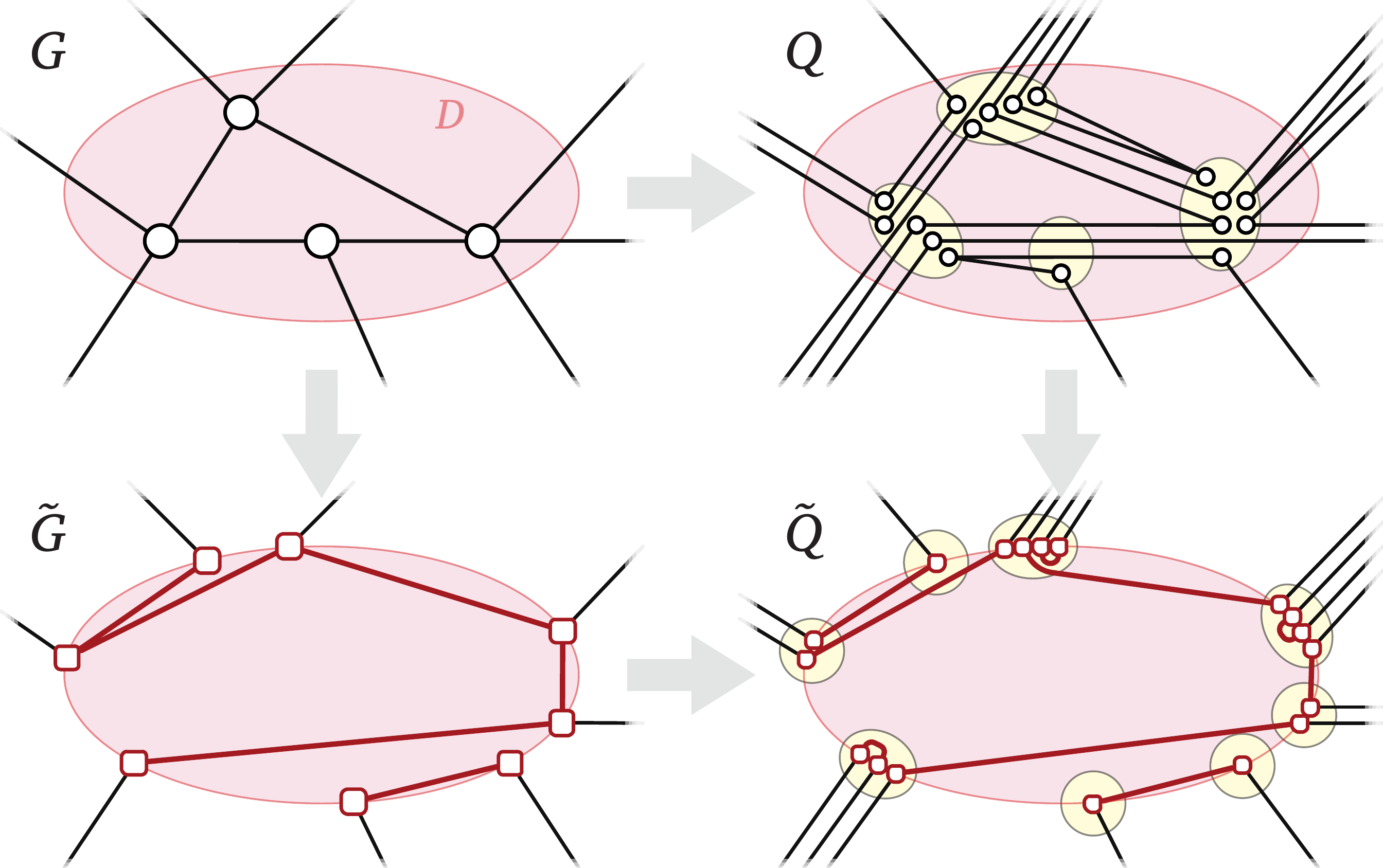}
\caption{Expansion in an ellipse. Left column: Expanding the image graph $G$.  Top row: If $P$ is weakly simple, there is a nearby simple polygon $Q$.  Right column: Expanding $Q$ in the same ellipse yields a  simple polygon $\tilde{Q}$. Bottom row: $\tilde{Q}$ is close to~$\tilde{P}$.  (Some edges of $\tilde{Q}$ are curved in the figure to make them visible.)  Compare with Figure \ref{F:node-expand}.}
\label{F:expand}
\end{figure}

Because $Q$ is simple, the intersection $Q\cap D$ consists of disjoint simple subpaths.  For any two such subpaths $\alpha$ and $\beta$, the endpoints of $\alpha$ must lie in one component of $\bdry D \setminus\beta$, and thus the corresponding line segments $\tilde\alpha$ and $\tilde\beta$ of $\tilde{Q}$ are also disjoint. It follows that $\tilde{Q}$ is also simple.

Consider a vertex $\tilde{p}$ of $\tilde{P}$, located at the intersection of an edge $p_ip_{i+1}$ of $P$ and the ellipse $\bdry D$.  Let~$\theta$ denote the angle between $p_ip_{i+1}$ and (the line tangent to) $\bdry D$ at $\tilde{p}$.  If $\e$ is sufficiently small, the corresponding edge $q_iq_{i+1}$ of $Q$ intersects $\bdry D$ at a point $\tilde{q}$ such that $d(\tilde{p},\tilde{q}) < 2\e/\sin\theta$.  It follows that $\Vdist(\tilde{P}, \tilde{Q}) < 2\e/\sin\theta^*$, where $\theta^*$ is the minimum angle of intersection between $P$ and $\bdry D$.

Thus, for any $\delta>0$, we obtain a simple polygon $\tilde{Q}$ such that $\Frechet(\tilde{P}, \tilde{Q}) < \delta$ by setting $\e < (\delta/2)\sin\theta^*$.  We conclude that $\tilde{P}$ is weakly simple.
\end{proof}
The converse of this lemma is not true in general; Figure \ref{F:bad-expansion} shows a simple counterexample.  However, as we argue below, the converse of this lemma is true for the specific expansions performed by our algorithms.

\begin{figure}[htb]
\centering
\includegraphics[scale=0.25]{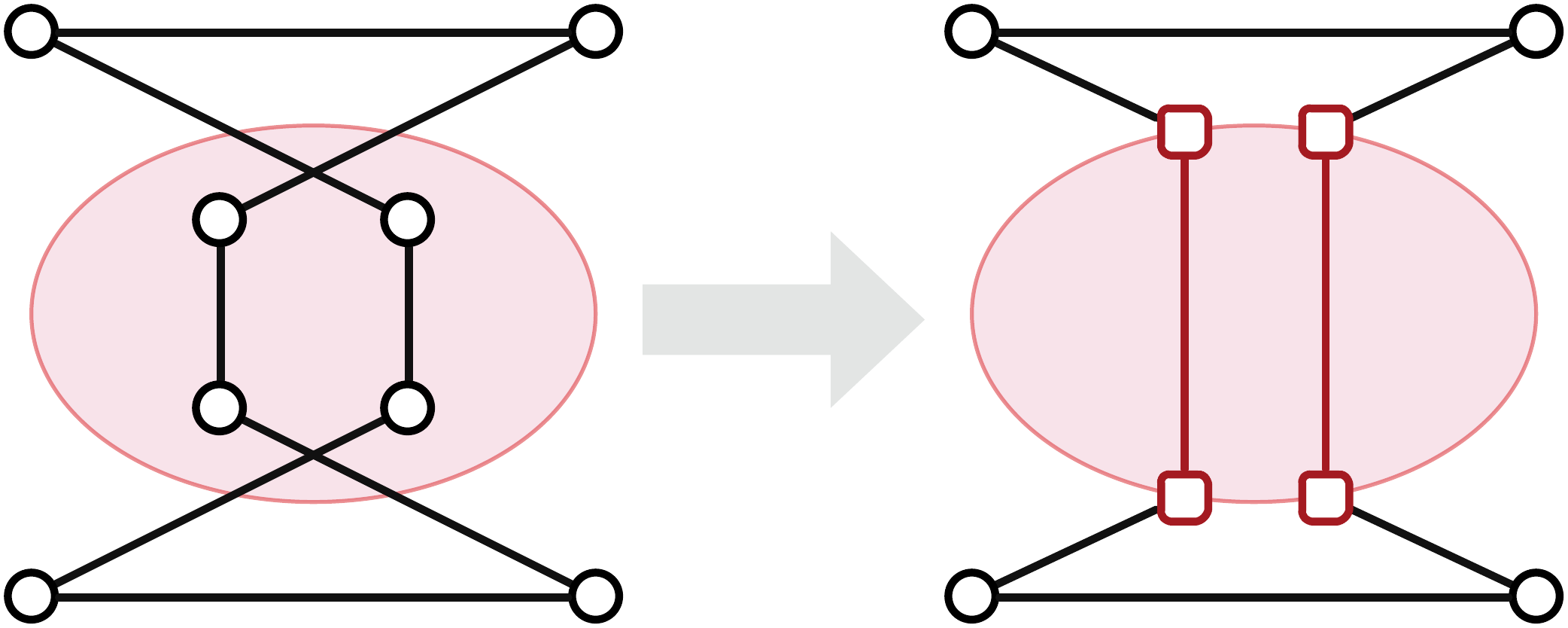}
\caption{Careless expansion can make non-weakly simple polygons simple.}
\label{F:bad-expansion}
\end{figure}

\subsection{Implementation and Planarity Checking}
\label{SS:expand-planar}

Given the polygon $P$ and ellipse $D$, it is straightforward to compute the polygon $\tilde{P}$ resulting from expanding $P$ inside $D$ in $O(n)$ time by brute force. For \emph{arbitrary} expansions, computing and sorting the coordinates of the intersection points with the ellipse $\bdry D$ requires some care.  In our algorithms, however, all expansion operations can be performed combinatorially, \emph{with no numerical computation whatsoever}.  The actual size and shape of the disk $D$ is completely immaterial to our algorithms; our geometric description in terms of ellipses is intended to provide intuition and simplify our proofs.

Our algorithms maintain a representation of the polygon $P$ that allows us to compute the sequence of points where $P$ crosses $\bdry D$, in cyclic order around $D$, in constant time per intersection point.  Specifically, for the node and segment expansions in Sections \ref{S:nodes} and \ref{S:segments}, this sequence of points can be extracted from the rotation system of the image graph $G$.  For the bar expansions in Section \ref{S:bars}, this sequence can be extracted from the order of forks along each bar, and cyclic order of bars ending at each fork; both of these orders are computed as part of the bar decomposition.  Given this sequence of points, which become the new vertices of~$\tilde{P}$, we can perform the rest of the expansion in $O(m)$ time, where $m$ is the number of edges of $P$ that intersect~$D$.

If $P$ is \emph{not} weakly simple, the expanded polygon $\tilde{P}$ may include pairs of edges that cross transversely.  Let $G_D$ denote the graph whose vertices are the intersection points $\im P\cap\bdry D$ and whose edges are distinct edges of $\tilde{P}$ inside $D$ and arcs of $\bdry D$ between vertices in cyclic order.  $\tilde{P}$ contains crossing edges if and only if $G_D$ is not a plane graph.

We can determine the planarity of $G_D$ as follows.  First, add a new “apex” vertex $a$ and connect it by edges to each of the vertices of $G_D$.  The resulting abstract graph $G_D^+$ is 3-connected, and therefore has at most one planar embedding.  Moreover, $G_D^+$ is planar if and only if there is a planar embedding of $G_D$ with all vertices on a single face (which we take to be the outer face) in the correct cyclic order.  Thus, $G_D$ is a \emph{plane} graph if and only if $G_D^+$ is a \emph{planar} graph; there are several linear-time algorithms to determine whether a graph is planar \cite{ht-ept-74,bw-cespe-04,fo-ttp-12,sh-pt-99}.
Crudely, $G_D$ has at most $O(m)$ vertices and edges, so the planarity check takes at most $O(m)$ time.

\subsection{Node Expansion}

In Section \ref{S:nodes}, we define \emph{node expansion} as expansion inside a circular disk~$D_u$ of radius $\delta$ centered at some vertex $u$ of $P$, where the radius $\delta$ is sufficiently small that $D_u$ intersects only edges incident to $u$.
\begin{lemma}[Cortese \etal~{\cite[Lemma~2]{cbpp-ecpg-09}}]
\label{L:node}
Let $P$ be an arbitrary polygon, and let $\tilde{P}$ be the result of a single node expansion.  $P$ is weakly simple if and only if $\tilde{P}$ is weakly simple.
\end{lemma}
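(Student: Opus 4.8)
The statement to prove is Lemma~\ref{L:node}: node expansion preserves weak simplicity in both directions.

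\medskip

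The plan is to get one direction for free and then work on the converse. The forward direction---if $P$ is weakly simple then $\tilde P$ is weakly simple---is an immediate consequence of Lemma~\ref{L:expand}, since node expansion is expansion inside the circular disk $D_u$, and by choice of $\delta$ the boundary $\bdry D_u$ meets only edges incident to $u$, each exactly once (because those edges are straight line segments emanating from the center), so the intersection is transverse. So all the work is in the converse: assuming $\tilde P$ is weakly simple, show $P$ is weakly simple.

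\medskip

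For the converse I would argue via the strip-weak-simplicity characterization (Lemma~\ref{L:eq-ws-ss}), or equivalently by directly perturbing a witness curve. Here is the structure. Let $G$ be the image graph of $P$ and $\tilde G$ the image graph of $\tilde P$; note $\tilde G$ is obtained from $G$ by splitting the node $u$ into the collection of boundary nodes $[ux]$, one on each incident segment, and deleting the part of $G$ strictly inside $D_u$. Suppose $\tilde P$ is weakly simple; by Lemma~\ref{L:eq-ws-ss} there is, for every sufficiently small $\e$, a simple closed curve $\tilde Q$ respecting the $\e$-strip system of $\tilde P$. The key observation is that the strip system of $\tilde P$ agrees with that of $P$ everywhere outside $D_u$, so $\tilde Q$ already respects the $P$-strip system along every segment not incident to $u$ and along the ``outer part'' of each segment incident to $u$. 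What remains is to reroute $\tilde Q$ inside a small disk around $u$ so that, instead of entering and leaving $D_u$ directly through its boundary arcs, it matches how $P$ traverses the corner structure at $u$. Concretely, each time $P$ has a corner $[x,u,y]$, the curve $\tilde Q$ crosses the strip $S_{[ux][uy]}$ (the strip of the straightened segment); I replace that crossing with a curve that runs in through the arc $A_{u,x}$ of the $P$-strip system, stays inside $D_u$, and runs out through $A_{u,y}$. Since the straightened segments $[ux][uy]$ inside $\tilde P$ are pairwise disjoint (as $\tilde P$ after expansion has $\tilde Q$ crossing them without crossing each other), and $D_u$ is a disk, these reroutings can be done disjointly: the cyclic order in which $\tilde Q$ meets the boundary nodes $[ux]$ around $\bdry D_u$ is exactly the cyclic order of the segments around $u$, so a planar ``radial'' realization of the connection pattern exists inside $D_u$. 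This yields a simple closed curve $Q$ respecting the $\e$-strip system of $P$, proving $P$ strip-weakly simple, hence weakly simple.

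\medskip

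The main obstacle I anticipate is making the rerouting-inside-$D_u$ step rigorous while keeping $Q$ simple. The subtlety is that $\tilde Q$ may cross a single strip $S_{[ux][uy]}$ many times (once for each occurrence of the corner $[x,u,y]$ in $P$), and inside $D_u$ these multiple connections---together with the connections for all the other corners at $u$---form a chord-diagram-like pattern on $\bdry D_u$ that must be realized by pairwise disjoint arcs. This is exactly realizable precisely when the pattern is ``planar'' in the chord-diagram sense, and the point is that planarity is \emph{inherited} from $\tilde Q$: because $\tilde Q$ is simple and respects $\tilde P$'s strip system, the corresponding straightened crossings in $\tilde P$'s disks/strips near $u$ are themselves non-crossing, and this non-crossing property transports to the $P$-side disk. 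I would make this precise by fixing the homeomorphism from the union of $P$-strips incident to $u$ together with $D_u$ onto the union of the corresponding $\tilde P$-strips (they are both closed topological disks with the boundary arcs in the same cyclic order, so Jordan--Sch\"onflies applies), and pushing $\tilde Q$'s relevant subarcs back through it; the resulting arcs are automatically disjoint and connect the correct boundary arcs. Everything outside is untouched, so the glued curve $Q$ is simple and respects the $P$-strip system. One should also handle the degenerate local features---spurs at $u$ (corners $[x,u,x]$) and the case where $D_u$ contains vertices of $P$ other than copies of $u$ (ruled out by the choice of $\delta$)---but these are routine once the disk-homeomorphism framework is in place.
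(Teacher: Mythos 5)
Your forward direction (if $P$ is weakly simple, then $\tilde P$ is) matches the paper exactly: it is an application of Lemma~\ref{L:expand}, after noting that $\bdry D_u$ meets $P$ transversely. Your converse, however, takes a genuinely different and considerably more elaborate route than the paper's.

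The paper's proof of the converse is essentially one sentence and requires no rerouting at all. It observes that $\Frechet(P,\tilde P) < \delta$, because the node expansion only alters the polygon inside the $\delta$-radius disk $D_u$; it then takes a simple polygon (or simple closed curve) $\tilde Q$ with $\Frechet(\tilde P, \tilde Q) < \e$, whose existence is just the definition of weak simplicity for $\tilde P$, and applies the triangle inequality to get $\Frechet(P,\tilde Q) < \delta + \e < 2\delta$. Since $\delta$ can be taken arbitrarily small (implicitly using that the combinatorial type, hence the weak simplicity, of the expanded polygon does not depend on the choice of sufficiently small $\delta$), $P$ is weakly simple. The very curve that certifies $\tilde P$'s weak simplicity already certifies $P$'s, up to an $O(\delta)$ error. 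This is precisely what distinguishes node expansion from the general expansion in Lemma~\ref{L:expand}, whose converse fails (Figure~\ref{F:bad-expansion}) because the expansion ellipse there is not small.

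Your argument instead goes through the strip-system characterization of Lemma~\ref{L:eq-ws-ss}: take a curve $\tilde Q$ respecting $\tilde P$'s strip system and reroute its chord-crossings inside $D_u$ into arcs passing near $u$, using the inherited noncrossing chord pattern on $\bdry D_u$ to keep the rerouted arcs disjoint. This is plausible and in fact structurally resembles the proof of Lemma~\ref{L:eq-ws-ss} itself, but it buys nothing here and costs real bookkeeping: you must juggle two scales ($\e \ll \delta$), account for the fact that $\tilde P$'s strip system inside $D_u$ consists not just of chord strips but also the small $\e$-disks around the boundary nodes $[ux]$, and arrange that the rerouted arcs pass through the tiny $\e$-disk around $u$ (which is much smaller than $D_u$) while remaining disjoint. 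These are exactly the technicalities the paper's purely metric triangle-inequality argument avoids. If you pursue your route, you should make the disk-homeomorphism step fully explicit; as written, the jump from ``the chord pattern is planar'' to ``the rerouted arcs through the $\e$-neighborhood of $u$ can be made disjoint and respect the $P$-strip system'' is the step most likely to hide an error.
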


\begin{proof}
Suppose $\tilde{P}$ is weakly simple.  Fix a sufficiently small positive real number $\e < \delta$.  By Theorem~\ref{Th:rigid}, there is a simple polygon $\tilde{Q}$ such that $\Frechet(\tilde{P},\tilde{Q}) \le \Vdist(\tilde{P},\tilde{Q}) < \e$.  We easily observe that $\Frechet(\mathit{P}, \tilde{P}) < \delta$.  Thus, the triangle inequality implies $\Frechet(\mathit{P}, \tilde{Q}) < \Frechet(\mathit{P}, \tilde{P}) + \Frechet(\tilde{P},\tilde{Q}) < \delta+\e < 2\delta$.  Because $\delta$ is arbitrarily small, we conclude that $P$ is weakly simple.  Finally, Lemma \ref{L:expand} completes the proof.
\end{proof}

\subsection{Bar Expansion}

Recall from Section \ref{S:bars} that a \EMPH{bar} of $P$ is a component of the union of all edges of $P$ that lie on some line.  
Bar expansion is defined as expansion inside an ellipse $D_b$ defined by a bar $b$ as follows.  
Fix a sufficiently small real number $\delta>0$.  
Let $b^\circ$ denote the subset of $b$ containing points at distance at least $2\delta$ from the endpoints of $b$.  Finally, $D_b$ is the ellipse whose major axis is $b^\circ$ and whose minor axis has length $2\delta$.  We emphasize that the endpoints of the bar $b$ are \emph{outside} $D_b$.
\begin{figure}[htb]
\centering
\includegraphics[scale=0.3]{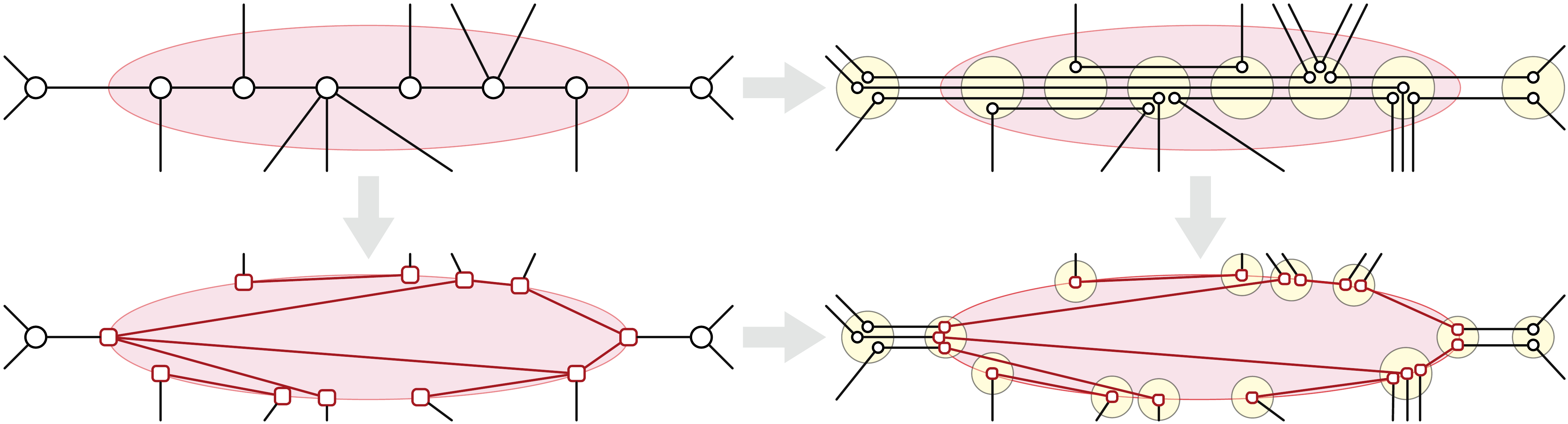}
\caption{Bar expansion.  Compare with Figure \ref{F:node-expand}.}
\label{F:bar-expand}
\end{figure}
\begin{lemma}
\label{L:bar}
Let $P$ be a polygon \textbf{without spurs}, and let $\tilde{P}$ be the result of a single bar expansion.  $P$ is weakly simple if and only if $\tilde{P}$ is weakly simple.
\end{lemma}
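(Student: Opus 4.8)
The plan is to mirror the proof of Lemma~\ref{L:node} (node expansion), using Lemma~\ref{L:expand} for the forward direction and a Fréchet-distance triangle-inequality argument for the converse, but with one extra ingredient needed to handle the geometry of a bar as opposed to a single node. The forward direction is immediate: if $P$ is weakly simple, then since $\bdry D_b$ intersects $P$ transversely (the defining $\delta$ is chosen small enough, and $P$ has no forks after the preprocessing, so $\bdry D_b$ meets no vertex of $P$), Lemma~\ref{L:expand} tells us directly that the expansion $\tilde{P}$ is weakly simple. So the content is in the converse.

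For the converse, suppose $\tilde{P}$ is weakly simple. I would fix a small $\e\ll\delta$ and, using Theorem~\ref{Th:rigid}, take a simple polygon $\tilde{Q}$ with $\Frechet(\tilde{P},\tilde{Q})\le\Vdist(\tilde{P},\tilde{Q})<\e$. The key quantitative step is to bound $\Frechet(P,\tilde{P})$: I claim $\Frechet(P,\tilde{P})=O(\delta)$, because $\tilde{P}$ is obtained from $P$ by replacing each maximal subpath inside the thin ellipse $D_b$ with a chord, and every point of $D_b$ is within $O(\delta)$ of the bar $b$, hence within $O(\delta)$ of the corresponding point of $P$ under the natural reparametrization. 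This is exactly where the hypothesis that $P$ has \textbf{no spurs} is used: without spurs, a maximal subpath of $P$ inside $D_b$ travels monotonically (up to the collinear structure of the bar) rather than doubling back, so the chord replacement really does stay Fréchet-close; with a spur inside the bar the subpath could run out to near an endpoint of $b$ and back, and the chord would not track it. Once $\Frechet(P,\tilde{P})=O(\delta)$ is established, the triangle inequality gives $\Frechet(P,\tilde{Q})<O(\delta)+\e=O(\delta)$, and since $\delta$ was arbitrarily small and $\tilde{Q}$ is simple, $P$ is weakly simple.

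The step I expect to be the main obstacle is making the estimate $\Frechet(P,\tilde{P})=O(\delta)$ rigorous in the presence of arbitrarily many overlapping collinear edges along the bar. One has to exhibit an explicit orientation-preserving reparametrization of $S^1$ matching $P$ to $\tilde{P}$: outside $\bigcup_b D_b$ the two curves coincide, and inside each $D_b$ one must match each maximal subpath of $P$ to its replacement chord while keeping the global matching monotone and continuous. The spur-free hypothesis guarantees that the endpoints of these subpaths on $\bdry D_b$ appear in a consistent order, so the local matchings can be glued; I would phrase this cleanly by noting that, since $P$ has no spurs and no forks, the portion of $P$ inside $D_b$ is a union of disjoint-in-combinatorial-order subpaths each crossing $D_b$ essentially "left to right" along the bar direction, so each can be linearly reparametrized onto its chord with every intermediate point moving by at most the minor-axis half-length plus lower-order terms, i.e.\ $O(\delta)$. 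Everything else (transversality of $\bdry D_b$, applicability of Theorem~\ref{Th:rigid} and Lemma~\ref{L:expand}, the final triangle inequality) is routine and parallels Lemma~\ref{L:node} verbatim.
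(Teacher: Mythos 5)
Your proof follows the paper's strategy exactly: the forward direction is Lemma~\ref{L:expand}, and the converse bounds $\Frechet(P,\tilde{P})$, applies Theorem~\ref{Th:rigid} to get a nearby simple polygon $\tilde{Q}$, and finishes with the triangle inequality, using spur-freeness to ensure that each maximal subpath inside $D_b$ crosses the bar monotonically rather than doubling back. The one issue is your quantitative claim that every point moves by ``at most the minor-axis half-length plus lower-order terms,'' i.e.\ roughly $\delta$. This is not right: a new vertex $\tilde{v}=uv\cap\bdry D_b$ lies at distance up to $\delta/\sin\angle uvw$ from the fork $v$ it replaces, and no monotone matching can avoid this penalty, since the shared endpoint $\tilde{v}$ of the chord and the subpath must match itself while the subpath still has to travel all the way to $v$. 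The paper makes this explicit by introducing $\theta$ as the minimum positive angle between $b$ and any incident edge not lying in $b$, and showing $\Frechet(P,\tilde{P})<\delta/\sin\theta$, hence $\Frechet(P,\tilde{Q})<\delta(1+1/\sin\theta)$. Your argument survives because $\theta>0$ is fixed once $P$ is fixed, so the bound still vanishes as $\delta\to 0$, but you do need to bring this angle into the estimate rather than claiming an absolute constant. A smaller slip: you write ``no spurs and no forks'' when describing the structure of the subpaths inside $D_b$, but the lemma's hypothesis is only ``no spurs''---the whole point of bar expansion is precisely to handle polygons that have forks, and the monotone-crossing observation needs only spur-freeness.
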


\begin{proof}
Let $b$ be the bar around which we are expanding, and let $\theta>0$ denote the minimum positive angle between $b$ and any edge incident to but not contained in $b$.  Consider a subpath $[u,v,w,x]$ of $P$ such that $vw$ lies in the interior of $D_b$ and therefore in the bar $b$; the vertices $u$ and $x$ must lie outside~$D_b$.  Let $\tilde{v} = uv\cap \bdry D_b$ and $\tilde{w} = wx \cap \bdry D_b$; the line segment $\tilde{v}\tilde{w}$ is an edge of the expanded polygon $\tilde{P}$.  We easily observe that $d(v, \tilde{v}) \le \delta/\sin\angle uvw \le \delta/\sin\theta$ and  $d(w, \tilde{w}) \le \delta/\sin\angle vwx \le \delta/\sin\theta$, which implies that $\Frechet([u,v,w,x], [u,\tilde{v},\tilde{w},x]) \le \delta/\sin\theta$.  By similar arguments for edges that share one or both endpoints of $b$, we conclude that $\Frechet(\mathit{P}, \tilde{P}) < \delta/\sin\theta$.

If $\tilde{P}$ is weakly simple, there is a simple polygon $\tilde{Q}$ such that $\Frechet(\tilde{P},\tilde{Q}) \le \Vdist(\tilde{P},\tilde{Q}) < \delta$, which implies $\Frechet(\mathit{P}, \tilde{Q}) < \Frechet(\mathit{P}, \tilde{P}) + \Frechet(\tilde{P},\tilde{Q}) < \delta(1+1/\sin\theta)$ by the triangle inequality.  Since this Fréchet distance can be made arbitrarily small by shrinking $\delta$, we conclude that $P$ is weakly simple.
Finally, Lemma \ref{L:expand} completes the proof.
\end{proof}
If $P$ has spurs, then bar expansions are no longer safe; the resulting polygon $\tilde{P}$ could be weakly simple even though $P$ is not.

\subsection{Segment Expansion}
\label{S:seg-correctness}

Recall from Section \ref{S:segments} that \EMPH{segment expansion} means expansion around an ellipse that just contains one segment of the image graph $G$.  
For purposes of proving correctness, we specify this ellipse more carefully as follows.  
Fix a sufficiently small real number $\delta>0$.  
For any segment $uv$ of $G$, let $uv^+$ denote the subset of all points on the line through $uv$ that have distance at most $\delta$ from $uv$.
Let $D_{uv}$ denote the ellipse whose major axis is $uv^+$ and whose minor axis has length $2\delta$.  We emphasize that in contrast to bar expansion, the endpoints of segment $uv$ lie \emph{inside} $D_{uv}$.

To simplify the proof, we imagine that each segment expansion is preceded by a \EMPH{spur reduction}, which replaces any subpath $[u,v,\dots,u,v]$ that alternates between $u$ and $v$ at least twice with the single edge $[u,v]$.  For example, the subpath $[a,u,v,u,v,u,v,b]$ would become a single spur $[a,u,v,u,b]$, and the subpath $[a,u,v,u,v,u,v,u,v,z]$ would become a simple subpath $[a,u,v,z]$.
\begin{lemma}
\label{lemma:spur-reduction}
Let $P$ be any polygon, and let $\bar{P}$ be the result of a spur reduction on some segment $uv$ of its image graph.  If $\bar{P}$ is weakly simple, then $P$ is weakly simple.
\end{lemma}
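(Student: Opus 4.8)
The plan is to show that spur reduction only \emph{decreases} the image graph and does not disturb weak simplicity in the ``good'' direction, by relating $P$ and $\bar{P}$ through the expansion machinery already established. The key observation is that the subpaths being collapsed — maximal alternations $[u,v,\dots,u,v]$ or $[u,v,\dots,v,u]$ along the single segment $uv$ — are entirely contained in the thin ellipse $D_{uv}$ used for segment expansion, and they never leave that ellipse. So one natural route is: first reduce, then expand, and compare with: first expand, then observe that the resulting polygons are combinatorially identical (or at least that $\bar{P}$ weakly simple forces $P$ weakly simple through the chain of implications). Concretely, I would argue directly with perturbations rather than through expansion, as follows.

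First I would set up the geometry: fix the small parameter $\delta > 0$ and the ellipse $D_{uv}$. Every alternating subpath $[u,v,u,v,\dots]$ that spur reduction collapses is a walk supported on the segment $uv$, hence lies in $D_{uv}$ and has both of its true endpoints (the vertices just before and just after the alternation) at $u$ or $v$ as well, so in fact the \emph{entire} collapsed piece stays within the ball of radius $\delta$ around the segment. It follows that $P$ and $\bar{P}$ are within Fréchet distance $\delta$ of one another: a reparametrization can ``fold'' the extra traversals of $uv$ in $P$ onto the single traversal in $\bar{P}$, and at every moment the two curves are at points of the segment $uv$, hence within $\delta = \mathrm{length}(uv)$, and in fact within $O(\delta)$ if we scale. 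This is the same kind of estimate used in the proofs of Lemma \ref{L:node} and Lemma \ref{L:bar}.

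Next, assuming $\bar{P}$ is weakly simple, I would invoke Theorem \ref{Th:rigid} (or just the definition) to get, for arbitrarily small $\e > 0$, a simple polygon $\bar{Q}$ with $\Frechet(\bar{P}, \bar{Q}) < \e$. Then the triangle inequality gives $\Frechet(P, \bar{Q}) \le \Frechet(P, \bar{P}) + \Frechet(\bar{P}, \bar{Q}) < C\delta + \e$ for an absolute constant $C$. Since $\delta$ may be taken arbitrarily small independently of the rest of the construction, the right-hand side can be made smaller than any prescribed positive number, so $P$ is the limit in the Fréchet metric of simple polygons, i.e. $P$ is weakly simple. This mirrors exactly the structure of the proofs of Lemma \ref{L:node} and Lemma \ref{L:bar}, where a ``$P$ is $O(\delta)$-close to $\tilde P$'' estimate is combined with the triangle inequality and the arbitrariness of $\delta$.

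I expect the only delicate point to be the estimate $\Frechet(P, \bar{P}) = O(\delta)$ — specifically, writing down an explicit orientation-preserving reparametrization $\rho \colon S^1 \to S^1$ that collapses the repeated alternations. One must check that the folding is compatible across several distinct collapsed subpaths occurring around the polygon (they are disjoint in the cyclic parametrization, so this is fine), and that an alternation of \emph{odd} versus \emph{even} length is handled correctly — an odd-length alternation $[a,u,v,u,\dots,v,b]$ collapses to a genuine spur $[a,u,v,u,b]$, while an even-length one $[a,u,v,\dots,u,v,z]$ collapses to a through-path $[a,u,v,z]$; in both cases the reduced piece still lies on $uv$ and stays within $\delta$ of every intermediate point of $P$, so the bound is uniform. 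Once that reparametrization is in hand, the rest is the routine triangle-inequality argument above.
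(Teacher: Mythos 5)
Your proposal is not correct: the key claim that $\Frechet(P, \bar{P})$ is $O(\delta)$ with $\delta$ small fails, and the failure is essential. Spur reduction replaces a subpath such as $[a,u,v,u,v,z]$ by $[a,u,v,z]$. Both subpaths lie on the segment $uv$, but the reduced one makes a single monotone pass while the original makes three. Because the reparametrization in the Fréchet metric must be monotone (orientation-preserving), the ``dog'' tracing $\bar{P}$ cannot back up to follow $P$'s return to $u$; a standard computation then shows $\Frechet(P, \bar{P}) \ge \tfrac{1}{2}\,\mathrm{length}(uv)$, a fixed positive quantity that has nothing to do with the ellipse half-width $\delta$. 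Your sentence ``hence within $\delta = \mathrm{length}(uv)$, and in fact within $O(\delta)$ if we scale'' conflates the segment's length (fixed by $P$) with the shrinkable parameter $\delta$; there is nothing to scale. Consequently the triangle-inequality step gives only $\Frechet(P,\bar{Q}) \lesssim \tfrac{1}{2}\,\mathrm{length}(uv) + \e$, which cannot be driven to zero, and weak simplicity of $P$ does not follow. Containment in the thin ellipse $D_{uv}$ bounds the distance from each curve to the segment $uv$, not the distance between corresponding points on the two curves.

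The paper's proof avoids this trap by never comparing $P$ to $\bar{P}$ in the Fréchet metric at all. Given (via Theorem~\ref{Th:rigid}) a simple polygon $\bar{Q}$ with $\Vdist(\bar{P},\bar{Q})<\e$, it \emph{re-inserts} the removed alternation into $\bar{Q}$ as an explicit zigzag $[u', u_1, v_1, u_2, v_2, v']$ replacing the edge $u'v'$, with $u_1,u_2$ within $\e$ of $u'$, $v_1,v_2$ within $\e$ of $v'$, and an $\e^2$ lateral offset ensuring the zigzag is simple and meets $\bar{Q}$ only along $u'v'$. The resulting $Q$ is a simple polygon, and the Fréchet matching between $P$ and $Q$ can align $P$'s traversals $u\to v\to u\to v$ with $Q$'s traversals through $u_1,v_1,u_2,v_2$ in lockstep, keeping the leash length $O(\e)$ throughout, so $\Frechet(P,Q)<2\e$. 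In short, the paper builds a simple witness close to $P$, whereas your argument tries to show $P$ is close to a witness built near $\bar{P}$ — and those are genuinely far apart. To rescue your idea you would need some local surgery on $\bar{Q}$ near the edge $u'v'$, which is exactly what the paper performs.
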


\begin{proof}
It suffices to consider the case where $\bar{P}$ is obtained from $P$ by replacing exactly one subpath $[a,u,v,u,v,z]$ with the simpler subpath $[a,u,v,z]$, for some nodes $a\ne v$ and $z\ne u$.  The argument for paths of odd length is similar, and the general case then follows by induction.

Suppose $\bar{P}$ is weakly simple.  
Then by Theorem \ref{Th:rigid} for any $\e>0$, there is a polygon $\bar{Q}$ with $\Vdist(\bar{P},\bar{Q})<\e$.  Let $[a',u',v',z']$ be the subpath of $\bar{Q}$ corresponding to the replacement subpath $[a,u,v,z]$ of $\bar{P}$.  If~$\e$ is sufficiently small, we can find four points $u_1, u_2, v_1, v_2$ with the following properties:
\begin{itemize}\cramped
\item
$d(u_1,u') = d(u_2,u') = d(v_1,v') = d(v_2,v') = \e$,
\item
$d(u_1,u'v') = d(u_2, u'v') = d(u_1,u'v') = d(u_2, u'v') = \e^2$, and 
\item
the path $[u',u_1,v_1,u_2,v_2,v']$ is simple and intersects $\bar{Q}$ only at the edge $u'v'$.
\end{itemize}
Let $Q$ be the simple polygon obtained by replacing the edge $[u',v']$ with $[u',u_1,v_1,u_2,v_2,v']$.  Then we have $\Frechet(P, Q) < 2\e$ by the triangle inequality, which implies that $P$ is weakly simple.
\end{proof}
The following lemma was proved by Cortese \etal~\cite[Lemma~3]{cbpp-ecpg-09}; we provide an alternative geometric proof here.
\begin{lemma}
\label{lemma:edge-expansion}
Let $P$ be a spur-reduced polygon with more than two distinct vertices, and let $\tilde{P}$ be the result of a safe segment expansion.  $P$ is weakly simple if and only if $\tilde{P}$ is weakly simple.
\end{lemma}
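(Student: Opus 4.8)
The plan is to prove the two directions separately. The forward direction, that weak simplicity of $P$ implies weak simplicity of $\tilde P$, I would get directly from Lemma~\ref{L:expand}: a safe segment expansion is an instance of expansion inside the elliptical disk $D_{uv}$, and the hypotheses of that lemma hold because $P$ is fork-free and, having more than two distinct vertices, has a node adjacent to $u$ or $v$ other than the other endpoint, so $\bdry D_{uv}$ meets $P$ transversely, in at least one point, and through no vertex of $P$. The rest of the proof is the converse: assuming $\tilde P$ is weakly simple, I must produce, for every $\e>0$, a simple closed curve $Q$ with $\Frechet(P,Q)<\e$.

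First I would record the local picture inside $D_{uv}$. Because $uv$ is a base of both of its endpoints, every occurrence of $u$ in $P$ is immediately preceded or followed by $v$, and symmetrically for $v$; combined with spur-reducedness and fork-freeness, this forces every maximal subpath of $P$ inside $D_{uv}$ to take one of three shapes: a through-path $[[ua],u,v,[vz]]$ with $a\neq v$, $z\neq u$; a path $[[ua],u,v,u,[ub]]$ that doubles back at $v$; or a path $[[vz],v,u,v,[vz']]$ that doubles back at $u$, where $[ua]$ denotes the unique point where segment $ua$ meets $\bdry D_{uv}$ and similarly for $[vz]$. In particular each such subpath runs along $uv$ and enters and leaves $D_{uv}$ only through the ports $[ua]$ on the part of $\bdry D_{uv}$ near $u$ or $[vz]$ on the part near $v$. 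Also, since every corner of $P$ at $u$ (and at $v$) has $uv$ as one of its two sides, no two corners at a common node cross transversely; together with fork-freeness and the fact that the edges of $P$ do not cross, this shows $P$ has no simple crossing inside $D_{uv}$.

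Next I would relate $P$ and $\tilde P$ inside $D_{uv}$ through a matching on $\bdry D_{uv}$. The expansion replaces each of the subpaths above by the straight chord joining its two ports; when several subpaths share a port, the resulting chords leave a common node of $\tilde G$ in a definite cyclic order read off from the rotation system of $u$ or $v$ in $G$, so splitting each shared port into a tight cluster in that order turns the chord-ends into a genuine perfect matching $\mathcal M$ of points on $\bdry D_{uv}$. Since $\tilde P$ is weakly simple it is strip-weakly simple (Lemma~\ref{L:eq-ws-ss}), so some simple closed curve $\tilde Q$ respects $\tilde P$'s strip system; inside $D_{uv}$ the curve $\tilde Q$ is then a system of pairwise disjoint arcs realizing $\mathcal M$, so $\mathcal M$ must be non-crossing. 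The crux — the step I expect to be the main obstacle — is to show that the original subpaths of $P$ inside $D_{uv}$ can likewise be perturbed into pairwise disjoint arcs realizing the \emph{same} matching $\mathcal M$. Here I would picture $D_{uv}$ as a thin tube along segment $uv$ with tiny caps around $u$ and $v$: through-paths become strands crossing the tube, doubling-back-at-$v$ paths become ``U''s that enter and leave through the $u$-cap, and doubling-back-at-$u$ paths become ``U''s through the $v$-cap. Two such ``U''s can be routed disjointly, by shortening the inner one, precisely when their pairs of ports do not interleave on their cap; a through-strand can dodge a ``U'' precisely when its port is not nested between the two ports of that ``U'' — and each of these is exactly the condition for the two corresponding chords not to cross. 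Hence the chords of $\tilde P$ in $D_{uv}$ are disjointly realizable if and only if the subpaths of $P$ there are, and when both are, they realize $\mathcal M$. This is where safety of $uv$ is used essentially: without it a subpath could reach $u$ without traversing $uv$, and the tube-and-matching picture would break down.

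Finally I would assemble $Q$. Take a disjoint realization $\{\hat S\}$ of the subpaths with each $\hat S$ within $\e$ of its subpath $S$ and with endpoints slid along $\bdry D_{uv}$ to coincide with the crossing points of $\tilde Q$. Since $\{\hat S\}$ and $\tilde Q\cap D_{uv}$ are disjoint-arc systems in a disk realizing the same matching, they are ambient isotopic rel $\bdry D_{uv}$, so there is a homeomorphism $\Psi$ of $D_{uv}$, fixing $\bdry D_{uv}$ pointwise, carrying $\tilde Q\cap D_{uv}$ to $\{\hat S\}$; extend $\Psi$ by the identity. Then $Q:=\Psi(\tilde Q)$ is a simple closed curve; outside $D_{uv}$ it equals $\tilde Q$, hence lies within $\e$ of $P$ because $P$ and $\tilde P$ agree there, and inside $D_{uv}$ it equals $\{\hat S\}$, again within $\e$ of $P$. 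Thus $\Frechet(P,Q)<\e$, and since $\e$ is arbitrary $P$ is weakly simple. The single degenerate case — a doubling-back path whose two ports coincide, i.e.\ $a=b$ — I would dispose of by performing one more spur reduction first, or equivalently by splitting that port as above.
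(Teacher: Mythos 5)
Your proof takes a genuinely different route from the paper's. The paper invokes Theorem~\ref{Th:rigid} to obtain a simple polygon $\tilde{Q}$ near $\tilde{P}$ and then performs an \emph{iterative surgery}: it defines a nested family of shrinking ellipses $D_0\supset D_1\supset\dots\supset D_k$ (one per spur of $P$ on $uv$), and at each step locates a ``left'' or ``right'' chord of $\tilde{Q}_i$ bounding the component of $D_i\setminus\tilde{Q}_i$ that reaches the far end, then reroutes that single chord through that component; this yields an explicit sequence $\tilde{Q}=Q_0,\dots,Q_k$ with an explicit final bound $\Frechet(P,Q_k)<\delta/\sin\theta+\e$. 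You instead pass through the strip-system characterization (Lemma~\ref{L:eq-ws-ss}), read off a non-crossing matching $\mathcal M$ on $\bdry D_{uv}$ from $\tilde{Q}\cap D_{uv}$, assert a disjoint arc system $\{\hat S\}$ near the original subpaths of $P$ realizing the same $\mathcal M$, and finish with an ambient isotopy rel $\bdry D_{uv}$. This is a clean, more topological argument and it avoids the iterated disks entirely; what the paper's construction buys is that the object you want ($Q$) is built one concrete surgery at a time, so its existence never has to be asserted.

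That said, the step you yourself flag as ``the main obstacle'' is left as an assertion: you argue \emph{pairwise} that a through-strand and a ``U'', or two ``U''s, can dodge one another exactly when their chords do not cross, but the lemma requires the \emph{entire} family of subpaths of $P$ inside $D_{uv}$ to admit a simultaneous disjoint realization $\{\hat S\}$ that (i) realizes exactly $\mathcal M$, including the cyclic order of parallel copies at each shared port, and (ii) keeps each $\hat S$ within $\e$ of its original $S$, which for a ``U'' means it must still travel all the way to within $\e$ of the far endpoint and back. That simultaneous statement is precisely what the paper's iteration establishes, and it needs a proof, not just a pairwise heuristic. There is also a small error in the setup: the cyclic order in which several parallel copies emanate from a shared port $[ua]$ is \emph{not} ``read off from the rotation system of $u$ in $G$'' — the rotation system at $u$ only orders distinct segments, not parallel edges of $\tilde{P}$ on one segment — it is determined by $\tilde{Q}$ itself (equivalently by the global inflation), so $\mathcal M$ should be defined directly from the simple curve $\tilde{Q}$ rather than from $G$. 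Once $\mathcal M$ is defined that way and the existence of $\{\hat S\}$ is actually proved (e.g., by an inflation argument along $uv$, or by adapting the paper's one-chord-at-a-time surgery), the remaining ambient-isotopy and Fréchet-distance bookkeeping in your last paragraph is correct, as is the forward direction via Lemma~\ref{L:expand}.
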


\begin{proof}
Let $uv$ be the safe segment around which we are expanding, and let $\theta$ be the smallest positive angle between $uv$ and any other segment incident to $u$ or $v$.  By the previous lemma, we can assume without loss of generality that $P$ does not contain the subpath $[u,v,u,v]$.

Suppose $\tilde{P}$ is weakly simple.  
Fix a real number $\e \ll \delta/2n$, and let $\tilde{Q}$ be a simple polygon such that $\Vdist(\tilde{P},\tilde{Q})<\e$, guaranteed by Theorem \ref{Th:rigid}.  
If $P$ has no spurs at $uv$, the proof of Lemma \ref{L:bar} implies that $\Frechet(P, \tilde{Q}) < \delta(1+1/\sin\theta)$ and we are done.  However, if $P$ has a spur at $uv$, the Fréchet distance between $P$ and $\tilde{Q}$ is approximately the length of $uv$.  In this case, we iteratively modify $\tilde{Q}$ into a new simple polygon $Q$ such that $\Frechet(P,Q) < \delta/\sin\theta + \e < \delta(1+1/\sin\theta)$.

Suppose $P$ has $k$ distinct spurs at $uv$.  For each integer $i$ from $0$ to $k$, let $D_i$ be the elliptical disk concentric with $D_{uv}$ but whose axes are shorter by $2(i+1)\e$.  For example, the major axis of $D_0$ has length $\abs{uv}+2\delta-2\e$ and the minor axis has length $2\delta-2\e$.  Every vertex of $\tilde{Q}$ lies outside each disk $D_i$, and if $\e$ is sufficiently small, every edge of $\tilde{Q}$ that intersects $D$ also intersects each disk $D_i$.

We iteratively define a sequence of polygons $\tilde{Q} = Q_0, Q_1, \dots, Q_k$ as follows.  Fix an index $i\ge 0$.  Let~$U_i$ and $V_i$ denote the subsets of $\bdry D_i$ within distance $\delta/\sin\theta$ of $u$ and $v$, respectively.  If $\delta$ is sufficiently small, the elliptical arcs~$U_i$ and $V_i$ are disjoint.  Every segment in $\tilde{Q}_i \cap D_i$ has endpoints in $U_i \cup V_i$.  We call a segment in $\tilde{Q}_i \cap D_i$ a \EMPH{left segment} if both endpoints are in~$U_i$, or a \EMPH{right segment} if both endpoints are in $V_i$.  Every left segment corresponds to a subpath of $P$ of the form $[a,u,v,u,b]$, and every right segment corresponds to a subpath of $P$ of the form $[y,v,u,v,z]$.

Suppose $\tilde{Q}_i \cap D_i$ includes at least one left segment; right segments are handled symmetrically.  Then some component $R_i$ of $D_i\setminus \tilde{Q}_i$ has both a left segment and an arc of $V_i$ on its boundary.  Suppose the left segment corresponds to the subpath $[a,u,v,u,b]$ of $P$.  Then $\tilde{P}$ contains the subpath $[a, [ua], [ub], b]$, where $[ua] = ua\cap\bdry D_{uv}$ and $[ub] = ub\cap\bdry D_{uv}$, polygon $\tilde{Q}$ contains an edge $a'b'$ such that $d(a', [ua]) < \e$ and $d(b', [ub]) < \e$, and finally $a_ib_i = a'b'\cap D_i$ is the left segment in question.  Fix a point $z_i \in R\cap V_i$, and let $\tilde{Q}_{i+1}$ be the simple polygon obtained by replacing the line segment $a_ib_i$ with the path $[a_i, z_i, b_i]$.

Arguments in the proof of Lemma \ref{L:bar} imply that $d(u, a') < \delta/\sin\theta+\e$ and $d(u, b') < \delta/\sin\theta+\e$, which in turn imply that $d(u, a_i) < \delta/\sin\theta+\e$ and $d(u, b_i) < \delta/\sin\theta+\e$.  It now follows that
\[
	\Frechet\big([[ua],u,v,u,[ub]], ~[a’, a_i, z_i, a_i, b’]\big) < \delta/\sin\theta+\e.
\]

The final polygon $Q_k$ has no left or right segments in $D_k$; thus, every spur in $P$ is within Fréchet distance $\delta/\sin\theta+\e$ of the corresponding path $Q_k$.  
We conclude that $\Frechet(P, Q_k) < \delta/\sin\theta+\e$, and Lemma~\ref{L:expand} completes the proof.
\end{proof}

\section{Generalizations and Open Problems}
\label{S:outro}

\subsection{Polygonal Chains}
\label{S:polygonal-chains}

A straightforward generalization of our algorithm can determine whether a given \emph{polygonal chain} is weakly simple in $O(n^2\log n)$ time, by checking a polygon that traverses the chain twice.
\begin{lemma}
The polygonal chain $P = [p_0, p_1, \ldots, p_{n-1}, p_n]$ is weakly simple if and only if the polygon $\hat{P} = (p_0, p_1, \dots, p_{n-1}, p_n, p_{n-1}, \dots, p_1)$ is weakly simple. 
\end{lemma}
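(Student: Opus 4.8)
The plan is to prove both implications directly from the definition of weak simplicity, invoking the Jordan--Schönflies theorem (tameness of simple arcs in the plane) exactly once, for the harder direction.

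\textbf{Backward direction ($\hat P$ weakly simple $\Rightarrow$ $P$ weakly simple).}
This is the easy half. Parametrize $\hat P\colon S^1\to\R^2$ so that some proper closed sub-arc $I_+\subset S^1$ carries the forward traversal $p_0\to p_1\to\cdots\to p_n$; it is proper because the backward traversal occupies the complementary arc. Fix $\e>0$. Since $\hat P$ is weakly simple, there is a simple closed curve $\tilde Q$ and an orientation-preserving homeomorphism $\rho\colon S^1\to S^1$ with $d(\hat P(\rho(t)),\tilde Q(t))<\e$ for all $t$. Let $J:=\rho^{-1}(I_+)$, again a proper closed sub-arc of $S^1$. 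Because $\tilde Q$ is injective on all of $S^1$, its restriction $\tilde Q|_J$ is a \emph{simple} path; and reparametrizing by $\rho|_J\colon J\to I_+$ shows $\Frechet(P,\tilde Q|_J)<\e$, using that $\hat P|_{I_+}$ is exactly $P$ up to reparametrization. Since $\e$ was arbitrary, $P$ is weakly simple.

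\textbf{Forward direction ($P$ weakly simple $\Rightarrow$ $\hat P$ weakly simple).}
Fix $\e>0$, and let $\tilde P\colon[0,1]\to\R^2$ be a simple path with $\Frechet(P,\tilde P)<\e/3$. The image of $\tilde P$ is a simple arc, so by the Jordan--Schönflies theorem there is a homeomorphism $\Phi\colon[0,1]\times[-1,1]\to N\subset\R^2$ onto a closed neighborhood $N$ of $\tilde P$, with $\Phi(\cdot,0)$ a reparametrization of $\tilde P$. By uniform continuity of $\Phi$ on its compact domain, pick $\eta>0$ so small that $N_\eta:=\Phi([0,1]\times[-\eta,\eta])$ lies within Hausdorff distance $\e/3$ of $\tilde P$. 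Then $\partial N_\eta=\Phi(\partial([0,1]\times[-\eta,\eta]))$ is a \emph{simple closed curve}. Traverse $\partial N_\eta$ as the concatenation of the bottom edge $\Phi(\cdot,-\eta)$, the right edge $\Phi(1,\cdot)$, the reversed top edge $\Phi(\cdot,\eta)$, and the reversed left edge $\Phi(0,\cdot)$, and match it against $\hat P$: the bottom edge with the forward copy of $P$, the right edge with the turn-around at $p_n$, the reversed top edge with the backward copy of $P$, and the reversed left edge with the turn-around at $p_0$. The bottom and top edges are within Fréchet distance $\e/3+\e/3<\e$ of $P$ (resp.\ its reversal), while the right and left edges stay within $2\e/3$ of $p_n$ and $p_0$ respectively, which suffices because $\hat P$ is arbitrarily close to $p_n$ (resp.\ $p_0$) at the corresponding moments. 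Hence $\Frechet(\hat P,\partial N_\eta)<\e$, and since $\e$ was arbitrary, $\hat P$ is weakly simple.

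\textbf{Main obstacle.} The backward direction is essentially free. The real work is the book-keeping in the forward direction: producing the simple closed curve $\partial N_\eta$ from a thin neighborhood of $\tilde P$, and—more annoyingly—verifying that the four-way matching above indeed gives Fréchet distance below $\e$, in particular handling the two turn-around arcs near $p_0$ and $p_n$ where $\hat P$ momentarily reverses direction. Everything there is routine once one commits to concrete parametrizations, and the constants are absorbed by taking $\tilde P$ close enough to $P$ and $\eta$ small enough; but this is the step that needs care. (An alternative, perhaps cleaner, route for the forward direction is to run the same construction inside the $\e$-strip system of $P$ via Lemma~\ref{L:eq-ws-ss}, thickening a strip-respecting simple \emph{path} into a strip-respecting simple \emph{closed curve} for $\hat P$.)
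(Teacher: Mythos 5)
Your proof is correct and follows the same two high-level moves as the paper: restrict a simple closed curve near $\hat P$ to an arc near $P$ (backward), and thicken a simple arc near $P$ into a simple closed curve near $\hat P$ (forward). The differences are in the tools used for the thickening and the restriction. For the forward direction the paper takes the approximant to be a simple \emph{polygonal chain} $Q$ (invoking Ribó Mor's theorem again to justify this), and then takes the boundary of the metric $\e$-neighborhood of $Q$, arguing that it is a simple closed curve once $\e$ is below the minimum local feature size of $Q$; you instead allow $\tilde P$ to be an arbitrary simple arc and invoke the Jordan--Schönflies theorem to get a collar $\Phi\colon[0,1]\times[-1,1]\to N$ and take the boundary of a thin sub-tube. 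Both yield the same object; the paper's route is concrete and elementary but leans on the rigidification theorem to stay in the polygonal category, while yours is cleaner conceptually but quietly relies on the nontrivial fact that every simple arc in the plane is tame (a genuine theorem, not just compactness). For the backward direction the paper goes through Theorem~\ref{Th:rigid} to get a simple \emph{polygon} $\hat Q$ at small \emph{vertex} distance and takes a vertex-subpath, whereas you work directly with the Fréchet definition and cut the approximating simple closed curve along the arc $J=\rho^{-1}(I_+)$; both are fine, and yours is slightly more self-contained since it does not need Theorem~\ref{Th:rigid}. One small point of rigor you should tighten: to bound $\Frechet$ of the bottom and top edges against $P$, argue via the sup over the common parameter $t$ of $d(\Phi(t,\pm\eta),\Phi(t,0))$ (uniform continuity of $\Phi$), rather than via Hausdorff distance of the tube, since Hausdorff closeness alone does not control Fréchet distance. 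Your flagged ``main obstacle''---the book-keeping that matches the two short ends of the rectangle against the turn-around instants at $p_0$ and $p_n$---is real but routine, as you say: the reparametrization of $\hat P$ simply allocates short intervals to the neighborhoods of those instants, and the end edges of $\partial N_\eta$ stay within $2\e/3$ of $p_0$, $p_n$ respectively.
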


\begin{proof}
Suppose $P$ is weakly simple.  Then for any $\delta>0$, there is a simple curve $Q$ with $\Frechet(P, Q) < \delta$. In fact, we can assume $Q$ is a simple polygonal chain, as Ribó Mor's results \cite{r-rcpps-06} extend to polygonal chains.  Fix a positive real number $\e\ll\delta$ such that the intersection of $Q$ with any closed disk of radius~$\e$ centered on a point of $Q$ is connected; it suffices for $\e$ be less than the \emph{minimum local feature size} of~$Q$~\cite{r-nsaq2-95}.  Let $\hat{Q}$ denote the boundary of the $\e$-neighborhood of $Q$.  Then $\hat{Q}$ is a simple closed curve where $\Frechet(\hat{P}, \hat{Q}) < \delta + 2\e$, which implies that $\hat{P}$ is weakly simple.

On the other hand, suppose the polygon $\hat{P}$ is a weakly simple. 
Then for any $\delta >0$ there is a simple polygon $\hat{Q}$ such that $\Vdist(\hat{P}, \hat{Q}) < \delta$.  
Let $Q$ be either subpath of $\hat{Q}$ between the vertex corresponding to $p_0$ to the vertex corresponding to $p_n$.  
Then $Q$ is a simple polygonal chain with $\Vdist(P, Q) < \delta$, which implies that $P$ is weakly simple.
\end{proof}

\subsection{Graph Drawings}

There is a natural generalization of weak simplicity to arbitrary graph drawings.  
Any graph can be regarded as a topological space, specifically, a branched 1-manifold.  
A \EMPH{planar drawing} of a graph $H$ is just a continuous map from $H$ to the plane; a drawing is \EMPH{simple} or an \EMPH{embedding} if it is injective.  
The Fréchet distance between two planar drawings $P$ and $Q$ of the same graph $H$ is naturally defined as
\[
	\Frechet(P, Q) = \inf_{\phi\colon H\to H} \max_{x\in H} d(P(\phi(x)),Q(x))
\]
where the infimum is taken over all automorphisms of $H$ (homeomorphisms from $H$ to itself).
We can define a planar drawing $P$ to be \EMPH{weakly simple} (or a \EMPH{weak embedding}) if, for any $\e > 0$, there is a planar embedding $Q$ of $H$ with $\Frechet(P, Q)<\e$.  
This definition is consistent with our existing definitions of weakly simple closed curves in the plane.   
It is a natural open question whether one can decide in polynomial time whether a given straight-line drawing of a planar graph is a weak embedding; Cortese \etal~\cite{cbpp-ecpg-09} observe that it is not sufficient to check whether every cycle in the drawing is weakly simple.

However, a generalization of our algorithm actually solves this problem when the graph $H$ being drawn is a disjoint union of cycles.  
In fact, the algorithm is unchanged except for the termination condition; when the main loop terminates, the image graph is the union of disjoint cycles and single segments, and~$P$ is weakly simple if and only if each component of~$P$ either traverses some component of the image graph exactly once or maps to an isolated segment.  
Moreover, using the doubling trick described above in Section \ref{S:polygonal-chains}, our algorithm can also be applied to disjoint unions of cycles \emph{and paths}.  
Details will appear in the full version of the paper.

Any algorithm to determine whether a graph drawing is a weak embedding must handle the special case where the image of the drawing is a simple path.  
This special case is equivalent to the \emph{strip planarity} problem recently introduced by Angelini \etal~\cite{addf-spt-13} as a variant of clustered planarity \cite{cbpp-ecpg-09, fce-pcg-95} and level planarity \cite{jlm-lptlt-98}.  
The strip planarity problem is open even when the graph $H$ is a tree.

\subsection{Surface Graphs}

Our algorithm can also be generalized to surfaces of higher genus.  A closed curve $P$ in an arbitrary surface $\Sigma$ is weakly simple if for any $\e > 0$ there is a simple (injective) closed curve $Q$ in the same surface, such that $\Frechet(P,Q) < \e$, where Fréchet distance is defined with respect to an arbitrary metric on~$\Sigma$.
\begin{theorem}
Given a closed walk $P$ of length $n$ in an arbitrary surface-embedded graph, we can determine whether $P$ is weakly simple in $O(n\log n)$ time.  
\end{theorem}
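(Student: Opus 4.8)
The plan is to run the algorithm of Sections~\ref{S:segments} and~\ref{S:faster} essentially verbatim, exploiting the fact that it is almost entirely combinatorial. First I would reduce to a small surface: weak simplicity is a purely local property --- any simple closed curve at Fréchet distance $\e$ from $P$ lies in the $\e$-neighborhood of the image of $P$, and in a sufficiently thin neighborhood the induced metric agrees with that of $\Sigma$ --- so a closed walk $P$ is weakly simple in $\Sigma$ if and only if it is weakly simple in a regular neighborhood $N(G)$ of the subgraph $G\subseteq H$ traversed by $P$. This $N(G)$ is a surface with boundary of complexity $O(n)$, and its ribbon structure (the cyclic order of segments at each node, together with edge signs recording local orientations~\cite{mt-gs-01}) can be read off from the embedding of $H$ in $O(n)$ time. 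Note that a closed walk in a graph never has a \emph{fork}, since every vertex of $P$ is a node of $G$; only spurs can occur, so only node expansion and segment expansion are needed.

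Next I would run the main algorithm unchanged: node-expand every node of $G$, then repeatedly expand useful segments, maintaining the image graph in a ribbon-graph data structure. Every expansion is performed combinatorially, exactly as in Section~\ref{SS:expand-planar}, and the only geometric test --- checking that the local graph $G_D$ is a plane graph --- takes place inside an honest topological disk $D_u$ or $D_{uv}$, so it is completely insensitive to the genus of $\Sigma$. The potential argument of Lemma~\ref{L:potential} and the heavy-path analysis of Section~\ref{S:heavy-path} never refer to planarity, so the main loop still halts after $O(n)$ expansions and the total running time is still $O(n\log n)$. For correctness I would prove surface analogues of the expansion-preservation lemmas (Lemmas~\ref{L:node} and~\ref{lemma:edge-expansion}); the point is that these need not invoke Theorem~\ref{Th:rigid} or global Jordan--Schönflies. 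Instead I would route them through the strip-system formulation of Appendix~\ref{S:rigid}: the surface version of Lemma~\ref{L:eq-ws-ss} (weakly simple $\iff$ strip-weakly simple) has the identical proof, because it only repairs the approximating curve inside the disks $\widehat{D}_u$ using the Schönflies theorem \emph{in a disk}; with this in hand the local surgeries in Lemmas~\ref{L:node} and~\ref{lemma:edge-expansion} transfer word for word.

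The one genuine change is the termination test. By the (purely combinatorial) surface analogue of Lemma~\ref{L:end}, when no useful segment remains, either we have already detected a simple crossing --- so $P$ is not weakly simple --- or the image graph has been reduced to a single segment, or the image graph is a simple cycle $C$ that a spur-free $P$ traverses $k := n(uv)$ times for any segment $uv$ of $C$; equivalently, $P$ is a $k$-fold cover of $C$. In the single-segment case $P$ is weakly simple. In the cycle case, weak simplicity of $P$ depends only on the regular neighborhood $N(C)$, which is an annulus when $C$ is two-sided in $\Sigma$ and a Möbius band when $C$ is one-sided; one-sidedness is determined in $O(1)$ time from the parity of the edge signs along $C$. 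Then $P$ is weakly simple if and only if $k=1$, or $k=2$ and $C$ is one-sided: when $k=2$ and $N(C)$ is a Möbius band, $P$ is Fréchet-close to the boundary of a thin Möbius neighborhood of the core of $C$, which is a simple closed curve; in all other cases ($k\ge 3$, or $k=2$ with $C$ two-sided) every closed curve within a sufficiently small Fréchet distance of $P$ is freely homotopic to the $k$-fold cover of $C$ inside $N(C)$, and that free-homotopy class is not represented by a simple closed curve in an annulus (for $k\ge 2$) or a Möbius band (for $k\ge 3$). Reading off the answer takes $O(n)$ time, so the whole algorithm runs in $O(n\log n)$ time.

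The hard part will be the termination characterization --- in particular the dichotomy between ``$k=2$ around a one-sided cycle'', which is weakly simple, and every other $k\ge 2$, which is not. This rests on two standard but non-trivial ingredients: that two closed curves at sufficiently small Fréchet distance in a surface of positive systole are freely homotopic, and the classification of simple closed curves, up to free homotopy, in the annulus and the Möbius band. A secondary difficulty is verifying carefully that the planar correctness proofs really do localize to disks; phrasing them through the strip-system formulation, rather than through rigid weak simplicity and Ribó Mor's theorem, is the cleanest way to make this transfer airtight.
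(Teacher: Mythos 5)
Your plan is essentially the paper's own: observe that the main algorithm is purely combinatorial, replace ellipses by topological disks, and handle termination via the characterization that $\gamma^k$ is weakly simple on a surface if and only if $k=1$, or $k=2$ and $\gamma$ is one-sided. The paper states exactly this dichotomy (citing Reinhart, Zieschang, and Chillingworth for it), delegates the correctness of the topological expansion operations to Cortese \etal, and defers everything else to a ``full version''; you instead fill in those gaps yourself, which is where the two presentations diverge in emphasis. Two of your elaborations are genuinely worth flagging. First, you correctly notice that the published correctness proofs of Lemmas \ref{L:expand}, \ref{L:node}, and \ref{lemma:edge-expansion} go through Theorem \ref{Th:rigid} and hence through Ribó Mor's planar linkage theorem, which does not obviously generalize to surfaces; rerouting through the strip-system equivalence (Lemma \ref{L:eq-ws-ss}), whose Jordan–Schönflies applications really are confined to disks, is a clean way to sidestep this, and it is a point the paper glosses over by citing Cortese \etal. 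Second, your regular-neighborhood reduction and the annulus/Möbius-band analysis of the terminal cycle give an honest proof of the $\gamma^k$ characterization that the paper only cites. One small slip: determining the sign (one-sidedness) of the terminal cycle $C$ from its edge signs takes $O(|C|)$ time, not $O(1)$, but this is absorbed into the $O(n)$ you already allow for the final step, so it does not affect the bound.
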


Our algorithms for detecting weakly simple polygons use the geometry of the plane only in the preprocessing phase, where we apply a sweep-line algorithm to remove forks and to construct the image graph.  Cortese \etal\ already describe topological versions of our expansion operations, which use topological disks instead of ellipses, as well as their proofs of correctness \cite{cbpp-ecpg-09}.  The only minor subtlety is the termination condition when the underlying surface is non-orientable.  For any integer $k>0$, let $\gamma^k$ be the cycle that wraps around some simple cycle $\gamma$ exactly $k$ times.  Then $\gamma^k$ is weakly simple if and only if $k = 1$, or $k=2$ and $\gamma$ is orientation-reversing \cite{r-ajccs-62, z-aekf-65, c-wns2-72}.   Again, details will appear in the full version of the paper. 

\subsection{Faster?}

Finally, perhaps the most immediate open question is how to improve the $O(n^2\log n)$ running time of our algorithm for arbitrary polygons with both spurs and forks.  A direct generalization of bar expansion seems unlikely; Lemma \ref{L:bar} does not generalize to polygons with spurs.  Nevertheless, we conjecture that the quadratic blowup from subdividing edges at forks can be avoided, and that the running time can be improved to $O(n\log n)$.
%

\end{document}